\documentclass[preprint,prb,aps,floatfix,amsmath,amssymb,superscriptaddress]{revtex4}

\usepackage{graphicx}
\usepackage[all]{xy}

\usepackage{amsmath}
\usepackage{amsfonts}
\usepackage{bm}
\usepackage{color}

\usepackage[pdfpagelabels,pdftex,bookmarks,breaklinks]{hyperref}
\hypersetup{colorlinks, linkcolor=black, citecolor=black}

\usepackage{amsthm}
\newtheorem{lemma}{Lemma}
\newtheorem{definition}{Definition}
\newtheorem{theorem}{Theorem}

\newcommand{\be}{\begin{equation}}
\newcommand{\ee}{\end{equation}}

\DeclareMathOperator{\kr}{ker}
\DeclareMathOperator{\im}{im}
\DeclareMathOperator{\rnk}{rank}

\DeclareMathOperator{\spn}{span}

\newcommand{\hfra}{\rho_{enc}}

\newcommand{\calC}{{\cal C}}
\newcommand{\calS}{{\cal S}}

\newcommand{\calV}{{\cal V}}

\newcommand{\calM}{{\cal M}}
\newcommand{\calQ}{{\cal Q}}
\newcommand{\FF}{\mathbb{F}}

\begin{document}

\title{Homological Product Codes}

\author{S. Bravyi}
\affiliation{IBM T. J. Watson Research Center, Yorktown Heights, NY 10598}

\author{M. B. Hastings}
\affiliation{Microsoft Research, Station Q, CNSI Building, University of California, Santa Barbara, CA, 93106}
\affiliation{Quantum Architectures and Computation Group, Microsoft Research, Redmond, WA 98052}
\begin{abstract}
Quantum codes with low-weight stabilizers known as LDPC codes have been
actively studied recently due to their simple syndrome readout circuits and 
potential applications in fault-tolerant quantum computing. 
However,  all families of quantum LDPC codes known to this date
suffer from a poor distance scaling limited by the square-root of the code length. 
 This is in a sharp contrast with the classical case where good families
of LDPC codes are known that combine constant encoding rate and linear distance. 
Here we propose the first family of good quantum codes with low-weight stabilizers. 
The new codes have  a constant encoding rate, linear distance, and stabilizers acting on at most
$O(\sqrt{n})$ qubits, where $n$ is the code length. 
For comparison, all previously known families of good quantum codes have stabilizers of linear weight. 
Our proof combines  two techniques: randomized constructions of good quantum  codes and the homological product operation  from algebraic topology.  
We conjecture that similar methods can produce good stabilizer codes with stabilizer weight $O(n^\alpha)$ for any $\alpha>0$.
Finally, we apply the homological product to construct new small codes with low-weight stabilizers.
\end{abstract}
\maketitle

\tableofcontents

\newpage

\section{Introduction}
\label{sec:intro}

Classical low density parity check codes are characterized by the property that their parity checks act only on $O(1)$ bits.
Such codes have found numerous applications due to their efficient decoding algorithms based on the belief propagation and high transmission rates approaching the channel capacity limit\cite{Gallager62,MacKay99}. In addition to showing good practical performance, some families of LDPC codes 
are good in the coding theory sense
featuring a linear minimum distance and, at the same time, constant encoding rate. Some LDPC codes are known to 
achieve the Gilbert-Varshamov bound on the code parameters\cite{MacKay99}.

The recently emerged field of quantum error correction attempts to apply coding theory principles to the challenging tasks of fault-tolerant quantum
computing and reliable transmission of quantum states through a noisy communication channel. 
A natural question that we investigate here is  whether good LDPC codes have a quantum counterpart. 
To pose this question formally and motivate it let us highlight main distinctions between classical and quantum error correction. 
Most importantly,  a quantum code must protect encoded states from both bit-flip and phase-flip errors.
Accordingly, the  simplest construction of  quantum codes due to Calderbank, Shor, and Steane~\cite{CSS} (CSS) uses a pair of classical linear codes
$C^Z$ and $C^X$ that are responsible for detecting bit-flip and  phase-flip errors respectively. 
Each basis vector $f$ of $C^Z$ or $C^X$ gives rise to a stabilizer operator which is a product of Pauli operators
$Z$ or $X$ respectively over all qubits in the support of $f$. Valid codewords are quantum states invariant under the action any stabilizer,
whereas corrupted codewords may violate one or several stabilizers. 
The requirement that codewords must satisfy both types of stabilizers simultaneusly
translates to a peculiar condition that the two classical codes must be pairwise  orthogonal, $C^X\subseteq (C^Z)^\perp$.

The second distinction between classical and quantum error correction
applies to the recovery step. Namely,  
one must be able to identify violated stabilizers without 
measuring a state of individual code qubits (which could disturb the encoded state). 
This is usually achieved by measuring only ancillary qubits that
collect the syndrome information. To determine the syndrome of a stabilizer acting on some subset of code qubits $S$, 
the corresponding ancilla has to be coupled to each qubit of $S$ by applying a CNOT gate. Since in practice all gates have
a nonzero error probability and errors introduced by each gate accumulate, fault-tolerance considerations
strongly favor codes in which all stabilizers act only on a few qubits, ideally $O(1)$ qubits~\cite{Gottesman13overhead}. 
Quantum codes used in the
state-of-the-art fault-tolerant schemes such as the surface code family~\cite{Raussendorf07,Fowler09}
are of this type. 

\subsection{Quantum LDPC Codes}

A quantum CSS code encoding $k$ qubits into $n$ qubits
with the minimum distances $d^X,d^Z$ 
is  a pair of classical linear codes $C^X,C^Z\subseteq \{0,1\}^n$ with the following properties:
\begin{enumerate}
\item $C^X\subseteq (C^Z)^\perp$ or, equivalently, $C^Z\subseteq (C^X)^\perp$.
\item $k=n-\dim{(C^X)}-\dim{(C^Z)}$.
\item $d^Z$ is the minimum weight of vectors in  $(C^X)^\perp \backslash C^Z$.
\item $d^X$ is the minimum weight of vectors in $(C^Z)^\perp\backslash C^X$.
\end{enumerate}
Here and below by a weight of a vector or a matrix we mean the number of non-zero entries. 
The distances $d^X$ and $d^Z$ determine the minimum number of single-qubit $X$-type and $Z$-type errors respectively
that can corrupt a codeword without being detected.
We shall use a notation $[[n,k,d]]$ for a CSS code defined above, where
$d=\min{\{d^X,d^Z\}}$ is the worst-case minimum distance.  Let us say that a family of codes is good iff it has a constant encoding rate, $k/n=\Omega(1)$, and
a linear distance, $d=\Omega(n)$.  
A code is LDPC if its stabilizers act only on a few qubits and each qubit is acted upon only by a few stabilizers.
To define this formally we have to assume that the code is specified by a pair of
parity check matrices   $A^Z,A^X$ such that 
$C^Z$ and $C^X$ are the linear spaces spanned by rows of $A^Z$ and $A^X$ respectively.
A CSS code has {\em stabilizer weight} $w$ iff 
\begin{enumerate}
\item[5.] Any row and any column of the parity check matrices $A^Z,A^X$ has weight at most $w$.
\end{enumerate}
A family of codes is called LDPC iff it has constant stabilizer
weight~\footnote{Let us remark that the construction of CSS codes
described in this paper has the nice property that the maximum row and column weights of the parity check 
matrices are always  the same. In general, these are two independent parameters.}  $w=O(1)$.
We will use the notation $[[n,k,d,w]]$ for a CSS code defined above. 
In spite of significant efforts, constructing  good  quantum LDPC codes or merely proving that such codes exist remains an elusive goal.
Here we make a step towards this goal by showing how to combine two previously known techniques:
randomized constructions of good codes and homological constructions of LDPC codes. 

It has been realized early on by Kitaev~\cite{Kitaev2003} that homology theory provides a natural framework
to construct and analyze quantum LDPC codes in a systematic way.
 In this framework, described in detail below, code qubits and parity checks are identified
 with cells of properly chosen dimensions in a cell decomposition of some manifold. 
The toric  code introduced by Kitaev~\cite{Kitaev2003} has parameters
$[[2n,2,\sqrt{n},4]]$ and can be described using homologies of a two-dimensional torus. 
  In spite of being one of the first quantum codes discovered, the toric code 
turned out to be optimal in several respects.
In particular, Aharonov and Eldar showed~\cite{Aharonov11} that any quantum code with $w\le 3$ has bounded distance, $d=O(1)$.
Furthermore, it was shown that $d=O(\sqrt{n})$ for any code with geometrically local stabilizers in the 2D geometry~\cite{Bravyi10tradeoffs,Haah2012logical}. 
Subsequent generalizations of the toric code~\cite{Freedman02systolic,Bombin07,Kim07,Zemor09,Tillich2009,Kovalev12,Michnicki12,Delfosse13,FH13} described
 below improved its encoding rate achieving
$k=\Theta(n)$ and slightly improved the distance achieving  $d=\Theta(\sqrt{n\log{n}})$. 
However, the toric code family is not expected to contain good codes. 

The randomized construction of quantum codes pioneered by Calderbank and Shor~\cite{CSS}
defines a suitable random ensemble of pairwise orthogonal classical codes $C^X,C^Z$ and proves that with high probability the resulting CSS code has linear distance.
In fact, such random CSS codes attain the quantum version of the Gilbert-Varshamov bound,
$k/n=1-2H(d/n)$, where $H(x)$ is the Shannon entropy~\cite{CSS}. An alternative construction
of good codes based on random  encoding circuits with small depth was proposed by Brown and Omar~\cite{Brown13}.
One could expect therefore that good quantum LDPC codes, if exist, are likely to be found using random constructions,
as it was the case for classical LDPC codes~\cite{Gallager62}. 

\subsection{Summary of Results}
The present paper contains two technical contributions.
First, we show how to apply the homology theory framework to construct a random 
ensemble of CSS codes with low weight stabilizers. For a code with $n$ qubits our method produces
stabilizers with weight
$O(\sqrt{n})$.  
In addition, any qubit is acted upon by  at most $O(\sqrt{n})$ stabilizers.
Secondly,  we show that a random code
drawn from this ensemble is good with high probability. This leads to the following result.
\begin{theorem}
\label{thm:main}
For all sufficiently large $n$ there exist a quantum CSS code with parameters
$[[n,c_1n, c_2n,c_3\sqrt{n}]]$,  
where $c_i>0$ are constant coefficients independent of $n$.
\end{theorem}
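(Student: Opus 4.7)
The strategy follows the summary of results: combine a randomized construction of small good CSS codes with the homological product operation. The plan is to draw two independent random CSS seed codes of length $N=\Theta(\sqrt{n})$, each with constant rate and linear distance, and to form their homological product. The rate and the $O(\sqrt n)$ stabilizer weight will follow from straightforward algebraic bookkeeping; the technical heart of the argument is showing that the homological product preserves linear distance with high probability over the choice of seeds.

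I would identify every CSS code with a length-three chain complex $C_2\xrightarrow{\partial_2}C_1\xrightarrow{\partial_1}C_0$ satisfying $\partial_1\partial_2=0$, where $C_1$ indexes the qubits, the rows of $\partial_1$ are the $X$-stabilizers, and the columns of $\partial_2$ are the $Z$-stabilizers; the number of logical qubits is $\dim H_1$. A CSS-style Gilbert--Varshamov argument on the ensemble of random pairs $(\partial_1,\partial_2)$ with $\dim C_1=N$, $\dim C_0,\dim C_2=\Theta(N)$, and the orthogonality constraint $\partial_1\partial_2=0$, shows that with high probability the seed is a $[[N,c_1N,c_2N]]$ code. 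These seeds are not themselves LDPC: their row and column weights are typically $\Theta(N)$.

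Next, form the tensor product chain complex $A\otimes B$ of two independent seeds, with differential $\partial_A\otimes I+I\otimes\partial_B$; this is a complex of length five, and its three middle degrees $(A\otimes B)_3\to(A\otimes B)_2\to(A\otimes B)_1$ define the CSS code. A dimension count gives $n=\dim A_2\cdot\dim B_0+\dim A_1\cdot\dim B_1+\dim A_0\cdot\dim B_2=\Theta(N^2)$. By the K\"unneth formula the middle homology contains $H_1(A)\otimes H_1(B)$ as a direct summand, giving $k\ge c_1^2 N^2=\Omega(n)$. Because the tensor differential acts on only one factor at a time, each row and column of the product parity check matrices is a single row or column of one seed matrix tensored with a unit vector, so its weight is bounded by the maximum row/column weight of a seed, namely $O(N)=O(\sqrt n)$.

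The main obstacle is the linear distance bound. Unlike classical tensor product codes, whose distance equals the product of the factor distances, homological products can a priori admit short nontrivial cycles that are not tensor products of short cycles in the factors. I would aim to prove a product-distance lemma: with high probability over the random seeds, every element of $\kr\partial_2^{A\otimes B}\setminus\im\partial_3^{A\otimes B}$ has weight $\Omega(N^2)$. The approach is to decompose a hypothetical low-weight representative along the three summands $A_2\otimes B_0$, $A_1\otimes B_1$, $A_0\otimes B_2$, then use the linear distance and constant rate of each seed to constrain how the pieces can fit together --- ultimately forcing either a short nontrivial cycle in one of the seed complexes, contradicting linear seed distance, or else forcing the representative to be a boundary after all. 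Randomness enters both through the Gilbert--Varshamov guarantee on each seed and through a union bound eliminating accidental low-weight representatives in the product.
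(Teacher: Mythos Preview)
Your setup differs from the paper's: you use the multiple-sector (length-three) chain complex and its five-term tensor product, whereas the paper deliberately works in a \emph{single-sector} theory, where each seed is a single square matrix $\delta$ with $\delta^2=0$ and the product is simply $\partial=\delta_1\otimes I+I\otimes\delta_2$ on $\mathbb{F}_2^M\otimes\mathbb{F}_2^M$. This is not an error on your part, but the single-sector formulation is what makes the distance argument tractable: a cycle in the product is then literally an $M\times M$ matrix, and rank becomes the organizing invariant. (A minor slip: a row of the product boundary is a \emph{sum} of two terms of the form ``seed row $\otimes$ unit vector'', not a single such term, so the bound is $w_1+w_2$ rather than $\max(w_1,w_2)$; your $O(\sqrt n)$ conclusion survives.)

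The real gap is your distance argument. Your plan is to decompose a hypothetical low-weight nontrivial cycle across the summands and ``force a short nontrivial cycle in one of the seed complexes, contradicting linear seed distance.'' That reduction, carried out honestly, only yields $d\ge\max(d_1,d_2)=\Omega(\sqrt n)$, which is exactly the easy lower bound in the paper's Lemma~2. The paper explicitly warns that one \emph{cannot} deduce $d=\Omega(d_1d_2)$ from the seed parameters alone: there are explicit pairs of seeds where the product distance falls strictly below $d_1d_2$ (Section~5), so any argument that uses only the fact that the seeds are $[[N,c_1N,c_2N]]$ is doomed. Moreover, a naive first-moment bound on low-weight nontrivial cycles in the product also fails, because the product is degenerate: there are always trivial cycles of weight $O(\sqrt n)$, and adding these to a single bad nontrivial cycle produces exponentially many bad ones, so the \emph{expected} number of low-weight nontrivial cycles is large even when the probability of having any is small.

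What the paper actually does is quite different and more delicate. Viewing a cycle as an $M\times M$ matrix, any cycle of weight $<cM^2$ must contain an $M'\times M'$ submatrix (the ``reduced matrix'') whose every row and column has weight $\le c'M'$. One then (i) shows, via a cleaning argument, that if the seeds have distance $>M-M'$ then no nontrivial cycle has \emph{vanishing} reduced matrix; (ii) for each rank $R\ge1$, counts the number $\Gamma(R)$ of possible reduced cycles of rank $R$ via a reduced boundary operator and K\"unneth; (iii) proves that, over the random ensemble, each such reduced cycle is uniformly distributed on rank-$R$ matrices, and bounds the probability that a uniform rank-$R$ matrix satisfies the uniform-low-weight condition by roughly $2^{R^2-2MR}$; and (iv) combines (ii) and (iii) in a union bound over $R$ and over the $\binom{M}{M'}^2$ choices of submatrix. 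The rank stratification is essential: it is what converts the failed first-moment count into one that closes. Your proposal does not contain any of these ingredients, and without something playing their role the linear-distance claim cannot be established.
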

In contrast, all previous constructions of good quantum codes have stabilizer weight $\Theta(n)$, including randomized constructions.
While this result falls short of proving the existence of good quantum LDPC codes, we believe that it can be improved in several respects;
see the discussion below. 

The key ingredient in the proof of Theorem~\ref{thm:main} is the homological product operation introduced in Ref.~\onlinecite{FH13}.
The homological product takes as input a pair of quantum LDPC codes and produces  a larger LDPC code encoding more qubits
and having larger distance that each of the input codes.
To make this more quantitative,  the homological product of two CSS codes $[[n_a,k_a,d_a,w_a]]$, $a=1,2$, is a CSS code $[[n,k,d,w]]$, where
$n=O(n_1n_2)$, $k=k_1k_2$, $w=w_1+w_2$, and $d\le d_1d_2$.
It should be emphasized that the homological product is different from code concatenation. Although concatenation
of two codes $[[n_a,k_a,d_a]]$ gives a code with parameters $[[n_1n_2,k_1k_2,d_1d_2]]$ which are similar to the ones of the product code,
concatenation does not preserve the property of having low-weight stabilizers.
The homological product is a natural generalization of 
the hypergraph product construction by Tillich and Z\'emor~\cite{Tillich2009}. The latter takes as input a pair of {\em classical} LDPC  codes
$[n,k,d]$ and produces a quantum LDPC code $[[O(n^2),k^2,d]]$. Unfortunately, the hypergraph product cannot
achieve distance growing faster than $O(\sqrt{n})$. 

We construct the desired family of codes by taking the homological product of two random CSS codes. Since random codes are typically not LDPC,
the property of having stabilizer weight $w=w_1+w_2$ in the product code is not really needed in our case. For this reason 
we opted to work with a simplified version of the homological product which we call a ``single sector theory" to distinguish it from 
a ``multiple sector theory" of Ref.~\onlinecite{FH13}. The product code constructed using the single sector theory has
parameters $n=n_1n_2$, $k=k_1k_2$, $d\le d_1d_2$ and stabilizer weight $w\le n_1+n_2$
 (here for simplicity we assume that the input codes have the same
distance $d_a$ for both  $X$-type and $Z$-type errors; see Eq.~(\ref{Kun4}) for the general case).
While the single sector theory does not map LDPC codes to LDPC codes, it has an advantage of being easier to analyze and
requires fewer qubits for the product code. To prove Theorem~\ref{thm:main} we apply the single-sector homological product  
to a pair of random CSS codes with fixed length $n_1=n_2$ and fixed number of logical qubits $k_1=k_2$ such that
$k_a=cn_a$ for some small constant $c$. Since $n=n_1n_2$, this guarantees that the product code has constant encoding rate, $k=k_1k_2=\Omega(n)$
and stabilizer weight $w\le n_1+n_2=O(\sqrt{n})$. Furthermore, since random codes are good with high probability, we have $d_a=\Omega(n_a)$. 
If we assumed optimistically that the product code has distance $d=d_1d_2$ (with high probability), then $d=\Omega(n)$ implying that the product code is good.
Unfortunately, obtaining a lower bound on $d$ in terms of $d_1$ and $d_2$ appears to be a hard problem.
In general it is not true that $d=d_1d_2$, see Section~\ref{sec:small} for counter-examples.

Instead we use statistical arguments and prove that the fraction of input codes leading to the output distance $d<cn$
is less than one for a sufficiently small constant $c$. 
While conceptually this proof  is similar to proving  goodness of random CSS codes
as in Ref.~\onlinecite{CSS}, there are several distinctions. Most notably, 
homological product codes are {\em degenerate} whereas completely random CSS codes are not. 
Recall that a degenerate quantum code has some undetectable errors of weight less than the code distance. 
Such low-weight undetectable errors, obtained  as products of stabilizers, have trivial action on any codeword
and, in the case of homological product codes,  have weight $O(\sqrt{n})$ which is much smaller than the code distance $d=\Omega(n)$.
The proof of Ref.~\onlinecite{CSS} is not applicable to degenerate
codes because it attempts to prove that {\em all} undetectable errors have high weight without 
differentiating between stabilizers and logical operators. 

A natural question is whether the stabilizer weight $w=O(\sqrt{n})$ in Theorem~\ref{thm:main} can be improved 
by considering $m$-fold products. For the single-sector theory, homological
product of $m$ input codes $[[n_a,k_a,d_a]]$ has parameters  $n=\prod_{a=1}^m n_a$, $k=\prod_{a=1}^m k_a$, and 
stabilizer weight $w\le \sum_{a=1}^m n_a$. Suppose all input codes have the same length $n_a=n^{1/m}$ 
and the same number of logical qubits $k_a=cn_a$ for some constant $c$. Then the product code has encoding rate $k/n=c^m$
and stabilizer weight $w\le mn^{1/m}$. Although the distance of the product code is very difficult to compute,
we hope that the statistical arguments developed in this paper can be generalized to the $m$-fold product for $m=O(1)$.
Proving that the product code has distance  $d=\Omega(n)$ would establish existence of good quantum codes
with stabilizer weight $w\le n^\epsilon$ for any constant $\epsilon>0$. Furthermore, in Section~\ref{sec:openproblems},
we propose a proof strategy which, if successful, could reduce the stabilizer weight from $n^\epsilon$ to $O(1)$
at the cost of slightly increasing the code length. 

Since  first quantum devices are likely to involve only a few qubits, a natural question is 
how well the homological product performs for small input codes. In Section~\ref{sec:small} we consider the smallest CSS code
correcting any single-qubit error which is  the Steane $[[7,1,3]]$ code. We show that the product of two Steane codes
gives $[[49,1,9]]$ code with stabilizer weight $w=8$. For comparison, concatenating the Steane code with itself gives
$[[49,1,9]]$ code with stabilizer weight $w=12$.

\subsection{Previous Work}

The observation that the theory of CSS codes has a natural interpretation in terms of homology, in particular $\mathbb{Z}_2$ homology,
goes back to the pioneering works by Kitaev~\cite{Kitaev2003}, Freedman and Meyer~\cite{Freedman01projective}, and
Bombin~\cite{Bombin07}. In this subsection we review some constructions of quantum LDPC codes focusing on those obtained
by homological tools.  We leave aside alternative constructions of LDPC codes based on algebraic and 
graph-theoretic methods~\cite{postol2001proposed,Mackay2004sparse,Camara2005constructions,Aly2008}.

Notable codes include hyperpolic surface codes and color codes~\cite{Zemor09,Delfosse13}  which are
generalizations of the toric code defined on a surface of constant negative curvature and large injectivity 
radius. These codes achieve a constant encoding rate and a slowly growing distance.
The toric code has been generalized to higher-dimensional manifolds by Freedman et al~\cite{Freedman02systolic}.
Using a rather complicated 3D manifold the authors of Ref.~\onlinecite{Freedman02systolic} obtained the first 
(and currently the only) example of a quantum LDPC code with the distance growing faster than $\sqrt{n}$.
This code however has only $O(1)$ logical qubits. In a recent breakthrough work Tillich and Z\'emor~\cite{Tillich2009} proposed
a method of constructing quantum LDPC code from a pair of classical LDPC codes. The hypergraph product codes
of Ref.~\onlinecite{Tillich2009}
were  shown to admit a natural description as a homological product of chain complexes~\cite{FH13}. 
An improved version of the hypergraph product codes was proposed by Kovalev and Pryadko~\cite{Kovalev2012}.
There are also interesting examples of LDPC codes, such as Haah's cubic code~\cite{Haah11}, 
with a large gap between the best known lower and upper bounds on the distance which leaves a possibility of faster
than $\sqrt{n}$ distance scaling.  
We summarize parameters of the known quantum LDPC codes and the new product codes in the table below. 

 \begin{center}
\begin{tabular}{|@{\hskip 5mm}c@{\hskip 5mm}|@{\hskip 5mm}c@{\hskip 5mm}|@{\hskip 5mm}c@{\hskip 5mm}|@{\hskip 5mm}c@{\hskip 5mm}|}
\hline
& $k$ & $d$ & $w$ \\
\hline 
Surface codes & $O(1)$ & $O(\sqrt{n})$ & $4$ \\
\hline
Hyperbolic surface codes & $\Omega(n)$ &  $\Omega(\log{n})$ & $O(1)$ \\
\hline
Generalized 3D toric codes & $O(1)$ & $\Omega(\sqrt{n\log{n}})$ & $O(1)$ \\
\hline
Hypergraph product codes & $\Omega(n)$ & $\Omega(\sqrt{n})$ & $O(1)$ \\
 \hline
Homological product codes (new) & $\Omega(n)$ & $\Omega(n)$ & $O(\sqrt{n})$ \\
\hline
 \end{tabular}
 \end{center} 

We emphasize that our construction produces stabilizer codes, rather than subsystem codes\cite{subsys1,subsys2}.  
The latter can be viewed as regular stabilizer codes in which some subset of logical qubits, known as ``gauge qubits", is not used to encode information. 
 Of particular interest are subsystem LDPC codes~\cite{Bacon06operator,Bombin10subs,Bravyi11subs,Bravyi13subs,Brown12,bacon} in
 which the ``gauge group" generated by stabilizers and logical operators on the gauge qubits has generators of weight $O(1)$.
The recovery step  for a subsystem LDPC code requires only measurements
on subsets of $O(1)$ qubits even if stabilizer generators have a very large weight. 
Indeed, since any stabilizer $S$ belongs to the gauge group, it can be represented as a product of
low-weight gauge group generators, $S=G_1\cdots G_m$. Hence  the syndrome of $S$ can
be determined by measuring eigenvalues of individual generators $G_i$
and classically computing the product of the observed outcomes (here we assume for simplicity
that all generators that appear in the decomposition of $S$ pairwise commute).
However, the above is true only in the idealized settings. 
Since in practice operations performed at the recovery step are noisy, 
 fault-tolerance considerations strongly favor subsystem codes
in which both gauge generators and stabilizer generators have low weight. Indeed, if a stabilizer $S$ as above has too large weight,
the syndrome of $S$ cannot be reliably deduced from {\em noisy} measurements of the gauge generators $G_i$ since 
the measurement errors tend to accumulate.

\subsection{Discussion and Outline}

The homological product can be intuitively understood by considering generalized toric codes as an example.
These codes can be defined on any $D$-dimensional manifold $M$ by applying the following three steps.
 First, one chooses a discretization (for example, a triangulation) of the manifold.  Second, one takes this discretization and constructs a
 chain complex --- a set of vector spaces and certain linear operators on these spaces as reviewed in the next section.  Third, one
 converts the chain complex into a CSS code, 
  as reviewed also in the next section.  Given two manifolds, $M_1,M_2$, a very natural operation is to construct the product manifold $M_1 \times M_2$.  A discretization of the product manifold can be obtained from those of $M_1$ and $M_2$. This gives rise to a new chain complex for $M_1 \times M_2$ and hence a new code.  Crucially for our purposes, the chain complex that corresponds to $M_1 \times M_2$ can be constructed directly from the chain complexes corresponding to $M_1$ and to $M_2$.  This operation
 of constructing a new chain complex from two other chain complexes, is called the homological product.  Since for our purposes a chain complex is equivalent to a CSS quantum code, this allows us to construct a new code from two other codes, in a fashion completely distinct from concatenating the codes.  Rather than applying this homological product
to codes obtained from manifolds with some nice properties, we instead apply it directly to codes obtained from a randomized construction.  

The rest of the paper is organized as follows. 
Section~\ref{sec:review} reviews the construction of codes from homology. Section \ref{sec:rand} constructs a random 
ensemble of good CSS codes. The proof of Theorem~\ref{thm:main} is contained in 
 Section \ref{sec:distance} which gives lower bound on the distance for the homological product of two random codes.
Section \ref{sec:small} presents numerical results on small codes.
Finally, section \ref{sec:openproblems} discusses several open problems.  Appendix~A proves some counting results used in the main text, while Appendix~B  extends the homological construction to $GF(4)$ codes.

\section{Quantum Codes, Homology, and Product Complexes}
\label{sec:review}

In this section we introduce a homological description of CSS codes. 
We first review some standard terminology which may be less familiar to a coding theory audience
and then define  a homological product of two CSS codes which plays the key role in this paper. 
We note that our construction of CSS codes from chain complexes is slightly different from the one
previously described in the literature~\cite{FH13}. Throughout this paper we shall use  notations
$\kr{A}$ and $\im{A}$ for the kernel and the image of a linear map $A$. 

\subsection{Homological Description of  CSS Codes}
\label{subs:CSS}

The theory of CSS codes has a natural interpretation in terms of homology, in particular $\mathbb{Z}_2$ homology.  
Recall that the main object of a homology theory is a {\em chain complex}. It is defined by  
a sequence of spaces, often written ${\cal C}_i$, for certain integers $i$, and by certain linear operators from one space to another. 
In the case of  $\mathbb{Z}_2$ homology, the spaces $\calC_i$ 
 are vector spaces over the binary field $\mathbb{F}_2$ (more generally, they could be vector spaces over other fields or more generally modules).  The linear operators are called boundary operators, and often one writes $\partial_i$ to denote an operator from ${\cal C}_i$ to ${\cal C}_{i-1}$.  The defining requirement of a boundary operator is that
\be
\label{delta^2}
\partial_{i-1} \partial_i = 0.
\ee
This allows us to define a CSS code from a chain complex with three spaces ${\cal C}_2,{\cal C}_1,{\cal C}_0$.  Assign a basis to each of these three spaces. 
 Let there be one qubit per basis vector in ${\cal C}_1$.  Define parity check spaces $C^Z,C^X\subseteq \calC_1$ as
\[
C^Z=\im{\partial_2} \quad \mbox{and} \quad C^X=\im{\partial_1^T}.
\]
Here $\partial_1^T\, : \, \calC_0\to \calC_1$ is obtained by transposing the matrix of $\partial_1$ in the chosen basis. 
To check that $C^X\subseteq (C^Z)^\perp$ choose any vectors $z=\partial_2 f \in C^Z$ and
$x=\partial_1^T g\in C^X$. Here $f\in \calC_2$ and $g\in \calC_0$ are arbitrary vectors.
Then the inner product between $x$ and $z$ is $(x,z)=(g,\partial_1\partial_2 f)=0$ due to Eq.~(\ref{delta^2}).
Thus $C^X,C^Z$ indeed define a CSS code with $n=\dim{\calC_1}$ code qubits. 
This construction can be readily generalized to a construction of CSS codes for qudits rather than qubits, using $\mathbb{Z}_d$ homology instead of $\mathbb{Z}_2$ homology.

In this paper, however, we use a slightly simplified construction which we call a {\em single sector} theory
to distinguish it from the
``multiple sector" theory outlined above. We will see that the single sector theory requires less qubits to build a product of two codes
and is easier to analyze. 
In a single sector theory, a chain complex consists of a single binary linear space ${\cal C}$ and a linear operator $\partial$ mapping ${\cal C}$ to itself.  This linear operator $\partial$ is again called a boundary operator and is required to satisfy the condition 
\be
\label{SST1}
\partial^2=0.
\ee
We will choose $\calC$ as the $n$-dimensional binary space $\mathbb{F}_2^n$ equipped
with the standard basis such that all basis vectors have weight one. Then the transposed matrix
$\partial^T$ is well-defined and $(\partial^T)^2=0$. We define a CSS code 
by choosing the parity check matrices as $A^X=\partial$ and $A^Z=\partial^T$. 
The rows $A^Z$ and $A^X$ span  parity check spaces 
\be
\label{SST2}
C^Z=\im{\partial} \quad \mbox{and} \quad C^X=\im{\partial^T}.
\ee
The condition $\partial^2=0$ implies that 
\begin{equation}
\label{SST3}
(C^Z)^\perp=\kr{\partial^T} \quad \mbox{and} \quad (C^X)^\perp=\kr{\partial}.
\end{equation}
Since $\im{\partial}\subseteq \kr{\partial}$, the parity check spaces are mutually orthogonal,
$C^Z\subseteq (C^X)^\perp$. 
Hence the complex $(\calC,\partial)$ defines a CSS code with $n=\dim{(\calC)}$ code qubits and 
\be
\label{SST4}
k=n-2\rnk{(\partial)}
\ee
logical qubits. The code has stabilizer weight $w$ whenever every row and every column
of $\partial$ has weight at most $w$. Thus LDPC codes correspond to
sparse boundary operators that have $O(1)$ non-zero entries in every row and every column.
 The number of linearly independent parity checks of each type is equal to $\rnk{(\partial)}$.
 One can always get an independent set of parity checks by picking any maximal independent 
subset of columns and rows of $\partial$.
Finally,  the code distances $d^Z$ and $d^X$ coincide with the minimum weight of vectors
in $\kr{\partial}\backslash \im{\partial}$ and $\kr{\partial^T}\backslash \im{\partial^T}$ respectively.

In this paper we adopt some standard terms from homology theory referring
to various elements of the chain complex. For the reader's convenience 
we summarize those terms below and translate them
to the coding theory language.
 \begin{center}
\begin{tabular}{|c|c|c|}
\hline
$(\calC,\partial)$ & complex & CSS code \\
$\partial\, : \, \calC\to \calC$ & boundary operator &  \\
$\kr{\partial}$ & cycles & undetectable errors of $Z$-type \\
$\kr{\partial^T}$ & cocycles & undetectable errors of $X$-type \\
$\im{\partial}$ & trivial cycles & products of $Z$-type stabilizers \\
$\im{\partial^T}$ & trivial cocycles & products of $X$-type stabilizers \\
 $\kr{\partial}\backslash \im{\partial}$ & non-trivial cycles & $Z$-type logical operators  \\
$\kr{\partial^T}\backslash \im{\partial^T}$ & non-trivial cocycles & $X$-type logical operators \\
$\kr{\partial}/\im{\partial}$ & homology class & equivalence class of $Z$-type logical operators  \\
$\kr{\partial^T}/\im{\partial^T}$ & cohomology class & equivalence class of $X$-type logical operators   \\
 \hline
\end{tabular}
\end{center}
The middle column in lines~3-8 of the table 
shows the term for vectors in the particular set defined in the left-hand column. The right-hand column
shows the term for the corresponding Pauli operator (here a Pauli operator $P(f)$
corresponding to some binary vector $f$ is the product of Pauli $X$ or $Z$ over all qubits in the support of $f$).
The term undetectable error refers to a Pauli operator commuting with all stabilizers.
An undetectable error is called a logical operator if it has a non-trivial action on codewords.
Two logical operators are considered equivalent iff they differ by a product of stabilizers. Equivalent logical operators
have the same action on any codeword.
Note that the sets $\kr{\partial},\kr{\partial^T},\im{\partial},\im{\partial^T}$ are linear spaces, while $\kr{\partial}\backslash \im{\partial}$ and $\kr{\partial^T}\backslash \im{\partial^T}$ are not.  Equivalence classes of logical operators are identified with cosets, that is, elements of the quotient  spaces $\kr{\partial}/\im{\partial}$
and $\kr{\partial^T}/\im{\partial^T}$.

We shall sometimes use the terms `complex' and `code' interchangeably: given a complex, we can define a code in a canonical fashion
as described above. 
The minimum weights of a non-trivial cycle and a non-trivial cocycle coincide with the code distances $d^Z$ and $d^X$ respectively.
Finally, define a {\em homological dimension} of a complex as
\begin{equation}
\label{H}
H(\partial)=\dim{(\kr{\partial})}-\dim{(\im{\partial})}= \dim(\kr{\partial}/\im{\partial}).
\end{equation}
Note that the homological dimension of a complex coincides with the number of logical qubits 
in the corresponding code: $k=n-2\rnk{(\partial)}=(n-\rnk{(\partial)})-\dim{(\im{\partial})}=H(\partial)$.

Let us emphasize that the mapping from complexes to CSS codes is many-to-one. 
Indeed, given a pair of parity check matrices $A^X,A^Z$ as above, one can
define a boundary operator $\partial=(A^Z)^T U A^X$, where $U$ is an arbitrary invertible matrix.
Note that the desired properties $\partial^2=0$, $\im{\partial}=C^Z$, and $\im{\partial^T}=C^X$ 
hold regardless of the choice of $U$. Also note that if we start from a CSS code with low-weight stabilizers,
it is generally not true that $\partial$  is sparse (in the sense of having low-weight rows and columns). 
Finally let us comment that any stabilizer $[[n,k,d]]$ code can be converted  to a CSS
code $[[4n,2k,2d]]$ with $C^X=C^Z$, see Ref.~\onlinecite{BravyiTerhal10maj}.
Moreover, this conversion preserves the stabilizer weight up to a factor $O(1)$. 
In that sense the restriction to CSS codes is not essential.

 For product complexes (defined in the next subsection) we shall reserve
the notation $\partial$ for the boundary operator of the product complex
and denote boundary operators of individual complexes as $\delta_1$ and $\delta_2$.
Unless stated otherwise, below we shall always work with the single sector theory.

\subsection{Product Complex and K\"unneth Formula}

Let $(\calC_1,\delta_1)$ and $(\calC_2,\delta_2)$ be an arbitrary pair of complexes. 
Define an operator
\be
\label{product}
\partial=\delta_1 \otimes I + I \otimes \delta_2
\ee
acting on the tensor product space  $\calC_1 \otimes \calC_2$. Here $I$ is the identity operator. 
We shall always equip the space $\calC_1 \otimes \calC_2$ with the product basis
$i\otimes j$, where $i$ and $j$ are basis vectors of $\calC_1$ and $\calC_2$.
The property $\delta_a^2=0$ implies that $\partial^2=2\delta_1\otimes \delta_2=0$ since we consider 
linear spaces over the binary field (working with more general vector spaces would require
a definition $\partial=\delta_1 \otimes I - I \otimes \delta_2$). 
 Thus $\partial$ is a valid boundary operator. We shall refer to the complex 
 $(\calC_1\otimes \calC_2,\partial)$ as  a product of complexes  $(\calC_1,\delta_1)$ and $(\calC_2,\delta_2)$.

One important property of the product complex is that we can easily compute its homological dimension (the number of logical qubits)
from the ones of individual complexes.
The following simple fact
 is  a special case of the well-known K\"unneth formula, see for instance Ref.~\onlinecite{Hatcher2002}.
\begin{lemma}[\bf K\"unneth formula]
\label{lemma:Kun}
Let $\delta_1,\delta_2$ be any boundary operators and $\partial=\delta_1\otimes I + I\otimes \delta_2$. Then
\begin{equation}
\label{Kun1}
\kr{\partial}=\kr{\delta_1} \otimes \kr{\delta_2} + \im{\partial}
\end{equation}
and
\begin{equation}
\label{Kun2}
H(\partial)=H(\delta_1)\cdot H(\delta_2).
\end{equation}
\end{lemma}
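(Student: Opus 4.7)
The plan is to prove both parts by fixing complementary subspace decompositions on each factor and then tracking components under the product differential.

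To begin, I would pick, for each $a\in\{1,2\}$, a direct-sum decomposition $\calC_a = K_a \oplus M_a$ with $K_a=\kr \delta_a$, together with a further splitting $K_a = \im \delta_a \oplus H_a$. The crucial feature of $M_a$ is that $\delta_a$ restricted to $M_a$ is a linear isomorphism onto $\im \delta_a$. Tensoring these decompositions yields a four-block splitting
\[
\calC_1\otimes \calC_2 = (K_1\otimes K_2)\oplus(K_1\otimes M_2)\oplus(M_1\otimes K_2)\oplus(M_1\otimes M_2),
\]
and a direct calculation shows that $\delta_1\otimes I$ sends the two right-hand blocks into $K_1\otimes K_2$ and $K_1\otimes M_2$, while $I\otimes \delta_2$ sends the corresponding blocks into $K_1\otimes K_2$ and $M_1\otimes K_2$. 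In particular, $\partial$ maps $M_1\otimes M_2$ injectively into $\im\delta_1\otimes M_2 \oplus M_1\otimes\im\delta_2$, because $\delta_1|_{M_1}\otimes \delta_2|_{M_2}$ is an isomorphism of $M_1\otimes M_2$ onto $\im\delta_1\otimes\im\delta_2$.

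For Eq.~(\ref{Kun1}), the inclusion $\supseteq$ is immediate ($\partial$ kills $K_1\otimes K_2$ and $\partial^2=0$). For $\subseteq$, I would take $x\in\kr\partial$, decompose it as $x_{KK}+x_{KM}+x_{MK}+x_{MM}$, and examine $\partial x=0$ block by block. The $K_1\otimes M_2$ block forces $(\delta_1\otimes I)(x_{MM})=0$, hence $x_{MM}=0$ by the injectivity noted above. Refining the surviving blocks using $K_a=\im\delta_a\oplus H_a$, the $\im\delta_1\otimes H_2$ and $H_1\otimes\im\delta_2$ components of the $K_1\otimes K_2$ equation force $x_{MK}\in M_1\otimes\im\delta_2$ and $x_{KM}\in\im\delta_1\otimes M_2$. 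Using the isomorphisms $\delta_a|_{M_a}$, write $x_{KM}=(\delta_1\otimes I)(z_1)$ and $x_{MK}=(I\otimes\delta_2)(z_2)$ for unique $z_1,z_2\in M_1\otimes M_2$. The remaining $\im\delta_1\otimes\im\delta_2$ component of $\partial x=0$ becomes $(\delta_1\otimes\delta_2)(z_1-z_2)=0$, so by injectivity of $\delta_1\otimes\delta_2$ on $M_1\otimes M_2$ we get $z_1=z_2=:z$. Then $\partial z = x_{KM}+x_{MK}$, so $x-\partial z = x_{KK}\in K_1\otimes K_2$, establishing the desired inclusion.

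For Eq.~(\ref{Kun2}), from (\ref{Kun1}) we have
\[
\kr\partial/\im\partial \;\cong\; (K_1\otimes K_2)\big/\bigl((K_1\otimes K_2)\cap \im\partial\bigr),
\]
so the task reduces to identifying the intersection. The inclusion $\im\delta_1\otimes K_2 + K_1\otimes\im\delta_2 \subseteq (K_1\otimes K_2)\cap\im\partial$ is immediate since, e.g.\ $(\delta_1 y)\otimes z = \partial(y\otimes z)$ whenever $\delta_2 z=0$. The reverse inclusion uses the same block decomposition: writing an element of $\im\partial$ as $\partial z$ and demanding that $\partial z$ lie in $K_1\otimes K_2$ forces $z_{MM}=0$ exactly as before, so $\partial z = (\delta_1\otimes I)(z_{MK}) + (I\otimes\delta_2)(z_{KM})$ sits in $\im\delta_1\otimes K_2 + K_1\otimes\im\delta_2$. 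Finally, plugging in $K_a=\im\delta_a\oplus H_a$, the quotient $(K_1\otimes K_2)/(\im\delta_1\otimes K_2 + K_1\otimes\im\delta_2)$ is canonically isomorphic to $H_1\otimes H_2$, whose dimension is $\dim H_1\cdot\dim H_2 = H(\delta_1)\cdot H(\delta_2)$.

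The main obstacle is the uniqueness argument producing the single element $z\in M_1\otimes M_2$ that simultaneously cancels $x_{KM}$ and $x_{MK}$; this is where the cycle condition $\partial x=0$ is used in an essential (non-formal) way, and it is also the place where working over $\mathbb{F}_2$ (so that $\partial = \delta_1\otimes I+I\otimes\delta_2$ automatically squares to zero) has to be invoked carefully. Everything else is bookkeeping on the direct-sum decomposition.
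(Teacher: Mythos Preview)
Your proof is correct. The strategy is the same as the paper's---produce an element $z$ (the paper calls it $h$) with $x-\partial z\in\kr\delta_1\otimes\kr\delta_2$, and then identify $(\kr\delta_1\otimes\kr\delta_2)\cap\im\partial$---but the executions differ. You fix complements $M_a$ to $\kr\delta_a$ up front and then do block-by-block bookkeeping; the paper works coordinate-free: from $\partial f=0$ it observes that $g:=(\delta_1\otimes I)f=(I\otimes\delta_2)f$ lies in $(\im\delta_1\otimes\calC_2)\cap(\calC_1\otimes\im\delta_2)=\im\delta_1\otimes\im\delta_2$, writes $g=(\delta_1\otimes\delta_2)h$, and checks $f+\partial h\in\kr\delta_1\otimes\kr\delta_2$ in one line. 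Your $z_1=z_2$ step is exactly the paper's lift of $g$ to $h$, just seen through the isomorphisms $\delta_a|_{M_a}$. The paper's route is shorter because it never needs the complements $M_a$ or the four-block decomposition; your route makes every implication explicit and in particular isolates precisely where injectivity of $\delta_1\otimes\delta_2$ on a complement is used, which is pedagogically clearer. Both arguments for Eq.~(\ref{Kun2}) are the same: show $\im\delta_1\otimes\kr\delta_2+\kr\delta_1\otimes\im\delta_2\subseteq\im\partial$ and read off the quotient.
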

\begin{proof}
Consider any vector $f\in \kr{\partial}$. Define a vector
$g=(\delta_1 \otimes I)f =(I\otimes \delta_2)f$.
By construction,  
$g\in (\im{\delta_1} \otimes \calC_2)\cap (\calC_1 \otimes \im{\delta_2}) = \im{\delta_1}\otimes \im{\delta_2}$,
that is, $g=(\delta_1 \otimes \delta_2) h$ for some $h\in \calC_1\otimes \calC_2$.
Identities $\delta_a^2=0$ then lead to
$(\delta_1 \otimes I)(f+\partial h)=0$ and $(I \otimes \delta_2)(f+\partial h)=0$,
that is,
$f+\partial h \in (\kr{\delta_1}\otimes \calC_2 )\cap (\calC_1 \otimes \kr{\delta_2})= 
\kr{\delta_1}\otimes \kr{\delta_2}$
This proves the inclusion $\subseteq$ in Eq.~(\ref{Kun1}).
The inclusion $\supseteq$ follows trivially from $\delta_a^2=0$ and $\partial^2=0$.
It remains to prove Eq.~(\ref{Kun2}). One can easily check that
 $\im{\delta_1}\otimes \kr{\delta_2}\subseteq \im{\partial}$ and
 $\kr{\delta_1}\otimes \im{\delta_2}\subseteq \im{\partial}$.
 Thus Eq.~(\ref{Kun1}) implies that $\kr{\partial}/\im{\partial}$ has a basis $h_1^i\otimes h_2^j$,
 where $\{h_a^i\}_i$ is a basis of $\kr{\delta_a}/\im{\delta_a}$. This proves Eq.~(\ref{Kun2}).
\end{proof}

Next we compute parameters of the CSS code corresponding to the product complex.
Let $w_a$ be the maximum weight of rows and columns of the boundary operator $\delta_a$. 
Let $d_a^Z,d_a^X$ be the minimum weight of non-trivial cycles or co-cycles, respectively, in the complex $(\calC_a,\delta_a)$.
\begin{lemma}
\label{lemma:Kun1}
Any row and any column of $\partial$ has weight at most $w_1+w_2$.
Furthermore,  let $d^Z,d^X$ be the minimum weight of non-trivial cycles or co-cycles, respectively, in the complex
 $(\calC_1\otimes \calC_2,\partial)$. Then,
\begin{equation}
\label{Kun4}
\max{\{ d_1^\alpha,d_2^\alpha\}}\le d^\alpha \le d_1^\alpha d_2^\alpha, \quad \alpha=X,Z.
\end{equation}
\end{lemma}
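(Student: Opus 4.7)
The plan is to prove the three separate assertions of the lemma --- the row/column weight bound on $\partial$, the upper bound $d^\alpha\le d_1^\alpha d_2^\alpha$, and the lower bound $d^\alpha\ge\max\{d_1^\alpha,d_2^\alpha\}$ --- in that order, treating only the case $\alpha=Z$ explicitly and deducing $\alpha=X$ by the symmetry $\partial^T=\delta_1^T\otimes I+I\otimes\delta_2^T$.

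For the weight bound, I will write the matrix elements of $\partial$ in the product basis $\{e_i\otimes e_j\}$. A column of $\partial$ indexed by $(i,j)$ receives contributions only from entries of the form $(i',j)$, weighted by $(\delta_1)_{i'i}$, and from entries of the form $(i,j')$, weighted by $(\delta_2)_{j'j}$. The first family contributes at most $w_1$ nonzero entries and the second at most $w_2$, so the total column weight is $\le w_1+w_2$ (possible cancellation at the unique overlap point $(i,j)$ only helps). Rows are identical by transposition.

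For the upper bound, pick minimum-weight representatives $u_a\in\kr\delta_a\setminus\im\delta_a$ with $\wt(u_a)=d_a^Z$. Then $u_1\otimes u_2\in\kr\delta_1\otimes\kr\delta_2\subseteq\kr\partial$. To see non-triviality, invoke the K\"unneth formula of Lemma~\ref{lemma:Kun}: the induced map $\kr\delta_1\otimes\kr\delta_2\to\kr\partial/\im\partial$ factors through $(\kr\delta_1/\im\delta_1)\otimes(\kr\delta_2/\im\delta_2)$, on which simple tensors of nonzero classes are nonzero. Hence $u_1\otimes u_2$ is a non-trivial cycle of weight $d_1^Zd_2^Z$.

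The lower bound is the main technical step; the key idea is a partial-contraction chain map combined with the non-degenerate pairing between homology and cohomology. Let $f\in\kr\partial\setminus\im\partial$ and view it as a matrix $F\in\FF_2^{n_1\times n_2}$, so that $\partial f=0$ rewrites as $\delta_1 F = F\delta_2^T$. Since $[f]\neq 0$ in $\kr\partial/\im\partial$, and since under the K\"unneth isomorphism this quotient is dually paired with $(\kr\delta_1^T/\im\delta_1^T)\otimes(\kr\delta_2^T/\im\delta_2^T)$, I can choose $\omega_a\in\kr\delta_a^T$ so that $\omega_1^T F\omega_2=1$. Now set $g=F\omega_2\in\calC_1$. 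From $\delta_1 F=F\delta_2^T$ and $\delta_2^T\omega_2=0$ I get $\delta_1 g=0$, while $(g,\omega_1)=\omega_1^TF\omega_2=1$ forces $g\notin\im\delta_1$ (otherwise $(\delta_1 h,\omega_1)=(h,\delta_1^T\omega_1)=0$). Therefore $\wt(g)\ge d_1^Z$. Finally, $g_i\ne 0$ requires some $F_{ij}\ne 0$, so $\wt(g)\le\wt(F)=\wt(f)$, giving $\wt(f)\ge d_1^Z$. Swapping the roles of the two factors yields $\wt(f)\ge d_2^Z$ as well.

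The step I expect to require the most care is verifying that the K\"unneth isomorphism does yield a non-degenerate pairing against the cohomology of the product complex, so that a suitable simple tensor $\omega_1\otimes\omega_2$ detecting $[f]$ really exists; with Lemma~\ref{lemma:Kun} in hand this reduces to a dimension count plus the standard chain-level pairing $(v,w)$ descending to a non-degenerate pairing between $\kr\delta_a/\im\delta_a$ and $\kr\delta_a^T/\im\delta_a^T$. Everything else is straightforward manipulation of the relation $\delta_1 F=F\delta_2^T$.
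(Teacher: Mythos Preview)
Your proof is correct and follows essentially the same strategy as the paper: the weight bound via the triangle inequality on $\delta_1\otimes I+I\otimes\delta_2$, the upper bound via the tensor of minimum-weight nontrivial cycles, and the lower bound via contracting a minimum-weight nontrivial cycle $F$ against a suitably chosen cocycle $\omega_2$ to produce a nontrivial cycle $F\omega_2$ in the first factor. The one cosmetic difference is that you obtain $\delta_1(F\omega_2)=0$ directly from the matrix identity $\delta_1 F=F\delta_2^T$ (equivalent to $\partial f=0$), whereas the paper first decomposes $\psi$ via the K\"unneth formula as $\phi+\theta+\omega$ with $\phi\in\kr\delta_1\otimes\kr\delta_2$, $\theta\in\im\delta_1\otimes\calC_2$, $\omega\in\calC_1\otimes\im\delta_2$ and then tracks each piece under multiplication by $h_2$; your route is slightly more direct but the content is the same.
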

We note that the upper bound in Eq.~(\ref{Kun4}) may or may not be tight depending
on the choice of input complexes $(\calC_a,\delta_a)$, see Section~\ref{sec:small} for more details. 
For the product of complexes constructed using the multiple sector theory one
can prove a similar lemma and, moreover, derive simple sufficient conditions
under which the upper bound in Eq.~(\ref{Kun4}) is tight~\cite{InPreparation}.
\begin{proof}[\bf Proof of Lemma~\ref{lemma:Kun1}.]
Consider the case $\alpha=Z$; the case $\alpha=X$ can be handled by replacing $\delta_a$ by $\delta_a^T$.
The matrices $\delta_1\otimes I$ and $I\otimes \delta_2$ have both row and column weights
at most $w_1$ and $w_2$ respectively. By triangle inequality, $\partial$ has row and column weights at most $w_1+w_2$.
 Given any nontrivial cycles $h_a\in \kr{\delta_a}\backslash \im{\delta_a}$,  the vector $h_1 \otimes h_2$ is a nontrivial cycle for $\partial$.
If $h_a$ has weight $d^Z_a$ then $h_1\otimes h_2$ has weight $d^Z_1 d^Z_2$. This proves $d^Z\le d^Z_1 d^Z_2$.
To prove the lower bound on $d^Z$, assume without loss of generality that $d^Z_1\ge d^Z_2$. 
Suppose $\psi$ is a minimum weight
non-trivial cycle for $\partial$. Then we can always choose a 
pair of non-trivial cocycles $h_a\in \kr{\delta_a^T}\backslash \im{\delta_a^T}$
such that $\psi$ and $h_1\otimes h_2$ have odd overlap. 
Using the K\"unneth formula Eq.~(\ref{Kun1}) one can represent
$\psi$ as $\psi=\phi+\theta+\omega$, where 
\[
\phi\in \kr{\delta_1}\otimes \kr{\delta_2}, \quad \theta \in \im{\delta_1}\otimes \calC_2, \quad \omega \in \calC_1\otimes \im{\delta_2}.
\]
Let us identify vectors from $\calC_1\otimes \calC_2$ with matrices
of size $n_1\times n_2$. Using matrix-vector notations we have
$h_1^T \psi h_2 =1$.
Furthermore, $\theta h_2 \in \im{\delta_1}$ and $\omega h_2=0$ since $h_2$ is a cocycle. 
Thus 
$\psi h_2 \in  \phi h_2 + \im{\delta_1}  \subseteq \kr{\delta_1}$.
 On the other hand, $\psi h_2 \notin \im{\delta_1}$
since otherwise $h_1^T \psi h_2=0$. Thus $\psi h_2$ is a non-trivial cycle
for $\delta_1$ and as such it must have weight at least $d^Z_1$.
Since $\psi h_2$ is a linear combination of columns of $\psi$, 
the triangle inequality implies that $\psi$ itself must have weight at least $d^Z_1$.
\end{proof}
As was shown in the previous subsection, 
the complex $(\calC_a,\delta_a)$ describes a CSS code $[[n_a,k_a,d_a,w_a]]$,
where $n_a=\dim{(\calC_a)}$, $k_a=H(\delta_a)$, and $d_a=\min{\{d_a^X,d_a^Z\}}$. Lemmas~\ref{lemma:Kun},\ref{lemma:Kun1} imply that the product 
complex $(\calC_1\otimes \calC_2,\partial)$ describes a CSS code $[[n,k,d,w]]$, where
\be
\label{Kun3}
n=n_1n_2, \quad k=k_1k_2, \quad w=w_1+w_2, \quad d=\min\{d^X,d^Z\},
\ee
and $d^X,d^Z$ are the two distances of the product code which are bounded as in Eq.~(\ref{Kun4}).

{\em Remark:}
The multiple sector version of the product is defined analogously.  In this case, given complexes
with vector spaces ${\cal C}_i$ and ${\cal C}'_i$ and boundary operators $\delta_i$ and $\delta'_i$, define a new complex with spaces
\be
{\cal D}_i=\oplus_j {\cal C}_j \otimes {\cal C}'_{i-j},
\ee
and boundary operators $\partial_i$ defined as follows.  The operator $\partial_i$ has nonzero matrix elements from each space ${\cal C}_j \otimes {\cal C}'_{i-j}$ to the spaces ${\cal C}_{j-1} \otimes {\cal C'}_{i-j}$ and ${\cal C}_j \otimes {\cal C}'_{i-j}$.  The matrix elements to the first space are given by the matrix elements of the operator $\delta_j \otimes I$ while the matrix elements to the second space are given by the matrix elements of the operator $(-1)^j I \otimes \delta'_{i-j}$.
This construction gives an operator $\partial$ such that $\partial_{i-1} \partial_i=0$ for any field.

\section{Random Codes from Random Complexes}
\label{sec:rand}

In this section we define a random ensemble of boundary operators used throughout this paper. 
We will show that the corresponding CSS code is good with high probability.
First we derive a canonical form of a boundary operator.
\begin{lemma}
\label{lemma:canonical}
Consider any complex $(\calC,\delta)$ such that $\delta$ has homological dimension $H$
and rank $L$. Then $\delta=U\delta_0 U^{-1}$, where $U$ is some invertible matrix and 
$\delta_0$ is the canonical boundary operator defined as block matrix
\be
\label{canonical}
\delta_0=\begin{pmatrix} 
0 & 0  & 0\\
0 & 0 & I\\
0 & 0 & 0
\end{pmatrix}.
\ee
Here rows and columns are grouped into blocks of size $H,L,L$.
Furthermore, the number of invertible matrices $U$ such that $\delta=U\delta_0 U^{-1}$
does not depend on $\delta$.
\end{lemma}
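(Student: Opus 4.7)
The plan is to exhibit a basis of $\calC$ in which the matrix of $\delta$ is literally $\delta_0$; the change-of-basis matrix then provides the desired $U$. First I would pick a basis $e_1,\ldots,e_L$ of $\im{\delta}$ and extend it to a basis $e_1,\ldots,e_L,f_1,\ldots,f_H$ of $\kr{\delta}$, which has dimension $L+H$ by the rank-nullity theorem together with the definition of $H$. Then for each $e_i$ I would pick some preimage $g_i \in \calC$ with $\delta g_i = e_i$. These $g_i$ are linearly independent modulo $\kr{\delta}$ (because $\delta$ sends them to the independent vectors $e_i$), so the $H+2L$ vectors $f_1,\ldots,f_H,e_1,\ldots,e_L,g_1,\ldots,g_L$ are linearly independent; since $\dim \calC = H + 2L$, they form a basis.

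Next I would verify that in this basis $\delta$ has precisely the block form $\delta_0$: by construction $\delta f_i = 0$ and $\delta e_i = 0$, while $\delta g_i = e_i$, which means the only nonzero matrix block is the identity sending the third (size-$L$) block of columns to the second (size-$L$) block of rows. Ordering the basis as $(f_1,\ldots,f_H,e_1,\ldots,e_L,g_1,\ldots,g_L)$ and letting $U$ be the matrix whose columns are these basis vectors expressed in the standard basis, we have $\delta U = U \delta_0$, i.e.\ $\delta = U\delta_0 U^{-1}$. The matrix $U$ is invertible because its columns form a basis, so this proves the first claim.

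For the second claim, I would observe that if $U_0$ is any one invertible matrix with $\delta = U_0\delta_0 U_0^{-1}$, then an arbitrary invertible $U$ satisfies $U\delta_0 U^{-1} = U_0\delta_0 U_0^{-1}$ if and only if $U_0^{-1}U$ commutes with $\delta_0$. Hence the set of valid $U$ is the left coset $U_0 \cdot Z(\delta_0)$, where $Z(\delta_0) = \{V \in GL(\calC) : V\delta_0 = \delta_0 V\}$ is the centralizer of $\delta_0$ in the general linear group over $\FF_2$. Its cardinality is therefore $|Z(\delta_0)|$, which depends only on the block sizes $H$ and $L$ defining $\delta_0$ and not on the particular $\delta$.

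There is essentially no hard step: everything reduces to the standard linear-algebraic observation that a square-zero linear map over a field is classified up to conjugation by the pair $(\dim\kr{\delta}-\dim\im{\delta},\ \dim\im{\delta})$. The only place that might require a line of care is checking that the lifted preimages $g_i$ together with a basis of $\kr{\delta}$ really exhaust $\calC$, which follows from the dimension count $H + 2L = n$.
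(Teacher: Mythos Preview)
Your proof is correct and follows essentially the same approach as the paper: both construct a basis of $\calC$ by choosing a basis of $\im{\delta}$, extending it to a basis of $\kr{\delta}$, and then adjoining preimages of the $\im{\delta}$-basis under $\delta$; both handle the counting claim via the observation that the set of valid $U$ is a coset of the centralizer $\{V:V\delta_0V^{-1}=\delta_0\}$ in $GL(\calC)$. The paper's argument differs only in notation and in the order in which the pieces of the basis are introduced.
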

\begin{proof}
Let $M=\dim{(\calC)}$.
By definition of the homological dimension, Eq.~(\ref{H}), one has
 $L+H=\dim{(\kr{\delta})}=M-L$, that is, $M=2L+H$.
Choose an arbitrary $H$-dimensional subspace ${\cal H}$ such that 
$\kr{\delta}={\cal H}\oplus \im{\delta}$ is a direct sum. 
Let $I^1,I^2,\ldots,I^{L+H}$ be any basis of $\kr{\delta}$ such that 
$I^1,\ldots,I^H$ span ${\cal H}$ and $I^{H+1},\ldots,I^{H+L}$ span $\im{\delta}$.
Then $I^{H+j}=\delta(I^{H+L+j})$ for some vectors $I^{H+L+1},\ldots,I^{M}$.
Let ${\cal M}$ be the subspace spanned by $I^{H+L+1},\ldots,I^{M}$.
Since $\delta\cdot {\cal M}=\im{\delta}$, $\dim{(\calM)}\le L$, and $\dim{(\im{\delta})}=L$,
one must have $\dim{(\calM)}=L$. The property $\delta^2=0$ implies that
$\calM\cap \kr{\delta}=0$, as otherwise $\delta\cdot \calM$ would have dimension
less than $L$. Thus vectors $I^1,\ldots,I^M$ form a basis of the full
space $\calC$. In this basis $\delta$ has the desired form Eq.~(\ref{canonical}).
Hence $\delta=U\delta_0 U^{-1}$ for some invertible $U$.

To prove the last statement, define a normalizer group $G=\{ U\, :\,  U\delta_0 U^{-1} =\delta_0\}$.
Then $U\delta U^{-1}=V\delta V^{-1}$ implies $V^{-1}U\in G$. Thus for any a given $\delta$
there are $|G|$ invertible matrices $U$ such that $\delta=U\delta_0 U^{-1}$.
\end{proof}
Let us fix $M$ and $H$. Below we consider a random boundary operator $\delta$
distributed uniformly on the set of all $M\times M$ matrices satisfying $\delta^2=0$
and $H(\delta)=H$. By Lemma~\ref{lemma:canonical}, such random boundary operator
can be represented  as $\delta=U\delta_0 U^{-1}$, where $U$
is a random invertible matrix drawn from the uniform distribution. 
Define an {\em encoding rate}
\be
\hfra=H/M.
\ee
We shall be interested in the limit $M,H\to \infty$ such that the encoding rate remains constant.
Let us show that in this limit a random boundary operator gives a code with linear distance with high probability.
\begin{lemma}
\label{lemma:bad}
For any $\epsilon>0$ one can choose $c,\hfra>0$ such that the following is true
for all large enough integers $M$ and for all $H\le \hfra M$.
Let $\delta$ be a random $M\times M$ boundary operator with $H(\delta)=H$.
Then the probability that $\kr{\delta}$ contains a vector with weight less than $cM$ is at most 
$O(1)\cdot 2^{-M/2 + M\epsilon}$ and the same bound also holds for $\ker(\delta^T)$.
\begin{proof}
We just consider the case of $\kr(\delta)$; the proof for $\kr(\delta^T)$ is identical since $\delta$ and $\delta^T$ are drawn from the same distribution.

Let $M=2L+H$. We will say that a vector has low weight iff its weight is less than $cM$. 
By Lemma~\ref{lemma:canonical}, we can assume
that $\delta=U\delta_0U^{-1}$, where $U$ is a random invertible matrix.
Note that  $\kr{\delta}=U\cdot \kr{\delta_0}$. 
For a fixed vector $v\in \kr{\delta_0}$ the rotated vector $Uv$ is
distributed uniformly on the set of all $M$-bit vectors. 
Thus the
 probability 
that $Uv$ has low weight  is equal to
\[
\sum_{w < cM} 2^{-M} {M \choose w}\le O(1)\cdot 2^{-M+S(c)M + o(M)},
\]
where $S(c)=-c\log_2(c) - (1-c) \log_2(1-c)$ is the Shannon entropy. 
The total number of vectors in $\kr{\delta_0}$ is $2^{L+H}=2^{(M+H)/2}$. The union bound implies that
$\kr{\delta}$ contains a low weight vector with probability at most
\[
O(1)\cdot 2^{(M+H)/2 -M+S(c)M +o(M)} = O(1)\cdot 2^{-M/2 + H/2  +S(c)M + o(M)}.
\]
It remains to choose small enough $c$ and $\hfra$ such that $S(c)M+\hfra M/2 + o(M)\le \epsilon M$.

\end{proof}
\end{lemma}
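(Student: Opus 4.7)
The plan is to exploit the canonical form from Lemma~\ref{lemma:canonical}: every $\delta$ in our ensemble factors as $\delta=U\delta_0 U^{-1}$, and since the number of $U$'s realizing a given $\delta$ does not depend on $\delta$, the pushforward of the uniform distribution on the ensemble is the uniform distribution of $U$ on the group $GL(M,\FF_2)$ of invertible $M\times M$ matrices. Then $\kr{\delta}=U\cdot \kr{\delta_0}$, so the problem reduces to asking whether, for a uniformly random invertible $U$, any of the $2^{L+H}=2^{(M+H)/2}$ fixed vectors of $\kr{\delta_0}$ is mapped to a vector of weight less than $cM$.

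The key probabilistic input is that $GL(M,\FF_2)$ acts transitively on $\FF_2^M\setminus\{0\}$, so for any fixed nonzero $v$ the image $Uv$ is uniformly distributed over the nonzero $M$-bit vectors. Combined with the standard entropy estimate $\sum_{w<cM}\binom{M}{w}\le O(1)\cdot 2^{S(c)M+o(M)}$, this yields
\[
\Pr\bigl[\,0<\wt(Uv)<cM\,\bigr]\le O(1)\cdot 2^{-M(1-S(c))+o(M)},
\]
where $S(c)=-c\log_2 c-(1-c)\log_2(1-c)$.

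A union bound over the $2^{(M+H)/2}$ vectors of $\kr{\delta_0}$ gives failure probability at most $O(1)\cdot 2^{-M/2+H/2+S(c)M+o(M)}$. Substituting $H\le \hfra M$, the exponent becomes $-M/2+(\hfra/2+S(c))M+o(M)$. Since $S(c)\to 0$ as $c\to 0$, I first pick $c$ small enough that $S(c)<\epsilon/2$, and then $\hfra<\epsilon$, obtaining the claimed bound $O(1)\cdot 2^{-M/2+M\epsilon}$.

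The bound for $\kr(\delta^T)$ is immediate by symmetry: the map $\delta\mapsto \delta^T$ is an involution on the set $\{\delta:\delta^2=0,\ H(\delta)=H\}$, since both $\delta^2=0$ and the rank are preserved under transposition. Hence $\delta^T$ has the same distribution as $\delta$. The only potentially delicate step is the reduction to uniform $U\in GL(M,\FF_2)$, but this is precisely what the final sentence of Lemma~\ref{lemma:canonical} supplies; the remainder is routine entropy counting and a careful choice of constants.
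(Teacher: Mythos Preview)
Your proof is correct and follows essentially the same route as the paper's: canonical form $\delta=U\delta_0U^{-1}$ with $U$ uniform on $GL(M,\FF_2)$, uniformity of $Uv$ for fixed nonzero $v$, the entropy bound on the low-weight tail, and a union bound over the $2^{(M+H)/2}$ vectors of $\kr{\delta_0}$. If anything, you are slightly more careful than the paper in noting that $Uv$ is uniform on \emph{nonzero} vectors (the paper says ``all $M$-bit vectors'') and in spelling out why $\delta\mapsto\delta^T$ preserves the ensemble; neither point changes the argument.
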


\section{Product of Two Random Complexes: Distance Bounds}
\label{sec:distance}
In this section we study the product of two random complexes $(\calC_a,\delta_a)$ 
defined above. Both complexes have the same dimension, $\dim{(\calC_1)}=\dim{(\calC_2)}=M$,
and the same homological dimension $H=H(\delta_1)=H(\delta_2)=\hfra M$. 
 We prove that for sufficiently small $c>0$ and $\hfra>0$, the product code  has distance at least $cM^2$ with high probability.
The distance bound in the previous section was based on a ``first moment" method: we showed that the average number of low weight 
cycles  is small, implying that with high probability there are no low weight cycles. 
There are two reasons why this kind of estimate will not work for the product code.
One obvious reason is that, by construction, the product code always has cycles with weight $O(M)$.
These are trivial cycles (boundaries) obtained as $\partial (i\otimes j) = (\delta_1 i)\otimes j + i \otimes (\delta_2 j)$,
where $i,j$ are any basis vectors.  Thus some steps in the proof must differentiate between trivial and non-trivial cycles.
The second reason is that, if by chance we pick a poor choice of the boundary operators $\delta_1,\delta_2$ such that $\partial$
has a low weight non-trivial cycle, then in fact $\partial$ will have many low weight non-trivial cycles.
To see this, note that if $\partial $ has a non-trivial cycle $\psi$ with weight $o(M^2)$ then the sum of $\psi$
and any low weight trivial cycle as above is a non-trivial cycle with weight $o(M^2)$. 
 As a result, even though most codes will not have any low weight non-trivial cycles, the average number of such cycles
 will not be small.  This problem motivates our introduction of ``uniform low weight" condition below.

Assume that a vector $\psi$ in the product complex $\calC_1\otimes \calC_2$ exists that is a nontrivial cycle for $\partial$ 
 and has weight less than $cM^2$.  We regard $\psi$ as an $M$-by-$M$ matrix, with rows corresponding to the first complex and columns corresponding to the second.
 Choose any constant $r$ such that  $c<r< 1$. Clearly, $\psi$ has at least 
 $(1-r)M$ columns with weight at most $cMr^{-1}$.  Similarly, 
 $\psi$ has 
 at least $(1-r)M$ rows with weight at most $cMr^{-1}$.  Let $M'=(1-r)M$.  Then, if we consider the $M'$-by-$M'$ submatrix of $\psi$
 consisting just of those rows and columns, then every row has weight at most $cMr^{-1}=c'M'$ where
\be
c'= cr^{-1}/(1-r),
\ee
 and similarly every column also has weight at most $c'M'$.  We refer to this submatrix as the {\em reduced matrix}.  We refer to the condition that an $M'$-by-$M'$ matrix has weight at most $c'M'$ in every row and column as the {\em uniform low weight condition}. The above shows that $\psi$
 must have at least one $M'$-by-$M'$ submatrix obeying the uniform low weight condition. Note that for any fixed $r>0$ one can make $c'$
 arbitrarily small by choosing small enough $c$.

In subsection \ref{first} we show that if each input code has distance at least $M-M'+1$, then
 in the product complex there is no nontrivial cycle which gives a {\it vanishing} reduced matrix.
 The probabilistic estimates from the previous section imply that  for sufficiently large $M'$ the desired distance bound on the input codes will hold with high probability.
The number of possible choices of $M'$ rows out of $M$ is ${M}\choose{M'}$.  Thus, the number of possible choices of $M'$ rows and $M'$ columns is
${{M}\choose{M'}}^2$.   Fix any choice of $M'$ rows and $M'$ columns and let $P_{red}(M')$ denote  the probability that there is a cycle (trivial or nontrivial) which gives a
{\em nonvanishing} reduced matrix  obeying the uniform low weight condition.  
Note that this probability is independent of the particular choice of the set of $M'$ rows and columns.

Summing over all choices of $M'$ rows and columns and using a union bound, the probability that there is a cycle which
contains a non-vanishing reduced matrix obeying the uniform low weight condition
 is bounded by ${M \choose M'}^2 P_{red}(M')$. Thus the probability that there is a nontrivial cycle $\psi$
with weight at most $cM^2$ is bounded by 
\be
{{M}\choose{M'}}^2 P_{red}(M')+o(1),
\ee
where the $o(1)$ accounts for the exponentially small probability that
one of the input codes has distance less than $M-M'+1$.  

The proof of the distance bound for the product code will be based on bounding $P_{red}(M')$.  From here on, when we refer to a reduced matrix, we use the fixed choice of submatrix corresponding to the first $M'$ rows and columns.
To bound $P_{red}(M')$, in  subsection \ref{sec:red}
we estimate the number of different $M$'-by-$M'$ matrices of given rank $R$ which correspond to the reduced matrix of a cycle.  Then in subsection \ref{sec:ulw}, we estimate the probability that an $M'$-by-$M'$ random matrix of given rank obeys the uniform low weight condition.  Combining these two with a union bound, we show that with exponentially high 
probability, there are no cycles which contain a reduced matrix of rank $R\geq 1$ obeying the uniform low weight condition.
Our bounds will be sufficiently tight so that 
${{M}\choose{M'}}^2 P_{red}(M')$ will be bounded by an exponentially small quantity.
Thus the probability that the product code has non-trivial cycle with weight less than $cM^2$ is $o(1)$. 
Since exactly the same bounds apply to cocycles, 
this shows that the product code has distance less than $cM^2$ with probability $o(1)$.
Thus there exist a family of codes $[[M^2,(\hfra M)^2, cM^2, O(M)]]$, as promised in Theorem~\ref{thm:main}.

Some comments on notation: we use $O(...)$ and $o(...)$ notation referring to scaling with $M$.  We work at fixed $\hfra$ throughout, so the big-O notation equivalently refers to scaling with $L$ or $H$.

\subsection{No Vanishing Reduced Matrices}

\label{first}
\begin{lemma}
\label{lemma:1}
Suppose each input code has minimum distance at least $M-M'+1$. 
If  $h\in \kr{\partial}$ is a cycle  with vanishing reduced
matrix, then $h$ is trivial, that is, $h\in \im{\partial}$. The same holds for cocycles. 
\end{lemma}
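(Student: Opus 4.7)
The plan is to exploit the K\"unneth formula (Lemma~\ref{lemma:Kun}) together with the perfect pairing between homology and cohomology. By Lemma~\ref{lemma:Kun}, every cycle $h\in\kr{\partial}$ decomposes, modulo $\im{\partial}$, as $\sum c_{ij}\, e_1^i\otimes e_2^j$, where $\{e_a^i\}$ represent a basis of $H(\delta_a)=\kr{\delta_a}/\im{\delta_a}$. To show $h\in\im{\partial}$ it suffices to show that every $c_{ij}=0$. Each coefficient can be extracted by pairing with dual cocycles: choose $\{f_a^j\}\subseteq\kr{\delta_a^T}$ so that $(e_a^i)^T f_a^j=\delta_{ij}$ descends to the canonical pairing on homology/cohomology; then $c_{ij}=(f_1^i)^T h\, f_2^j$. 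Because $h$ is a cycle, this pairing is invariant under replacing $f_a^j$ by any cohomologous representative $f_a^j+\delta_a^T\psi$ (the extra term contributes $\langle h,\delta_1^T\psi\otimes f_2^j\rangle=\langle (\delta_1\otimes I)h,\psi\otimes f_2^j\rangle=\langle h,\psi\otimes\delta_2^T f_2^j\rangle=0$, using $\partial h=0$ and $\delta_2^T f_2^j=0$).

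The crucial step is to use the minimum-distance hypothesis to choose representatives $f_a^j$ supported entirely on $A$. For this I would establish the equivalence
\[
\kr{\delta_a}\cap\FF_2^B=\im{\delta_a}\cap\FF_2^B \ \Longleftrightarrow\ (\kr{\delta_a^T})|_B=(\im{\delta_a^T})|_B,
\]
which is a short orthogonal-complement computation inside $\FF_2^B$: after taking annihilators, the two restricted subspaces on the right become $\im{\delta_a}\cap\FF_2^B$ and $\kr{\delta_a}\cap\FF_2^B$ respectively (using $(\kr{\delta_a^T})^\perp=\im{\delta_a}$ and $(\im{\delta_a^T})^\perp=\kr{\delta_a}$). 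The left-hand side is immediate from the hypothesis: any cycle of $\delta_a$ supported on $B$ has weight at most $|B|=M-M'<d_a^Z$ and is therefore trivial. The right-hand side says that for every cocycle $f_a^j$ there exists $\psi$ with $(f_a^j-\delta_a^T\psi)|_B=0$, so the shifted representative is supported on $A$.

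With representatives supported on $A$ in hand, the pairing collapses:
\[
c_{ij}=(f_1^i)^T h\, f_2^j=\sum_{k,l\in A}(f_1^i)_k\, h_{kl}\,(f_2^j)_l=0,
\]
because $h_{kl}=0$ on $A\times A$. Hence every K\"unneth coefficient vanishes and $h\in\im{\partial}$. The cocycle case is treated symmetrically: apply the same argument with $\delta_a$ and $\delta_a^T$ interchanged, so that the shifting step now uses cycles of $\delta_a^T$ supported on $B$, which again must be trivial; the hypothesis $d_a\ge M-M'+1$ covers both $d_a^Z$ (used for the cycle case) and $d_a^X$ (used for the cocycle case).

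The main obstacle I anticipate is the duality step converting the weight hypothesis on cycles of $\delta_a$ into the existence of cocycle representatives of $\delta_a^T$ supported on $A$; once that reduction is in place the remainder of the argument is bookkeeping against the K\"unneth decomposition.
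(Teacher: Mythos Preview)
Your proof is correct and follows essentially the same route as the paper: use the K\"unneth decomposition, pair against dual cocycles to extract the coefficients, and observe that these cocycles can be chosen with support inside the first $M'$ coordinates so that the pairing vanishes when the reduced matrix does. The only difference is that the paper obtains the supported-on-$A$ cocycle representatives by invoking the Cleaning Lemma from Ref.~\onlinecite{BT09}, whereas you supply a direct duality proof of that step (your equivalence $\kr{\delta_a}\cap\FF_2^B=\im{\delta_a}\cap\FF_2^B \Leftrightarrow (\kr{\delta_a^T})|_B=(\im{\delta_a^T})|_B$ is exactly the Cleaning Lemma specialized to CSS codes).
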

We shall need the following simple fact proved in  Ref.~\onlinecite{BT09}.
\begin{lemma}[\bf Cleaning Lemma]
\label{lemma:clean}
Suppose a stabilizer code has minimum distance $d$. Let $P$ be any logical operator 
and $S$ be any subset of less than $d$ qubits. Then there exists 
a logical operator $P'$ equivalent to $P$ modulo stabilizers, 
such that $P'$ acts trivially on $S$.
\end{lemma}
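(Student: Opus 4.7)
My plan is to use the Cleaning Lemma (Lemma~\ref{lemma:clean}) together with the K\"unneth formula (Lemma~\ref{lemma:Kun}) to produce, for every cohomology class of the product complex, a representative supported entirely on the top-left $M'\times M'$ block. Since $h$ vanishes on this block, it will pair to zero with every such representative, and the standard non-degeneracy of the cycle/cocycle pairing will then force $h\in\im\partial$.

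I would first apply Lemma~\ref{lemma:clean} to each input code with the set $S_a=\{M'+1,\ldots,M\}\subseteq[M]$ for $a=1,2$; since $|S_a|=M-M'<M-M'+1\le d_a$ by hypothesis, the lemma yields, for every equivalence class in $\kr{\delta_a^T}/\im{\delta_a^T}$, a representative supported in $[1,M']$. Next, I would apply Lemma~\ref{lemma:Kun} to the boundary operators $\delta_1^T$ and $\delta_2^T$ (valid boundary operators because $(\delta_a^T)^2=(\delta_a^2)^T=0$) to conclude that
\[
\kr{\partial^T}=\kr{\delta_1^T}\otimes\kr{\delta_2^T}+\im{\partial^T},
\]
so every class in $\kr{\partial^T}/\im{\partial^T}$ is represented by a sum $\sum_\alpha f'_\alpha\otimes g'_\alpha$ with $f'_\alpha\in\kr{\delta_1^T}$ and $g'_\alpha\in\kr{\delta_2^T}$. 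Choosing each $f'_\alpha,g'_\alpha$ to be a cleaned representative, every such class admits a representative supported in the top-left $M'\times M'$ block.

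The second step is the pairing argument. For any cocycle $\tilde c\in\kr{\partial^T}$, I would write $\tilde c=c+\partial^T g$ where $c$ is a cleaned representative of the same class. Then $\langle h,c\rangle=0$ because the support of $c$ lies in the top-left block while $h$ vanishes there, and $\langle h,\partial^T g\rangle=\langle\partial h,g\rangle=0$ because $\partial h=0$. Hence $h$ is orthogonal to all of $\kr{\partial^T}$, and the identity $(\kr{\partial^T})^\perp=\im{\partial}$ over $\FF_2$ implies $h\in\im{\partial}$. The cocycle statement is obtained by interchanging $\delta_a$ with $\delta_a^T$ throughout; the hypothesis on minimum distance is symmetric in $X$ and $Z$, so the argument applies verbatim.

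I expect the main subtlety to lie not in the logical structure but in verifying that cleaning can be carried out uniformly over a basis of $\kr{\delta_a^T}/\im{\delta_a^T}$ and that tensor products of cleaned representatives correctly realize the K\"unneth isomorphism on representatives. Both points are routine: the Cleaning Lemma is applied class by class to produce a cleaned basis, and the map $[f']\otimes[g']\mapsto[f'\otimes g']$ is precisely the K\"unneth correspondence at the level of representatives, so no further structural input is required.
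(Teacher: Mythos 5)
Your proposal does not prove the statement at hand. The statement is the Cleaning Lemma itself: for an \emph{arbitrary} stabilizer code of distance $d$, any logical operator $P$, and any set $S$ of fewer than $d$ qubits, one must exhibit a logical operator $P'$ equivalent to $P$ modulo stabilizers that acts trivially on $S$. Your very first step is ``apply Lemma~\ref{lemma:clean} to each input code,'' i.e.\ you assume the Cleaning Lemma as given and then use it, together with the K\"unneth formula and the cycle/cocycle pairing, to show that a cycle with vanishing reduced matrix is a boundary. What you have written is essentially the paper's proof of Lemma~\ref{lemma:1} (the ``no vanishing reduced matrices'' statement in Section~\ref{first}), not a proof of the Cleaning Lemma; as a proof of the Cleaning Lemma it is circular, and nothing in your argument ever engages with the actual task of modifying $P$ by a product of stabilizers to remove its support from $S$. (For reference, the paper does not reprove the Cleaning Lemma either; it cites Bravyi--Terhal, Ref.~\onlinecite{BT09}.)

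To actually prove the lemma you need an argument about a single stabilizer code, with no product structure in sight. The standard route (that of Ref.~\onlinecite{BT09}) is: since $|S|<d$, no nontrivial logical operator can be supported entirely on $S$, i.e.\ $S$ is a ``correctable'' region; one then runs a counting/symplectic-form argument over the restriction of the Pauli group to $S$ --- the group of Pauli operators on $S$ that commute with all stabilizers is generated, modulo phases, by restrictions of stabilizers to $S$, precisely because $S$ supports no logical operator --- which lets one multiply $P$ by a suitable stabilizer so that the product acts as the identity on $S$. None of the machinery you invoke (K\"unneth, reduced matrices, the pairing $\langle h,\bar h_1^i\otimes\bar h_2^j\rangle$) is relevant to that argument, so the proposal as written leaves the claimed statement entirely unproved.
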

Now we can easily prove Lemma~\ref{lemma:1}.
\begin{proof}
Consider any non-trivial co-cycle $\bar{h}_a\in \kr{\delta_a^T}\backslash \im{\delta_a^T}$. Note that  $\bar{h}_a$
represents a logical operator of the $a$-th input code.
 Let $S=\{M'+1,M'+2,\ldots,M\}$.
Since the size of $S$ is less than the code distance, Cleaning Lemma
guarantees  that  there exists
a trivial co-cycle $\bar{\omega}_a\in \im{\delta_a^T}$ such that $\bar{h}_a+\bar{\omega}_i$ has support only
on the interval $[1,M']$. 
Thus we can choose  a basis set of non-trivial  co-cycles
\begin{equation}
\label{clean1}
\kr{\delta_a^T}=\spn{(\bar{h}_a^1,\bar{h}_a^2, ,\ldots,\bar{h}_a^H)}+\im{\delta_a^T} 
\end{equation}
such that $\bar{h}_a^i$ have support only
on the interval $[1,M']$. Let us now choose  basis sets of non-trivial cycles 
dual to the ones defined in Eq.~(\ref{clean1}), that is, 
\begin{equation}
\label{clean2}
\kr{\delta_a}=\spn{(h_a^1,h_a^2,\ldots,h_a^H)}+\im{\delta_a}
\end{equation}
such that 
\begin{equation}
\label{clean3}
(\bar{h}_a^i,h_a^j)=\delta_{i,j}.
\end{equation}
Here $(f,g)=\sum_{p=1}^M f_p g_p$ is the binary inner product between
vectors $f,g$. Applying K\"unneth formula Eq.~(\ref{Kun1}) to $\partial$ and $\partial^T$ one gets
\begin{equation}
\label{clean4}
\kr{\partial}=\spn{\{ h_1^i \otimes h_2^j, \quad 1\le i,j\le H\}} + \im{\partial}
\end{equation}
and
\begin{equation}
\label{clean5}
\kr{\partial^T} = \spn{\{ \bar{h}_1^i \otimes \bar{h}_2^j, \quad 1\le i,j\le H\}} + \im{\partial^T}.
\end{equation}
Suppose now that $h\in \kr{\partial}$ is a cycle with vanishing reduced matrix. 
Using Eq.~(\ref{clean4}), one can write $h$ as
\begin{equation}
\label{clean6}
h=\sum_{i,j=1}^H x_{i,j}\, h_1^i \otimes h_2^j + \omega, 
\end{equation}
for some $\omega\in \im{\partial}$ and some coefficients $x_{i,j}\in \{0,1\}$. 
Since $(\omega,\bar{h}_1^i \otimes \bar{h}_2^j)=0$ for all $i,j$, 
the duality Eq.~(\ref{clean3}) implies that $x_{i,j}=(h,\bar{h}_1^i \otimes \bar{h}_2^j)$.
However, since  $\bar{h}_1^i \otimes \bar{h}_2^j$ has support only on the reduced matrix
and $h$ has vanishing reduced matrix, $x_{i,j}=0$ for all $i,j$. 
This shows that any cycle with vanishing reduced matrix must be trivial. 
\end{proof}

\subsection{Counting Reduced Cycles}

\label{sec:red}

In this subsection we consider a fixed reduced matrix formed by the first $M'$ rows and columns. We say that an $M'\times M'$
matrix $h$ is a {\em reduced cycle} if there exists a full cycle $g\in \kr{\partial}$ such that $g$ contains $h$
in the first $M'$ rows and columns. Let $\Gamma(R)$ be the number of reduced cycles $h$ such that $h$ has rank $R$.
The main goal of this subsection is to derive an upper bound on $\Gamma(R)$. To this end we define a
reduced boundary operator $\partial'$ acting on a properly defined coarse-grained space.
We show that the task of counting  reduced cycles with a given rank is closely related to counting 
matrices in $\kr{\partial'}$ with a given rank.
\begin{definition}
\label{def:good}
A boundary operator $\delta$ is called good iff no non-zero vector in $\kr{\delta}$ 
has support on the last $M-M'$ coordinates. 
\end{definition}
The main result of the subsection is the following.
\begin{theorem}
\label{thm:count}
Suppose the boundary operators $\delta_1,\delta_2$ are good.
Suppose also that $\delta_a$ have homological dimension $H$. 
Let $\Gamma(R)$ be the number of reduced cycles with rank $R$.
Then
\begin{equation}
\label{count:main1}
\Gamma(R)\le O(1)\cdot 2^{(M+H)R - R^2} \quad \mbox{if} \quad R\le H,
\end{equation}
and
\begin{equation}
\label{count:main2}
\Gamma(R)\le O(1)\cdot 2^{(M+H/2)R - R^2/2} \quad \mbox{if} \quad R\ge H.
\end{equation}
Furthermore, $\Gamma(R)$ does not depend on $\delta_a$ as long as $\delta_a$ are good. 
\end{theorem}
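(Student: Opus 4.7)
The plan is to (i) reduce the problem to a canonical form via rank-preserving gauge transformations, which both proves the $\delta_a$-independence claim and identifies the reduced-cycle space $\mathcal R$ explicitly, and then (ii) count rank-$R$ matrices in $\mathcal R$ by comparing with a suitable coarse-grained complex.

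For (i), goodness makes $P|_{\kr \delta_a}$ injective with image $W_a\subseteq\FF_2^{M'}$ of dimension $L+H$, so $\kr \delta_a$ is the graph of a linear map $T_a\colon W_a\to\FF_2^{M-M'}$. The block-lower-triangular basis change $V=\bigl(\begin{smallmatrix}I&0\\-\tilde T_a&I\end{smallmatrix}\bigr)$ satisfies $PV=P$ and $V^TP^T=P^T$, and therefore leaves the reduced matrices $g_{ss}$ pointwise invariant while it brings $\kr\delta_a$ to $W_a\oplus 0$. Because then $\im\delta_a\subseteq \kr\delta_a$ lies in $W_a\oplus 0$, the bottom block-row of the transformed $\delta_a$ vanishes, giving the canonical form $\delta_a=\bigl(\begin{smallmatrix}A_a&B_a\\0&0\end{smallmatrix}\bigr)$ with $A_a^2=0$, $A_aB_a=0$, $B_a$ injective, and $\im A_a\cap \im B_a=0$. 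A further block-diagonal gauge acts invertibly on reduced matrices (hence preserves the rank distribution), and it can be used to normalize $A_a,B_a$ to a form depending only on the dimensions $L,H,M,M'$; this establishes the $\delta_a$-independence assertion.

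For (ii), I work in a basis adapted to the decomposition $\FF_2^{M'}=\alpha\oplus\beta\oplus\gamma\oplus\omega$ of sizes $p=M'-L-H$, $q=M-M'$, $H$, $p$, where $\alpha=\im A_a$, $\beta=\im B_a$, $\gamma$ is a complement of $\im A_a\oplus\im B_a$ inside $W_a$, and $\omega$ is a complement of $W_a$ inside $\FF_2^{M'}$. A block-by-block analysis of the cycle equations $\partial g=0$ forces $g_{\ell\ell}=0$ and constrains $X=g_{ss}$ to satisfy precisely $X_{\omega\omega}=0$, $X_{\omega\gamma}=0$, $X_{\gamma\omega}=0$, and $X_{\omega\alpha}=X_{\alpha\omega}$ (using the identification $\alpha\leftrightarrow\omega$ induced by $A_a$). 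Thus $\mathcal R$ is the explicit linear subspace of $\FF_2^{M'\times M'}$ determined by these constraints, of dimension $(L+H)^2+p^2+2pq$, and contains the unconstrained inner block $W_1\otimes W_2\cong \FF_2^{(L+H)\times(L+H)}$. Taking the coarse-grained complex to be $\calC_a'=W_a\cong\FF_2^{L+H}$ with trivial boundary $\partial'=0$ makes $\kr\partial'=W_1\otimes W_2$ exactly the inner block. For $R\le H$ the inner block alone realizes rank $R$, and the standard Gaussian-binomial bound on rank-$R$ matrices in an $(L+H)\times(L+H)$ space yields $\Gamma(R)\le O(1)\cdot 2^{2(L+H)R-R^2}=O(1)\cdot 2^{(M+H)R-R^2}$, which is (\ref{count:main1}). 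For $R\ge H$ the $\omega$-row/col must contribute additional rank; the symmetry $X_{\omega\alpha}=X_{\alpha\omega}$ reduces the number of free parameters in the $\omega$-block from $2p^2$ to $p^2$ and introduces a halving in the leading quadratic coefficient, producing the $R^2/2$ exponent of (\ref{count:main2}).

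The main obstacle is making the rank correspondence quantitatively tight in the second regime: one must sharply bound $\rnk X$ in terms of the ranks of the inner block and of the $\omega$-contributions while respecting the $X_{\omega\alpha}=X_{\alpha\omega}$ symmetry. I expect that a Schur-complement-type decomposition of $X$, combined with a combinatorial count of ``symmetric'' rank-$R$ extensions of a given inner block (analogous to counting rank-$R$ symmetric matrices over $\FF_2$, whose count carries a halved exponent $R^2/2$), will produce the precise leading terms in (\ref{count:main2}).
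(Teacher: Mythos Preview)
Your reduction in part (i) is essentially correct and parallels the paper's strategy: the block-lower-triangular gauge preserves reduced matrices pointwise, and the subsequent block-diagonal gauge acts by invertible left/right multiplication on reduced matrices, hence preserves the rank distribution. Your explicit description of $\mathcal R$ via the constraints $X_{\omega\omega}=X_{\omega\gamma}=X_{\gamma\omega}=0$ and $X_{\omega\alpha}=X_{\alpha\omega}$ is also correct, and it is equivalent to the paper's characterization (their Lemma~7) that a reduced matrix is a reduced cycle iff its image in the quotient $\calV'_1\otimes\calV'_2$ lies in $\kr\partial'$.

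The genuine gap is in your counting. Your ``coarse-grained complex'' with trivial boundary $\partial'=0$ on $W_a\cong\FF_2^{L+H}$ is the wrong object. The correct coarse-grained space is the quotient $\calV_a'=\calV/\calS_a^>$, which in your basis is $\alpha\oplus\gamma\oplus\omega$ of dimension $K=2M'-M$, and the induced boundary operator $\delta_a'$ is \emph{not} zero: it maps $\omega$ isomorphically to $\alpha$ and kills $\alpha\oplus\gamma$, so it has rank $p$ and homological dimension $H$. Consequently your sentence ``For $R\le H$ the inner block alone realizes rank $R$, and the standard Gaussian-binomial bound \ldots\ yields $\Gamma(R)\le O(1)\cdot 2^{2(L+H)R-R^2}$'' is unjustified: the inner block $W_1\otimes W_2\subset\mathcal R$ gives you a \emph{lower} bound on $\Gamma(R)$, not an upper bound, since $\mathcal R$ contains many rank-$R$ matrices with nonzero $\omega$-rows/columns (e.g.\ nonzero $X_{\omega\beta}$, $X_{\beta\omega}$, or $X_{\alpha\omega}=X_{\omega\alpha}\ne 0$). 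Nothing in your argument bounds those from above.

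What actually produces the upper bound is a two-step factorization that you are missing. The paper observes that passing to the quotient $\calV'_1\otimes\calV'_2$ kills exactly the unconstrained $\beta$-rows and $\beta$-columns, so that a reduced cycle $X$ of rank $R$ is an \emph{extension} (adding $M'-K$ rows and columns freely) of some $K\times K$ matrix $h\in\kr\partial'$ of rank $r\le R$. This gives the exact formula
\[
\Gamma(R)=\sum_{r=0}^{\min\{K,R\}} Z(r)\,E_{K,r}^{M',R},
\]
where $Z(r)$ counts rank-$r$ matrices in $\kr\partial'$ and $E_{K,r}^{M',R}$ counts rank-$R$ extensions of a rank-$r$ $K\times K$ matrix to $M'\times M'$. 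The extension count $E_{K,r}^{M',R}$ is handled by a separate combinatorial lemma, and $Z(r)$ is computed by putting $\delta_a'$ in canonical form (your symmetry $X_{\omega\alpha}=X_{\alpha\omega}$ is exactly the off-diagonal block ``$F$'' in their normal form). The case split $R\le H$ versus $R\ge H$, and the halved exponent $R^2/2$ in the second regime, emerge from where the sum over $r$ and the sum over $\rnk F$ peak---not from a Schur complement or a symmetric-matrix count as you suggest. Your intuition that the symmetry halves the quadratic exponent points at the right phenomenon, but the mechanism is the summation analysis, and you need the extension-count machinery to make either regime rigorous.
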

In the rest of this subsection we prove the theorem.
 Let $\calC=\spn{\{ 1,2,\ldots,M\}}$ be the full $M$-dimensional binary space.
We begin by defining several subspaces of $\calC$ and linear operators acting on those subspaces.
First, decompose 
\begin{equation}
\label{VV}
\calC=\calV\oplus \calV^>, \quad \calV=\spn{\{ j\, : \, 1\le j\le M' \}}, \quad \calV^>=\spn{\{ j\, : \, M'<j\le M\}}.
\end{equation}
Let $W$ and $W^>$ be projectors onto the sectors $\calV$ and $\calV^>$ in Eq.~(\ref{VV}).
Here by a projector we mean a linear operator on $\calC$ that sends all vectors in one sector
to zero and acts as the identity on the other sector. Thus $W+W^>=I$ is the identity operator on $\calC$.

Let $\delta\, :\, \calC\to \calC$ be the boundary operator describing one of the two input codes.
Recall that $\delta^2=0$.
Define subspaces 
\[
\calS^{>}=W \delta(\calV^>) \subseteq \calV \quad \mbox{and} \quad
\calV'=\calV/\calS^{>}.
\]
By definition, vectors of the quotient space $\calV'$  are
cosets $x+\calS^>$, where $x\in \calV$.  The following lemma defines a {\em reduced boundary operator} $\delta'$
which will play the key role in what follows.
\begin{lemma}
There exists a unique linear operator
$\delta'\, : \, \calV'\to \calV'$ such that  $(\delta')^2=0$ and 
\begin{equation}
\label{reduced}
\delta'(x+\calS^>)=W\delta(x) +S^> \quad \mbox{for any $x\in \calV$}.
\end{equation}
\end{lemma}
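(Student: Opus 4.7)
The plan is to verify three things in order: well-definedness of the formula on cosets of $\calS^>$, uniqueness of $\delta'$, and the nilpotency $(\delta')^2=0$. Uniqueness is immediate because the prescribed formula specifies $\delta'$ on every element of $\calV'$, so I would only mention it in one line. The real content is well-definedness and nilpotency, and both rest on the same elementary manipulation involving $\delta^2=0$ and the decomposition $I=W+W^>$.

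For well-definedness I would take two representatives $x,x'\in\calV$ of the same coset, so $x-x'=s$ for some $s\in\calS^>$, and show $W\delta(s)\in\calS^>$. By definition of $\calS^>$, write $s=W\delta(u)$ with $u\in\calV^>$. Then $s=\delta(u)+W^>\delta(u)$, and applying $\delta$ gives $\delta(s)=\delta^2(u)+\delta(W^>\delta(u))=\delta(W^>\delta(u))$, using $\delta^2=0$. Projecting with $W$ yields $W\delta(s)=W\delta(W^>\delta(u))\in W\delta(\calV^>)=\calS^>$, as desired. This shows $\delta'(x+\calS^>)$ does not depend on the choice of representative, so the formula defines a linear map $\calV'\to\calV'$.

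For the nilpotency step I would run essentially the same calculation one layer up. Take any $x\in\calV$ and set $y=W\delta(x)\in\calV$; the goal is $\delta'(y+\calS^>)=W\delta(y)+\calS^>=\calS^>$, i.e.\ $W\delta(y)\in\calS^>$. Writing $\delta(x)=y+W^>\delta(x)$ and using $\delta^2(x)=0$ gives $\delta(y)=\delta(W^>\delta(x))$, so $W\delta(y)=W\delta(W^>\delta(x))\in\calS^>$ again by the definition of $\calS^>$. Therefore $(\delta')^2(x+\calS^>)=0$ in $\calV'$, completing the proof.

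I do not foresee a serious obstacle: both checks are a single application of $\delta^2=0$ combined with the identity $W+W^>=I$, and the definition $\calS^>=W\delta(\calV^>)$ was engineered precisely so that ``the $\calV^>$-contribution $W^>\delta(\cdot)$ that we throw away when we apply $W\delta$ instead of $\delta$'' lands, after one more application of $W\delta$, back inside $\calS^>$. The mildest subtlety is keeping track that we work over $\FF_2$ so signs can be dropped; that is the only place I would be careful in the write-up.
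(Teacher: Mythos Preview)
Your proof is correct and follows essentially the same approach as the paper. The paper packages the key identity once as $\im(W\delta W\delta)\subseteq\calS^>$ (proved via $W\delta W\delta=W\delta(I+W^>)\delta=W\delta W^>\delta$ using $\delta^2=0$) and then reads off both well-definedness and nilpotency from it, whereas you run the identical computation twice inline; the content is the same.
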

\begin{proof}
Let us first show that 
\begin{equation}
\label{tricky}
\im{(W\delta W\delta )}  \subseteq \calS^>.
\end{equation}
Indeed, suppose $x=W\delta W\delta(y)$ for some $y$. 
Then $x=W\delta(I+W)\delta(y)=W\delta W^>\delta(y)\in W\delta(\calV^>)=\calS^>$
which proves Eq.~(\ref{tricky}). To show that Eq.~(\ref{reduced}) indeed defines a linear
operator on $\calV'$ we need to check that the right-hand side of Eq.~(\ref{reduced})
depends only on the coset of $x$. Equivalently, we need to check that $W\delta(\calS^>)\subseteq \calS^>$.
However, this follows from Eq.~(\ref{tricky}) since 
$W\delta(\calS^>)=(W\delta W\delta)(\calV^>)\subseteq \im{W\delta W\delta}$.
Thus $\delta'$ is well-defined. The property
$(\delta')^2=0$ follows trivially from Eq.~(\ref{tricky}).
\end{proof}
We first establish some basic properties of $\delta'$.   Given a vector $h\in \calV$, let
$h'\in \calV'$ be the coset of $h$, that is, $h'=h+\calS^>$.
\begin{lemma}
\label{lemma:basic}
For any vector $g\in \calC$ one has
$(W\delta g)'=\delta' (Wg)'$. Furthermore, 
\begin{equation}
\label{krim}
\kr{\delta'} = \{ (Wg)' \, : \, \delta g\in \calV^> \} \quad \mbox{and} \quad
\im{\delta'}=\{((Wg)'\, : \, g\in \im{\delta} \} .
\end{equation}
\end{lemma}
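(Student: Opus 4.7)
The plan is to unpack the three claims by repeatedly exploiting the two layers of quotient that enter the definition of $\delta'$: the codomain is $\calV/\calS^>$, so elements of $\calV$ may be shifted by anything in $\calS^>$, and moreover for any $g \in \calC$ the coset $(Wg)'$ is unchanged by shifting $g$ by an element of $\calV^>$. The two algebraic facts I would lean on, both already established in the construction of $\delta'$, are $W\delta(\calV^>) = \calS^>$ by definition and $W\delta(\calS^>) \subseteq \calS^>$ from Eq.~(\ref{tricky}).

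For the functorial identity $(W\delta g)' = \delta'(Wg)'$, I would decompose $g = Wg + W^>g$. By linearity $W\delta g = W\delta(Wg) + W\delta(W^>g)$; the second term lies in $W\delta(\calV^>) = \calS^>$, so it dies in the quotient, and the first term is $\delta'(Wg)'$ directly from Eq.~(\ref{reduced}) with $x = Wg \in \calV$. For the kernel characterization, the inclusion $\supseteq$ is then immediate: if $\delta g \in \calV^>$ then $W\delta g = 0$ and hence $\delta'(Wg)' = 0$. For $\subseteq$, assume $\delta'(Wg)' = 0$, so $W\delta g \in \calS^> = W\delta(\calV^>)$, and write $W\delta g = W\delta v$ with $v \in \calV^>$. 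Replacing $g$ by $g - v$ does not change $Wg$ (since $Wv = 0$) but forces $W\delta(g - v) = 0$, that is, $\delta(g-v) \in \calV^>$; this $g - v$ is the representative promised on the right-hand side of Eq.~(\ref{krim}).

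For the image, the inclusion $\supseteq$ follows by writing $g = \delta h$ and applying the first identity to get $(Wg)' = \delta'(Wh)' \in \im{\delta'}$. For $\subseteq$, a generic element of $\im{\delta'}$ has the form $\delta'(x+\calS^>) = (W\delta x)'$ with $x \in \calV$, so the equation $(Wg)' = (W\delta x)'$ says $W(g - \delta x) \in \calS^> = W\delta(\calV^>)$. Picking $v \in \calV^>$ with $W(g - \delta x) = W\delta v$ and setting $\tilde g := \delta(x + v) \in \im{\delta}$ yields $W\tilde g = W\delta x + W\delta v = Wg$, hence $(W\tilde g)' = (Wg)'$, again giving the desired representative. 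I do not anticipate a substantive obstacle here: the lemma is essentially bookkeeping about cosets, and every step reduces to the two containments highlighted in the first paragraph. The only thing to keep straight is not to conflate the two quotients in play, namely the codomain quotient by $\calS^>$ and the representative freedom by $\calV^>$ in the domain; this is also the reason the lemma is formulated in terms of arbitrary $g \in \calC$ rather than $x \in \calV$.
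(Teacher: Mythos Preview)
Your proof is correct and follows essentially the same route as the paper: decompose $g=Wg+W^>g$ for the functorial identity, then use that identity together with the freedom to shift representatives by elements of $\calV^>$ for the kernel and image. The one place where you do more work than needed is the $\subseteq$ direction for $\im\delta'$: once you have written a generic element of $\im\delta'$ as $(W\delta x)'$, you are already done by taking $g=\delta x\in\im\delta$; the auxiliary $v\in\calV^>$ and the adjusted $\tilde g=\delta(x+v)$ are unnecessary (the paper simply says this inclusion ``follows trivially'' from the identity $(W\delta g)'=\delta'(Wg)'$).
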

\begin{proof}
Indeed, $(W\delta g)'=W\delta g + \calS^>= W\delta (W+W^>)g + \calS^>=
W\delta Wg + \calS^>=\delta'(Wg)'$.
Here we used the fact that $W\delta W^>g \in W\delta (\calV^>) = \calS^>$.

Let us show that $\kr{\delta'} = \{ (Wg)' \, : \, \delta g\in \calV^> \}$.
Indeed, suppose $\delta' h=0$. Then the coset $h$ has a representative $f\in \calV$
such that $W\delta f \in \calS^>$, that is, $W\delta (f+k)=0$ for some $k\in \calV^>$.
Let $g=f+k$. Then $\delta g \in \calV^>$ and $h=f+\calS^>=Wg+\calS^>$ proving
that $h=(Wg)'$ has the desired form. Conversely, if $\delta g\in \calV^>$ then
$\delta'(Wg)'=(W\delta g)'=0$ since $W\calV^>=0$. 
The second equality in Eq.~(\ref{krim}) follows trivially from the identity
$(W\delta g)'=\delta' (Wg)'$.
\end{proof}
Recall that we define a homological dimension of a boundary operator $\delta$
as $H(\delta)=\dim{(\kr{\delta})}-\dim{(\im{\delta})}$. Below we show that 
 the boundary operators $\delta$ and $\delta'$ have the same homological dimension,
 as long as $\delta$ is good, see Definition~\ref{def:good}.
 Note that the condition of being good can be rephrased as 
\begin{equation}
\label{goodness}
\ker{\delta} \cap \calV^> =0.
\end{equation}
\begin{lemma}
\label{lemma:good}
Suppose a boundary operator $\delta$ is good. Then $\dim{\calV'} = 2M'-M$ and 
\begin{equation}
\label{good}
\dim{(\kr{\delta'})}=  \dim{(\kr{\delta})}-(M-M'),  \quad \dim{(\im{\delta'})}=  \dim{(\im{\delta})}-(M-M').
\end{equation}
\end{lemma}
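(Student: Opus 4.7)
The plan is to reduce everything to two clean observations that follow from goodness combined with $\delta^2=0$: first, that $W\delta$ restricted to $\calV^>$ is injective, and second, that $\im{\delta}\cap \calV^> = 0$. The second is immediate, since $\im{\delta}\subseteq \kr{\delta}$ by $\delta^2=0$, and then goodness $\kr{\delta}\cap \calV^>=0$ gives the claim. For the first, if $v\in \calV^>$ satisfies $W\delta v=0$ then $\delta v\in \calV^>$; but also $\delta v\in \im{\delta}\subseteq \kr{\delta}$, so goodness forces $\delta v=0$, hence $v\in \kr{\delta}\cap \calV^>=0$. From the injectivity of $W\delta$ on $\calV^>$ one reads off $\dim{\calS^>}=\dim{W\delta(\calV^>)}=\dim{\calV^>}=M-M'$, and therefore $\dim{\calV'}=M'-(M-M')=2M'-M$.

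For $\dim{(\kr{\delta'})}$ and $\dim{(\im{\delta'})}$, I would lean on the explicit descriptions given by Lemma~\ref{lemma:basic}: $\kr{\delta'}=\{(Wg)'\,:\,\delta g\in \calV^>\}$ and $\im{\delta'}=\{(Wg)'\,:\, g\in \im{\delta}\}$. Introduce the natural surjections $\phi\colon K\to \kr{\delta'}$ and $\psi\colon \im{\delta}\to \im{\delta'}$, both defined by $g\mapsto (Wg)'$, where $K=\{g\in \calC\,:\,\delta g\in \calV^>\}$. The second observation above collapses $K$ to $\kr{\delta}$: for any $g$, $\delta g\in \im{\delta}\cap \calV^>=0$ forces $\delta g=0$. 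To compute $\kr{\phi}$ and $\kr{\psi}$, write any element $g$ of either kernel as $g=\delta u+w$ with $u\in \calV^>$ (using $Wg\in \calS^>=W\delta(\calV^>)$) and $w=g-\delta u\in \calV^>$. The defining constraint, either $g\in K=\kr{\delta}$ for $\phi$ or $g\in \im{\delta}$ for $\psi$, then places $w$ in $\kr{\delta}\cap \calV^>$ or in $\im{\delta}\cap \calV^>$ respectively, both of which are $0$ by goodness. Hence $w=0$ and $g=\delta u\in \delta(\calV^>)$, showing $\kr{\phi}=\kr{\psi}=\delta(\calV^>)$; this last space has dimension $M-M'$ by the injectivity established above.

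Applying rank--nullity to $\phi$ then gives $\dim{(\kr{\delta'})}=\dim{K}-\dim{\kr{\phi}}=\dim{(\kr{\delta})}-(M-M')$, and applying it to $\psi$ gives $\dim{(\im{\delta'})}=\dim{(\im{\delta})}-(M-M')$, which are exactly the two identities in Eq.~(\ref{good}). The main obstacle is the parallel kernel computation for $\phi$ and $\psi$; everything is funneled through the single structural fact $\im{\delta}\cap \calV^>=0$, so once the right pair of surjections is written down the remaining bookkeeping is routine.
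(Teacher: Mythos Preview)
Your proof is correct and follows essentially the same route as the paper. Both arguments first establish $\dim\calS^>=M-M'$ by showing $W\delta$ is injective on $\calV^>$ (using goodness twice, exactly as you do), and then use the descriptions of $\kr{\delta'}$ and $\im{\delta'}$ from Lemma~\ref{lemma:basic} together with the observation that $\delta g\in\calV^>$ forces $\delta g=0$. The only organizational difference is in the final dimension count: the paper factors the surjection $g\mapsto(Wg)'$ through the intermediate space $W\kr{\delta}$, noting that $W$ is injective on $\kr{\delta}$ and that $\calS^>\subseteq W\kr{\delta}$, whereas you compute the kernel of the composite directly and identify it as $\delta(\calV^>)$. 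These are two bookkeepings of the same rank--nullity computation.
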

\begin{proof}
Since $\dim{\calV'}=M'-\dim{\calS^>}$, it suffices to show that 
$\dim{\calS^>}=M-M'$. By definition, $\calS^>=W\delta(\calV^>)$ 
and thus $\dim{\calS^>}\le \dim{\calV^>}=M-M'$. Suppose
$\dim{\calS^>}< \dim{\calV^>}$. Then there must exist a non-zero vector
$g\in \calV^>$ such that $W\delta(g)=0$. From Eq.~(\ref{goodness})
we infer that $h=\delta(g)\ne 0$ but $Wh=0$, that is, $h\in \calV^>$. This contradicts
to Eq.~(\ref{goodness}) since $\delta h=\delta^2(g)=0$.

 The goodness condition implies that $\delta g\in \calV^>$ is only possible
for $\delta g=0$. Thus the first equality in  Eq.~(\ref{krim}) becomes $\ker{\delta'}=\{ (Wg)' \, : \,  g\in \ker{\delta} \}$.
Noting that $\calS^>\subseteq W(\im{\delta})\subseteq W(\kr{\delta})$ and using Eq.~(\ref{krim}) we arrive at
\[
\dim{(\kr{\delta'})}=\dim{(W\kr{\delta})} - \dim{(\calS^>)} \quad \mbox{and} \quad \dim{(\im{\delta'})}=\dim{(W\im{\delta})} - \dim{(\calS^>)}.
\] 
Using the goodness condition again one can easily show that
$\dim{(W\kr{\delta})}=\dim{(\kr{\delta})}$ and  $\dim{(W\im{\delta})}=\dim{(\im{\delta})}$.
It remains to substitute $\dim{(\calS^>)}=M-M'$.
\end{proof}
The above lemma implies that $H(\delta')=H(\delta)$ whenever $\delta$ is good.
From now on we consider a pair of good boundary operators $\delta_1,\delta_2\, :\, \calC\to \calC$ such that 
\[
\dim{(\im{\delta_a})}=L \quad \mbox{and}\quad \dim{(\kr{\delta_a})}=L+H, \quad \mbox{where} \quad M=2L+H.
\]
Define subspaces $\calS_a^>$ and $\calV_a'$ as above for each boundary operator $\delta_a$. 
Let $\delta_a' \, : \, \calV_a'\to \calV_a'$ be the corresponding reduced boundary operator. 
By Lemma~\ref{lemma:good} we have
\begin{equation}
\label{dimred}
\dim{\calV_a'}=2M'-M\equiv K.
\end{equation}
Consider a tensor product space $\calC\otimes \calC$ and define 
\begin{equation}
\label{partial'}
\partial'=\delta_1'\otimes I + I \otimes \delta_2'
\end{equation}
acting on the space  $\calV_1'\otimes \calV_2'$. Note that 
\begin{equation}
\label{quotient12}
\calV_1'\otimes \calV_2' \cong (\calV\otimes \calV)/\calS_{12}^>,
\quad \mbox{where} \quad  
\calS_{12}^>=\calS_1^>\otimes \calV + \calV\otimes \calS_2^>.
\end{equation}
Given any vector $h\in \calV\otimes \calV$,  let $h'\in \calV_1'\otimes \calV_2'$
be the coset $h+\calS_{12}^>$.  One can easily check that $(f\otimes g)'=f'\otimes g'$ for any $f,g\in \calV$.
The lemma below shows that 
a coset is a cycle for the reduced boundary operator $\partial'$ iff
it has a representative which is a reduced matrix of a cycle for $\partial$.
\begin{lemma}
\label{lemma:basic2}
Suppose $\delta_a$ are good. Then 
\begin{equation}
\label{basic2}
\kr{\partial'}=\{ ((W\otimes W)g)' \, : \, g\in \kr{\partial} \} 
\quad \mbox{and} \quad
\im{\partial'}=\{ ((W\otimes W)g )'\, :\, g\in \im{\partial}\}.
\end{equation}
\end{lemma}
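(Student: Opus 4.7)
The plan is to introduce the natural projection $\pi\colon \calC\otimes \calC \to \calV_1'\otimes \calV_2'$ defined by $\pi(g)=((W\otimes W)g)'$ and to establish (i) the intertwining relation $\pi\partial = \partial'\pi$, (ii) $\pi(\im{\partial}) = \im{\partial'}$, and (iii) $\pi(\kr{\partial}) = \kr{\partial'}$. The intertwining relation is a direct bilinear computation: writing any $g\in \calC\otimes \calC$ as $g=\sum_i u_i\otimes v_i$ and applying the one-complex identity $(W\delta_a h)'=\delta_a'(Wh)'$ from Lemma~\ref{lemma:basic} tensor-factor by tensor-factor, we obtain $\pi\partial g = (\delta_1'\otimes I + I\otimes \delta_2')\pi g = \partial'\pi g$. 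This immediately yields the inclusions $\pi(\im{\partial})\subseteq \im{\partial'}$ and $\pi(\kr{\partial})\subseteq \kr{\partial'}$.

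The reverse inclusion for images is the easy half and is handled by explicit lifting. Any $x\in \im{\partial'}$ has the form $x=\partial' y$ for some $y\in \calV_1'\otimes \calV_2'$; pick any representative $\tilde y\in \calV\otimes \calV$ with $\tilde y'=y$. Since $\tilde y\in \calV\otimes \calV$ we have $(W\otimes W)\tilde y=\tilde y$, hence $\pi\tilde y=y$. Then $x=\partial'\pi\tilde y=\pi\partial\tilde y$, exhibiting $x$ as an element of $\pi(\im{\partial})$. No use of goodness is needed here.

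The reverse inclusion for kernels is the main obstacle, since attempting to lift a cycle of $\partial'$ directly back to a cycle of $\partial$ would force one to solve for the missing $\calV^>$-blocks while respecting $\partial^2=0$, which requires delicate manipulation of the four-block decomposition of $\delta_a$ with respect to $\calV\oplus \calV^>$. The plan is to bypass this by invoking the K\"unneth formula (Lemma~\ref{lemma:Kun}) for both the full and the reduced product complex, giving $\kr{\partial}=\kr{\delta_1}\otimes \kr{\delta_2}+\im{\partial}$ and $\kr{\partial'}=\kr{\delta_1'}\otimes \kr{\delta_2'}+\im{\partial'}$. Applying the linear map $\pi$ and invoking the image surjectivity from the previous paragraph, the problem reduces to proving $\pi(\kr{\delta_1}\otimes \kr{\delta_2})=\kr{\delta_1'}\otimes \kr{\delta_2'}$. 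Because $\pi(u\otimes v)=(Wu)'\otimes (Wv)'$, this factors as two copies of the single-complex claim $\{(Wu)'\,:\, u\in\kr{\delta_a}\}=\kr{\delta_a'}$, which is precisely the content of Lemma~\ref{lemma:basic} once the goodness condition Eq.~(\ref{goodness}) is used to collapse the set $\{g\,:\,\delta_a g\in \calV^>\}$ to $\kr{\delta_a}$. Combining gives $\pi(\kr{\partial})=\kr{\delta_1'}\otimes \kr{\delta_2'}+\im{\partial'}=\kr{\partial'}$, completing the proof.
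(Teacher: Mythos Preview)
Your proof is correct and follows essentially the same route as the paper's: you establish the intertwining identity $\pi\partial=\partial'\pi$ on simple tensors via Lemma~\ref{lemma:basic} (the paper's Eq.~(\ref{WW})), derive the easy inclusions for both image and kernel from it, and then handle the hard inclusion $\kr{\partial'}\subseteq\pi(\kr{\partial})$ by invoking the K\"unneth decomposition of $\kr{\partial'}$ and reducing to the single-complex statement $\kr{\delta_a'}=\{(Wg)':g\in\kr{\delta_a}\}$ under goodness. The only cosmetic difference is that you package the argument via the map $\pi$ and treat the image surjectivity by an explicit lift, whereas the paper reads both image inclusions directly off the intertwining identity; the substance is identical.
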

\begin{proof}
Let us first show that $\partial' ((W\otimes W) h)' = ((W\otimes W)\partial h)'$ for any
$h\in \calC\otimes \calC$. 
By linearity, it suffices to consider product vectors $h=g_1\otimes g_2$. 
Then 
\begin{eqnarray}
((W\otimes W) \partial h)' &=& (W\delta_1 g_1)'\otimes (Wg_2)' + (Wg_1)'\otimes (W\delta_2 g_2)'  \nonumber \\
&=& \delta_1' (Wg_1)' \otimes (Wg_2)' + 
(Wg_1)'\otimes \delta_2'(Wg_2)'  \nonumber \\
&=& \partial' ((Wg_1)'\otimes (Wg_2)')= \partial' ((W\otimes W)h)'. \label{WW}
\end{eqnarray}
Here the second equality uses Lemma~\ref{lemma:basic}.
This immediately proves the second equality in Eq.~(\ref{basic2})
and the  inclusion $\kr{\partial'}\supseteq \{ ((W\otimes W)g)' \, : \, g\in \kr{\partial} \}$.

It remains to prove $\kr{\partial'}\subseteq \{ ((W\otimes W)g)' \, : \, g\in \kr{\partial} \}$.
Suppose $\partial' f=0$ for some coset $f\in \calV_1'\otimes \calV_2'$. 
We need to show that $f$ has a representative $g$ which is a reduced matrix of a cycle.
By K\"unneth formula, $\kr{\partial'}=\im{\partial'} +\kr{\delta_1'}\otimes \kr{\delta_2'}$.
By linearity, it suffices to consider two cases. {\em Case~1:} $f\in \im{\partial'}$.
Then the second equality in  Eq.~(\ref{basic2}) implies that $f$ has a representative which is a reduced matrix
of a boundary (and thus a cycle). 
 {\em Case~2:} $f\in \kr{\delta_1'}\otimes \kr{\delta_2'}$.
 Since $\delta_a$ are good, Lemma~\ref{lemma:basic} implies that 
 $\kr{\delta_a'}=\{ (Wg)'\, : \, g\in \kr{\delta_a}\}$. Hence $f$ has a
 representative $g=(W\otimes W) g_{full}$, where $g_{full}\in \kr{\delta_1}\otimes \kr{\delta_2}$.
 Clearly, $f_{full}$ is a cycle and we are done. 
\end{proof}

The first equality in Eq.~(\ref{basic2}) implies that the set of rank-$R$ matrices
of size $M'\times M'$ which are reduced matrices
of cycles  coincides with the set of rank-$R$ matrices $g\in \calV\otimes \calV$
such that the coset $g'$ is a cycle for the reduced boundary operator. Thus 
\begin{equation}
\label{count1}
\Gamma(R)=\sum_{h\in \ker{\partial'}} \#\{ g\in \calV\otimes \calV \, : \, \rnk{(g)}=R \quad \mbox{and} \quad g'=h\}.
\end{equation}
Choose any basis set of cosets $h_a^1,\ldots,h_a^K\in \calV_a'$ and let $g_a^i\in \calV$
be any fixed vector in the coset $h_a^i$. One can always choose a basis of $\calV$ such that 
the first $K$ basis vectors are $g_a^1,\ldots,g_a^K$  and the last $M'-K=M-M'$ basis vectors belong to $\calS_a^>$.  Then 
any vector $g\in \calV\otimes \calV$ in the coset $h$ can be regarded
as an $M'\times M'$ matrix that contains a given $K\times K$ matrix $h$ in the first $K$ rows and
columns.  
\begin{definition}
Let $X$ and $Y$ be arbitrary matrices of size $a\times a$ and $A\times A$ respectively. 
We will say that $Y$ is an extension of $X$ iff $Y$ contains $X$ in the first $a$ rows and columns. 
Let $E_{a,r}^{A,R}$ be the number of rank-$R$ extensions $Y$ of a given rank-$r$ matrix $X$.
\end{definition}
Note that the number of rank-$R$ matrices $Y$ extending a given matrix $X$ is invariant under a
transformation $X\to UXV$, where $U$, $V$ are arbitrary invertible matrices. This means that the number
of rank-$R$ extensions $Y$ depends only on the rank of $X$ and thus the coefficient $E_{a,r}^{A,R}$ is well-defined.
In Appendix~A we prove that 
\begin{equation}
\label{E}
E_{a,r}^{A,R}\le O(1)\cdot 2^{(2A-a)R - ar -R^2 + (r+R)^2/4}
\end{equation}
and
\begin{equation}
\label{Esimple}
E^{A,R}\equiv E_{0,0}^{A,R}=O(1) \cdot 2^{2AR-R^2}.
\end{equation}
Note that $E^{A,R}$ is the total number of rank-$R$ matrices of size $A\times A$.
Using these notations, Eq.~(\ref{count1}) can be written as
\begin{equation}
\label{count1'}
\Gamma(R)=\sum_{r=0}^{\min{\{K,R\}}}  \#\{ h\in \ker{\partial'} \, : \, \rnk{(h)}=r \} \cdot E_{K,r}^{M',R}.
\end{equation}

The remaining step is to compute 
the number of matrices $h'\in \ker{\partial'}$ with a given rank $r$.
This is done in the next lemma; while the lemma is stated in terms of $\partial$, we will
apply it to the reduced boundary operator $\partial'$, using $\dim{(\im{\delta'_a})}=L-(M-M')$.
\begin{lemma}
\label{lemma:count2}
Let $\delta_1,\delta_2$ be boundary operators with  $\dim{(\im{\delta_a})}=L$ and $\dim{(\kr{\delta_a})}=L+ H$.
Define $\partial=\delta_1\otimes I + I \otimes \delta_2$
and let $Z(r)$ be the number of rank-$r$
matrices in $\kr{\partial}$. Then $Z(r)$ is only a function of $r,L,H$ and 
\begin{equation}
\label{Z(R)}
Z(r)\le O(1)\cdot 2^{2(H+L)r-r^2} \cdot \sum_{f=0}^{\min{(r/2,L)}} 
2^{-2f^2 + 2f(r-H)}.
\end{equation}
\end{lemma}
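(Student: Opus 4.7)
The plan is to invoke the canonical form of Lemma~\ref{lemma:canonical} to reduce the count to a combinatorial question, and then to exploit the fact that in any cycle the $L\times L$ block $D$ appears twice; this doubling is what produces the sharp exponent $-2f^2$ in the bound.

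Since $\delta_a \mapsto U_a \delta_a U_a^{-1}$ induces $h \mapsto U_1 h U_2^T$ on $\calC\otimes\calC$, which preserves rank and bijects $\kr{\partial}$ with itself, Lemma~\ref{lemma:canonical} lets me assume $\delta_1=\delta_2=\delta_0$ in canonical form; then $Z(r)$ manifestly depends only on $r,L,H$. Writing the cycle condition $\delta_0 h + h\delta_0^T = 0$ in $3\times 3$ block form with block sizes $H,L,L$ and inspecting entries yields the characterization
\[
h = \begin{pmatrix} A & B & 0 \\ C & E & D \\ 0 & D & 0 \end{pmatrix},
\]
with $A,B,C,D,E$ independently arbitrary of sizes $H\times H$, $H\times L$, $L\times H$, $L\times L$, $L\times L$. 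The key feature is that $D$ occupies both block positions $(2,3)$ and $(3,2)$.

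Next I would condition on $f = \rnk D$. For each such $D$ pick invertible $P,Q$ with $PDQ = \begin{pmatrix} I_f & 0 \\ 0 & 0 \end{pmatrix}$, and conjugate $h$ by $\mathrm{blkdiag}(I_H,P,P)$ on the left and $\mathrm{blkdiag}(I_H,Q,Q)$ on the right; this preserves $\rnk h$ and sends $(A,B,C,E)$ bijectively to $(A,BQ,PC,PEQ)$. Refining to five row/column blocks of sizes $H,f,L-f,f,L-f$, the conjugated matrix contains two copies of $I_f$ at block positions $(2,4)$ and $(4,2)$, while row block $5$ is identically zero. Using row block $4$ to clear column block $2$, and then column block $4$ to clear row block $2$, decouples the matrix into the two $I_f$ blocks (which jointly contribute $2f$ to the rank) and a residual block, giving the rank identity
\[
\rnk h = 2f + \rnk M, \qquad M = \begin{pmatrix} A & B_2 \\ C_2 & E_4 \end{pmatrix},
\]
where $B_2$ is the last $L-f$ columns of $BQ$, $C_2$ the last $L-f$ rows of $PC$, and $E_4$ the bottom-right $(L-f)\times(L-f)$ block of $PEQ$; thus $M$ is square of size $H+L-f$.

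For each $f$ the map $(A,B,C,E)\mapsto M$ is surjective onto all $(H+L-f)\times(H+L-f)$ matrices, with each fiber of size $2^{2Hf+2Lf-f^2}$ (the dimension of the coordinates of $B,C,E$ that do not appear in $M$). Summing over $D$ and $M$ gives
\[
Z(r) = \sum_{f=0}^{\min(r/2,\,L)} E^{L,f}\cdot 2^{2Hf+2Lf-f^2}\cdot E^{H+L-f,\,r-2f},
\]
which depends only on $r,L,H$; substituting $E^{A,R}\le O(1)\cdot 2^{2AR-R^2}$ from Eq.~(\ref{Esimple}) for both factors and simplifying collapses the $f$-th exponent to $2(H+L)r - r^2 - 2f^2 + 2f(r-H)$, yielding the claimed bound. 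The main obstacle is the rank identity $\rnk h = 2f + \rnk M$: a parameterization that overlooks the repeated occurrence of $D$ yields only $\rnk h = f + \rnk(\cdots)$ and a strictly weaker bound with $-f^2$ in place of $-2f^2$, which would not suffice for the distance analysis of Section~\ref{sec:distance}. Correctly organizing the reduction around both copies of $D$ simultaneously is the core step.
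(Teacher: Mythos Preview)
Your proof is correct and follows essentially the same route as the paper's: reduce to canonical $\delta_0$ via Lemma~\ref{lemma:canonical}, read off the block form of cycles (your $E,D$ are the paper's $D,F$), condition on the rank $f$ of the repeated off-diagonal block, and count the residual $(H+L-f)\times(H+L-f)$ matrix, arriving at the identical exact formula $Z(r)=\sum_f E^{L,f}\,2^{2(H+L)f-f^2}\,E^{H+L-f,r-2f}$ before substituting Eq.~(\ref{Esimple}). Your derivation of $\rnk h=2f+\rnk M$ via explicit row/column elimination is a bit more detailed than the paper's one-line ``removing the $f$ rows and $f$ columns drops the rank by $2f$,'' but the content is the same. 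One tiny wording slip: the conjugation does not biject $\kr\partial$ ``with itself'' but with the kernel for the transformed boundary operators; the conclusion (rank-preserving bijection, hence $Z(r)$ depends only on $r,L,H$) is unaffected.
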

\begin{proof}
By Lemma~\ref{lemma:canonical}, there exist invertible matrices $U_a$
such that a transformation $\delta_a\to U_a\delta_a U_a^{-1}$
brings $\delta_a$ into  the canonical form
 \begin{equation}
\label{bop}
\delta_a=\left[ \begin{array}{ccc} 0 & 0 & 0 \\
0 & 0 & I \\
0 & 0 & 0\\
\end{array} \right],
\end{equation}
where rows and columns are grouped into blocks of size $H,L,L$.
Let $U=U_1\otimes U_2$.
Noting that $(U_1\delta_1 U_1^{-1}) \otimes I + I\otimes (U_2\delta_2 U_2^{-1})=U\partial U^{-1}$ and 
 $\kr{(U \partial U^{-1})} = U\cdot \kr{\partial}$, it suffices to count
rank-$r$ matrices in $\kr{\partial}$ for the special case when both matrices $\delta_a$ have the canonical form.  
Using K\"unneth formula Eq.~(\ref{Kun1}) one can easily check that $\ker{\partial}$ coincides with the set of
matrices $h$ having the following form:
\begin{equation}
\label{bop1}
h=\left[ \begin{array}{ccc} A & B & 0 \\
C & D & F\\
0 & F & 0 \\
\end{array}
\right].
\end{equation}
As above, we group rows and columns into blocks of size $H,L,L$.
Consider the set of matrices $h$ as above where the block $F$ has some fixed rank $f$.
For a fixed choice of $F$ let $S_{row}$ and $S_{col}$ be the set of first $f$ linearly independent
rows and columns of $F$ respectively. Choose any invertible $L\times L$ matrices  $U$ and $V$  
such that $UFV$ has zero rows outside $S_{row}$ and zero columns outside $S_{col}$. 
A transformation 
\[
h\to \left[ \begin{array}{ccc} 
I & 0 & 0 \\
0 & U & 0 \\
0 & 0 & U \\
\end{array} \right] \cdot h \cdot
\left[ \begin{array}{ccc} 
I & 0 & 0 \\
0 & V & 0 \\
0 & 0 & V \\
\end{array} \right]
\]
does not change rank of $h$ and preserves its block structure. Keeping in mind that 
there are $E^{L,f}$  choices of $F$ with a given rank $f$, see Eq.~(\ref{Esimple}), we can now assume that
$F$ has zero rows outside of $S_{row}$ and zero columns outside $S_{col}$. 
Removing all rows of $S_{row}$ and all columns of $S_{col}$ from $h$
reduced its rank by $2f$ regardless of the choice of the remaining blocks $A,B,C,D$. 
After this removal the non-zero part of $h$ forms a matrix of size $(H+L-f)\times (H+L-f)$ which
can be completely arbitrary as long as its rank is $r-2f$. Combining all these observations we arrive at
\begin{equation}
\label{Z(R)1}
Z(r)=\sum_{f=0}^{\min{(r/2,L)}} E^{L,f} \cdot 2^{2f(H+L)-f^2} \cdot E^{H+L-f,r-2f}.
\end{equation}
Here the factor $2^{2f(H+L)-f^2}$ represents possible choices of $A,B,C,D$ in $f$ rows of $S_{row}$
and in $f$ columns of $S_{col}$. Substituting Eq.~(\ref{Esimple})  and collecting similar terms gives Eq.~(\ref{Z(R)}).
\end{proof}
We conclude that the number of reduced cycles with a given rank $R$ is 
\begin{equation}
\label{Gamma2}
\Gamma(R)=\sum_{r=0}^{\min{\{ K,R\} }} Z(r)   \cdot E_{K,r}^{M',R}.
\end{equation}
This shows that $\Gamma(R)$ does not depend on $\delta_a$ as long as $\delta_a$ are good.
From Eq.~(\ref{E}) we get 
\begin{equation}
\label{nasty1}
E_{K,r}^{M',R}=O(1)\cdot 2^{MR-(2M'-M)r - R^2 + (r+R)^2/4}.
\end{equation}
Applying Lemma~\ref{lemma:count2} to the reduced boundary operators $\delta_a'$
and noting that $\dim{(\im{\delta_a'})}=L-(M-M')$, see Eq.~(\ref{good}), 
we can rewrite Eq.~(\ref{Gamma2}) as
\begin{equation}
\label{nasty2}
\Gamma(R)\le O(1)\cdot 2^{MR-3R^2/4}\sum_{r=0}^R 2^{ (H+R/2)r -3r^2/4} \sum_{f=0}^{\infty} 2^{-2f^2 + 2f(r-H)}.
\end{equation}
Here we extended the range of the sum over $f$ in Eq.~(\ref{Z(R)}) to all integers $f\ge 0$ since 
we just need an upper bound on $\Gamma(R)$. 
Likewise, we extended the range of the sum over $r$ in Eq.~(\ref{Gamma2}) to $0\le r\le R$.
The function $2^{-2f^2 + 2f(r-H)}$ has maximum at $f=f_0=(r-H)/2$ and decays exponentially away from $f_0$.
Note that $f_0$ is in the range of the sum over $f$ iff $r\ge H$.  If this is the case, then 
the sum over $f$ can be approximated, up to a factor $O(1)$, by the single term
$2^{-2f_0^2+2f_0(r-H)}=2^{(r-H)^2/2}$.  
In the remaining case, $r<H$, the sum over $f$ 
can be approximated by a constant $O(1)$.
Let $\Gamma_1(R)$ and $\Gamma_2(R)$  be  contributions
to the righthand side of Eq.~(\ref{nasty2}) that come
 from the terms with $r\le H$ and $r\ge H$ respectively.
We have
\begin{equation}
\label{nasty3}
\Gamma_1(R)=O(1)\cdot 2^{MR-3R^2/4} \sum_{r=0}^{\min{\{H,R\}}}
2^{(H+R/2)r - 3r^2/4}.
\end{equation}
The function $2^{(H+R/2)r - 3r^2/4}$ achieves maximum at
$r=r_0=(2/3)H+R/3$ and decays exponentially away from $r_0$.
Note that $r_0\ge \min{\{H,R\}}$ with the equality iff $H=R$.
Hence the sum over $r$ can be approximated, up to a factor $O(1)$, by the 
last term $r=\min{\{H,R\}}$. Simple algebra shows that 
\begin{equation}
\label{nasty4}
\Gamma_1(R)\le O(1) \cdot 2^{(M+H/2)R - R^2/2} \quad \mbox{if $R\ge H$},
\end{equation}
and
\begin{equation}
\label{nasty5}
\Gamma_1(R)\le O(1) \cdot 2^{(M+H)R - R^2} \quad \mbox{if $R\le H$}.
\end{equation}
Next let us bound $\Gamma_2(R)$. Note that terms with $r\ge H$ can only appear for $R\ge H$.
Replacing the sum over $f$ by $O(1)\cdot 2^{(r-H)^2/2}$ in Eq.~(\ref{nasty2}) and simplifying the resulting
expression one gets
\begin{equation}
\label{nasty6}
\Gamma_2(R)=O(1)\cdot 2^{MR-3R^2/4 + H^2/2 } \sum_{r=H}^R 2^{-r^2/4 + Rr/2}.
\end{equation}
The function $2^{-r^2/4 + Rr/2}$ achieves maximum at $r=R$ and decays
exponentially away from the maximum. Approximating the sum over $r$ by the last term
$r=R$, we get
\begin{equation}
\label{nasty7}
\Gamma_2(R)=O(1)\cdot 2^{MR-R^2/2 + H^2/2 }\le O(1)\cdot 2^{(M+H/2)R - R^2/2},
\end{equation} 
since $R\ge H$. 
This proves Eqs.~(\ref{count:main1},\ref{count:main2}).

\subsection{Parameterization of Reduced Cycles}
\label{sec:par}

For any pair of good boundary operators $\delta_1,\delta_2$, let ${\cal Z}_R(\delta_1,\delta_2)$ be the set of $M'$-by-$M'$ matrices which are reduced cycles
and have rank $R$. Note that ${\cal Z}_R(\delta_1,\delta_2)$ has size $\Gamma(R)$, see Theorem~\ref{thm:count}.
Our ultimate goal is to use the union bound to prove that with high probability
(over the choice of $\delta_a$) no matrix in ${\cal Z}_R(\delta_1,\delta_2)$ obeys the uniform low weight condition.
To this end we shall  parameterize reduced cycles in ${\cal Z}_R(\delta_1,\delta_2)$ by integers $j=1,\ldots,\Gamma(R)$
for each good pair $\delta_1,\delta_2$.
Moreover, this parameterization will have  certain symmetry such that for any fixed $j$ and 
for a random pair of good boundary operators $\delta_1,\delta_2$ the $j$-th reduced cycle
in ${\cal Z}_R(\delta_1,\delta_2)$ is distributed uniformly on the set of all rank-$R$ matrices of size $M'$-by-$M'$.
In the rest of this subsection we define a parameterization with the desired symmetry properties.

Below we consider block-diagonal $M\times M$ matrices
\be
\label{Ua}
U_a=\begin{pmatrix}
U_a' & 0 \\
0 & I
\end{pmatrix},
\ee
where $U_1'$ and $U_2'$ are arbitrary invertible $M'\times M'$ matrices. 
Given a pair of good boundary operators $\delta_1,\delta_2$, define
\be
\label{tdelta}
\tilde{\delta}_a=U_a \delta_a U_a^{-1}.
\ee
\begin{lemma}
Suppose $\delta_a$ are good boundary operators. Then $\tilde{\delta}_a$ are also good
boundary operators.
\end{lemma}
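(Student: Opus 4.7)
The plan is straightforward: recall that ``good'' means $\ker{\delta_a}\cap \calV^{>}=0$ (Eq.~(\ref{goodness})), and that the $U_a$ in Eq.~(\ref{Ua}) have a specific block structure. I would first verify that $\tilde{\delta}_a$ is a boundary operator at all, which is immediate: since $U_a$ is invertible,
\[
\tilde{\delta}_a^2 = U_a \delta_a U_a^{-1} U_a \delta_a U_a^{-1} = U_a \delta_a^2 U_a^{-1} = 0.
\]

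The substantive step is to check the goodness condition for $\tilde{\delta}_a$. First I would observe that $\ker{\tilde{\delta}_a}=U_a\cdot \ker{\delta_a}$, since conjugation by an invertible matrix transports the kernel. So I need to show $U_a\cdot \ker{\delta_a}\cap \calV^{>}=0$. The key point is that the block form of $U_a$, with an identity block acting on the last $M-M'$ coordinates, means both $U_a$ and $U_a^{-1}$ leave $\calV^{>}$ invariant; more precisely $U_a^{-1}(\calV^{>})=\calV^{>}$ (the inverse has the same block-diagonal structure with $(U_a')^{-1}$ in the upper-left corner and $I$ in the lower-right).

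With this in hand, suppose $w\in U_a\ker{\delta_a} \cap \calV^{>}$, so $w=U_a v$ for some $v\in \ker{\delta_a}$ and $w\in \calV^{>}$. Then $v = U_a^{-1} w \in U_a^{-1}\calV^{>} = \calV^{>}$. Thus $v\in \ker{\delta_a}\cap \calV^{>}$, and by the goodness of $\delta_a$, we conclude $v=0$, hence $w=0$. This establishes $\ker{\tilde{\delta}_a}\cap \calV^{>}=0$, which is exactly the goodness of $\tilde{\delta}_a$.

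I do not expect any serious obstacle here; the lemma is really a bookkeeping check that the chosen conjugation class respects the sector decomposition $\calC=\calV\oplus \calV^{>}$. The only thing one has to notice is the block structure of $U_a^{-1}$, which follows from the block structure of $U_a$ together with $U_a'$ being invertible.
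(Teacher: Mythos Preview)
Your proof is correct and follows essentially the same approach as the paper's own proof: both verify $\tilde{\delta}_a^2=0$ by conjugation, use $\ker{\tilde{\delta}_a}=U_a\cdot\ker{\delta_a}$, observe that the block structure of $U_a$ fixes $\calV^{>}$, and conclude $\ker{\tilde{\delta}_a}\cap\calV^{>}=0$. The only cosmetic difference is that the paper phrases the final step as the set identity $(U_a\cdot\ker{\delta_a})\cap(U_a\cdot\calV^{>})=U_a\cdot(\ker{\delta_a}\cap\calV^{>})$, whereas you unpack this elementwise.
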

\begin{proof}
Let $\delta\equiv \delta_a$, $U\equiv U_a$, and $\tilde{\delta}=U\delta U^{-1}$.
It is clear that $\tilde{\delta}^2=0$, so it suffices to check that $\tilde{\delta}$ is good. 
By definition of goodness, see Eq.~(\ref{goodness}), no 
vector in $\kr{\delta}$ has support on the last $M-M'$ coordinates, that is,
$\kr{\delta}\cap \calV^>=0$. Furthermore, since $U$ is invertible, we have $\kr{\tilde{\delta}} =U\cdot \kr{\delta}$.
Taking into account that $\calV^>=U\cdot \calV^>$ we get
\[
\kr{\tilde{\delta}}\cap \calV^> = (U\cdot \kr{\delta}) \cap (U\cdot \calV^>) = U \cdot (\kr{\delta} \cap \calV^>) =0.
\]
Hence $\tilde{\delta}$ is good.
\end{proof}
The following lemma provides the desired parameterization of reduced cycles.
\begin{lemma}
\label{lem:param}
One can parameterize reduced cycles in each set ${\cal Z}_R(\delta_1,\delta_2)$
by integers $j=1,\ldots,\Gamma(R)$ such that the following properties hold.
(1) The parameterization is defined for any good pair $\delta_1,\delta_2$. (2)
Choose random boundary operators $\delta_1,\delta_2$ from the uniform distribution.
Conditioned on  $\delta_1,\delta_2$ being good, the $j$-th reduced cycle in ${\cal Z}_R(\delta_1,\delta_2)$
is distributed uniformly on the set of all $M'\times M'$ matrices with rank $R$.
\end{lemma}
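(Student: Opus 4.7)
The plan is to exploit the action of the group $G=GL_{M'}(\FF_2)\times GL_{M'}(\FF_2)$ on the set $Y$ of good pairs via block-diagonal conjugation as in \eqref{Ua}, \eqref{tdelta}, together with its natural action on the set $Z$ of $M'\times M'$ matrices of rank $R$ given by $(U_1',U_2')\cdot\psi = U_1'\psi(U_2')^T$. The first step is to verify that these two $G$-actions are compatible at the level of cycles. Viewing $\psi\in\kr{\partial}$ as an $M\times M$ matrix satisfying $\delta_1\psi+\psi\delta_2^T=0$, a direct calculation shows that $U_1\psi U_2^T$ is a cycle for the transformed boundary operator $\tilde{\partial}=\tilde{\delta}_1\otimes I+I\otimes\tilde{\delta}_2$. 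Since $U_a$ is block-diagonal with identity in the lower-right block, the top-left $M'\times M'$ block of $U_1\psi U_2^T$ equals $U_1'\psi_{red}(U_2')^T$, where $\psi_{red}$ denotes the reduced block of $\psi$. Hence $\psi_{red}\mapsto U_1'\psi_{red}(U_2')^T$ is a rank-preserving bijection ${\cal Z}_R(\delta_1,\delta_2)\to {\cal Z}_R(\tilde{\delta}_1,\tilde{\delta}_2)$.

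Combined with the transitivity of the $G$-action on $Z$ (any two rank-$R$ matrices of size $M'\times M'$ are equivalent under left--right invertible multiplication), this yields a crucial balance property: the number of good pairs $(\delta_1,\delta_2)$ containing a specified $\psi_0\in Z$ as an element of ${\cal Z}_R(\delta_1,\delta_2)$ is independent of $\psi_0$.

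To define the parameterization, I would pick a set $T\subseteq Y$ of representatives for the $G$-orbits on good pairs, fix an arbitrary enumeration $\psi_t^{(1)},\ldots,\psi_t^{(\Gamma(R))}$ of ${\cal Z}_R(t)$ for each $t\in T$, and transport via the action: for any good pair written as $(\delta_1,\delta_2)=g\cdot t$ with $g\in G$ chosen according to a fixed system of coset representatives for $G/\mathrm{Stab}_G(t)$, declare its $j$-th reduced cycle to be the image of $\psi_t^{(j)}$ under $g$. To verify property~(2), note that by construction the assignment $(\delta_1,\delta_2)\mapsto j$-th cycle is $G$-equivariant up to the stabilizer ambiguity; the pushforward of the uniform distribution on $Y$ is therefore $G$-invariant on $Z$, and any $G$-invariant distribution on the transitive $G$-set $Z$ is uniform.

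The main obstacle is the potential nontriviality of the stabilizers $\mathrm{Stab}_G(t)$, which may permute ${\cal Z}_R(t)$ and thereby break strict $G$-equivariance of the labeling. The resolution is that full equivariance is not needed: it suffices that for each fixed $j$, the multiset of $j$-th cycles collected as $(\delta_1,\delta_2)$ ranges over $Y$ be $G$-invariant in $Z$. By partitioning $Y$ into $G$-orbits and applying the balance property, this multiset covers $Z$ with the constant multiplicity $|Y|/|Z|$ regardless of how stabilizer ambiguities are resolved, yielding the required uniform distribution.
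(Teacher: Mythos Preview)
Your approach is the same as the paper's: use the action of $G=GL_{M'}(\FF_2)\times GL_{M'}(\FF_2)$ on good pairs by block-diagonal conjugation and on rank-$R$ matrices by $(U_1',U_2')\cdot\psi=(U_1'\otimes U_2')\psi$, establish the equivariance ${\cal Z}_R(\tilde\delta_1,\tilde\delta_2)=(U_1'\otimes U_2'){\cal Z}_R(\delta_1,\delta_2)$, transport a labeling from orbit representatives, and deduce uniformity from $G$-invariance of the pushforward together with transitivity on $Z$. The paper argues exactly this way and does not explicitly confront the stabilizer ambiguity; you are more careful in flagging it.

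Your proposed resolution in the last paragraph, however, does not close the gap. The balance property controls the multiset $\bigcup_{j}\{\phi_j(y):y\in Y\}=\bigcup_{y}{\cal Z}_R(y)$, not the multiset for a single $j$. Concretely, on one orbit $O=G\cdot t$ with stabilizer $S=\mathrm{Stab}_G(t)$ and transversal $R\subseteq G$, the multiset $\{r\cdot\psi_t^{(j)}:r\in R\}$ hits $z\in Z$ with multiplicity $|R\cap h\,\mathrm{Stab}_G(\psi_t^{(j)})|$, where $h\cdot\psi_t^{(j)}=z$. Since $R$ is a transversal for cosets of $S$, while $\mathrm{Stab}_G(\psi_t^{(j)})$ is a different subgroup, this intersection size can genuinely depend on $h$; thus coset-representative transport need not yield a $G$-invariant (hence uniform) pushforward for fixed $j$. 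The paper's concluding sentence implicitly assumes equivariance of the labeling and therefore has the same soft spot.

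The repair is that the parameterization is not really needed downstream. Your balance property already gives, by linearity of expectation,
\[
\Pr\bigl[\exists\, z\in{\cal Z}_R(\delta_1,\delta_2)\text{ obeying ULW}\bigr]
\le \sum_{z\text{ ULW}}\Pr[z\in{\cal Z}_R(\delta_1,\delta_2)]
= \Gamma(R)\cdot \Pr[\text{uniform rank-}R\text{ matrix obeys ULW}],
\]
which is exactly the inequality consumed in the union bound of Lemma~\ref{fixedrowscolumns}. So the substance of your argument is correct; only the final step asserting per-$j$ uniformity ``regardless of how stabilizer ambiguities are resolved'' should be replaced by this direct first-moment computation.
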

\begin{proof}
Define $\tilde{\partial}=\tilde{\delta}_1\otimes I + I \otimes \tilde{\delta}_2$.
Noting that $\tilde{\partial}=(U_1\otimes U_2)\partial (U_1\otimes U_2)^{-1}$ one easily gets
\begin{equation}
\label{tilde}
\kr{\tilde{\partial}}=(U_1\otimes U_2) \kr{\partial}.
\end{equation} 
Suppose $g\in \calV\otimes \calV$ is a reduced cycle for $\partial$, that is,
$g=(W\otimes W)g_{full}$ for some full cycle $g_{full}\in \kr{\partial}$.
Let $\tilde{g}=(U_1'\otimes U_2') g$. 
Taking into account that $U_a'W=WU_a$ we get 
$\tilde{g}=(W\otimes W) (U_1\otimes U_2)g_{full}$. Since $(U_1\otimes U_2)g_{full}\in \kr{\tilde{\partial}}$,
see Eq.~(\ref{tilde}), we conclude that 
$\tilde{g}$ is a reduced cycle for $\tilde{\partial}$, that is,
$\tilde{g}\in {\cal Z}_R(\tilde{\delta}_1,\tilde{\delta}_2)$. 
The same argument shows that $(U_1'\otimes U_2')^{-1}\tilde{g}$ is a reduced cycle
for $\partial$ whenever $\tilde{g}$ is a reduced cycle for $\tilde{\partial}$. Hence 
\begin{equation}
\label{tilde1}
 {\cal Z}_R(\tilde{\delta}_1,\tilde{\delta}_2)=(U_1'\otimes U_2') {\cal Z}_R(\delta_1,\delta_2).
\end{equation}
Consider some fixed good pair $\delta_1,\delta_2$.
By theorem~\ref{thm:count}, the set ${\cal Z}_R(\delta_1,\delta_2)$ has size $\Gamma(R)$. 
Choose an arbitrary parameterization of the set ${\cal Z}_R(\delta_1,\delta_2)$ by integers $j=1,\ldots,\Gamma(R)$.
Let $\psi_j$ be the $j$-th reduced cycle in ${\cal Z}_R(\delta_1,\delta_2)$. 
Then consider all possible pairs $\tilde{\delta}_1,\tilde{\delta}_2$ as defined in Eqs.~(\ref{Ua},\ref{tdelta}) and 
choose $(U_1'\otimes U_2') \psi_j$ as the $j$-th
reduced cycle of  ${\cal Z}_R(\tilde{\delta}_1,\tilde{\delta}_2)$.
By Eq.~(\ref{tilde1}), this parameterizes  the sets ${\cal Z}_R(\tilde{\delta}_1,\tilde{\delta}_2)$.
Next choose any good pair  $\delta_1,\delta_2$ which has not been considered yet.
Choose an arbitrary parameterization on the set  ${\cal Z}_R(\delta_1,\delta_2)$
and extend it to all sets  ${\cal Z}_R(\tilde{\delta}_1,\tilde{\delta}_2)$ as described above. 
Repeating these steps we can parameterize the sets ${\cal Z}_R(\delta_1,\delta_2)$ for all good pairs $\delta_1,\delta_2$.

It remains to note that we choose boundary operators from a distribution invariant under
the transformation $\delta_a\to \tilde{\delta}_a$. Hence the distribution of the $j$-th reduced
cycle $\psi_j$ is invariant under a transformation $\psi_j\to (U_1'\otimes U_2')\psi_j$,
where $U_a'$ are arbitrary invertible matrices. This is only possible if $\psi_j$
is distributed uniformly on the set of all rank-$R$ matrices.
\end{proof}

\subsection{Probability of Having Uniform Low Weight}
\label{sec:ulw}

In this subsection we derive an upper bound on the probability that a random
rank-$R$ matrix  has low weight in all rows and in all columns
(uniform low weight condition). To simplify notation,  the lemma below
is stated for $M\times M$ matrices. However, it should be kept in mind that the lemma
will be applied 
to reduced matrices of cycles which have size $M'\times M'$.

\begin{lemma}
\label{probbound}
For any $\epsilon>0$ one can choose $c>0$ such that the following is true
for all integers $1\le R\le M$. 
Let $Z$ be a random rank-$R$ matrix of size $M\times M$  drawn from the uniform distribution
on the set of such matrices.
Then the  probability that every row and every column of $Z$ has weight at most $cM$ is upper bounded by
\be
\label{probeqbnd}
O(1)\cdot 2^{R^2-2(1-\epsilon)M R}.
\ee
\end{lemma}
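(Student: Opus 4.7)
The plan is to factor a uniform random rank-$R$ matrix $Z \in \mathbb{F}_2^{M\times M}$ as $Z = AB$, where $A \in \mathbb{F}_2^{M\times R}$ is uniform among full-column-rank matrices and $B \in \mathbb{F}_2^{R\times M}$ is uniform among full-row-rank matrices, drawn independently. Such factorizations are unique up to the action $(A,B)\mapsto (AU^{-1}, UB)$ of $U \in GL_R$, so $Z = AB$ is uniform over rank-$R$ matrices after accounting for the overcount factor $|GL_R| \approx 2^{R^2}$; this is the source of the leading $2^{R^2}$ in the claimed bound.

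Under this factorization, the uniform low-weight condition on $Z = AB$ becomes the conjunction of: (a) $\wt(Ab_j) \le cM$ for every column $b_j$ of $B$, and (b) $\wt(a_i B) \le cM$ for every row $a_i$ of $A$. The marginal probability of (a) can be bounded using the following observation: since $B$ has full row rank, among its $M$ columns there exist $R$ linearly independent vectors $b_{j_1},\ldots,b_{j_R} \in \mathbb{F}_2^R$, and under uniform random $A$ the images $Ab_{j_1},\ldots,Ab_{j_R}$ are $R$ independent uniformly distributed vectors in $\mathbb{F}_2^M$. The probability that all $R$ of them lie in the Hamming ball of radius $cM$, which has size at most $2^{MS(c) + o(M)}$, is then at most $2^{-RM(1-S(c)) + o(RM)}$. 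A symmetric argument applied to $R$ linearly independent rows of $A$ gives the same marginal bound for (b).

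To combine (a) and (b) into a joint bound I plan to condition on the column space $V = \im(A)$ and the row space $W$ of $B$. These are essentially independent uniform random $R$-dimensional subspaces of $\mathbb{F}_2^M$; given them, condition (a) asks that all $M$ columns of $Z$ lie in $V$ intersected with the Hamming ball of radius $cM$, and condition (b) is the symmetric statement for $W$. Since the choice of $A$ modulo $V$ and the choice of $B$ modulo $W$ are independent, I expect the joint probability to factorize up to lower-order correction terms, yielding $\Pr[(a)\cap(b)] \le O(1) \cdot 2^{-2RM(1-S(c))}$. Combined with the $|GL_R|$ overcount, this gives the target bound $O(1)\cdot 2^{R^2 - 2(1-\epsilon)MR}$ upon choosing $c$ small enough that $S(c) \le \epsilon$.

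The main obstacle will be making this decoupling rigorous. Although $A$ and $B$ are drawn independently, the low-weight conditions (a) and (b) are both statements about the product $AB$, and the columns of $Z$ determine its rows (and vice versa), so the two events are not strictly independent. I anticipate that a careful union bound over the possible positions of an $R\times R$ invertible submatrix of $Z$---of which at least one always exists---together with a decoupling argument that exploits the independence of $A$ and $B$ in the factored form $Z=AB$, will be required to control the correlation between the row and column constraints without losing more than the $2^{R^2}$ factor already accounted for by the $|GL_R|$ overcount.
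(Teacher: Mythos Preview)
Your overall strategy matches the paper's: factor $Z = AB^T$ with $A,B$ independent uniform full-rank $M\times R$ matrices, attribute the $2^{R^2}$ factor to $|GL_R|$, and bound the row and column low-weight events via Hamming-ball volume. You have also correctly identified the decoupling of the row and column constraints as the crux.

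However, your proposed decoupling via conditioning on $V=\im(A)$ and $W=\text{rowspace}(B)$ does not work as stated. Conditioning on $V$ does not reduce condition~(a) to a statement about $V$ alone: the columns of $Z=AB$ always lie in $V$, but \emph{which} $M$ vectors of $V$ they are still depends on $B$, so the conditional probability of~(a) given $(V,W)$ retains dependence on the remaining randomness in $B$. Symmetrically for~(b). The events do not factor under this conditioning, and there is no evident ``lower-order correction'' that repairs this.

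The paper's decoupling is different and cleaner, and it is exactly the ``union bound over an $R\times R$ invertible submatrix'' you anticipated in your last paragraph---but the union is over $GL_R$, not over the $\binom{M}{R}^2$ positions. Fix $A$ and deterministically select an invertible $R\times R$ submatrix $A_{red}$ of $A$. Then $A_{red}B^T$ is an $R\times M$ submatrix of $Z$, and because $A_{red}$ is invertible, $A_{red}B^T$ is uniform over rank-$R$ matrices of size $R\times M$ \emph{for every fixed $A$}. Hence $\Pr[A_{red}B^T\text{ is RLW}\mid A]=\Pr[B^T\text{ is RLW}]$, independent of $A$. On the other side, if $Z$ is column-low-weight then so is some full-rank $M\times R$ submatrix $Z_{red}=AU$ with $U\in GL_R$; union-bounding over the unknown $U$ gives
\[
\Pr[Z\text{ CRLW}]\le \sum_{U\in GL_R}\Pr[\,A_{red}B^T\text{ RLW and }AU\text{ CLW}\,]=|GL_R|\cdot\Pr[B^T\text{ RLW}]\cdot\Pr[A\text{ CLW}],
\]
since after conditioning on $A$ the first event has fixed probability and the second is $A$-measurable. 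This delivers the factorization with exactly the $2^{R^2}$ loss. One minor correction to your marginal argument: $Ab_{j_1},\ldots,Ab_{j_R}$ are not independent uniform because $A$ is constrained to have full rank; the paper handles this by comparing to an unconstrained random $\tilde A$ and using $\Pr[\rnk(\tilde A)=R]=\Omega(1)$.
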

\begin{proof}
Let $A,B$ be random rank-$R$ matrices of size $M\times R$ drawn from the uniform distribution on the set of such matrices.
Then $Z=AB^T$ is uniformly distributed on the set of rank-$R$ matrices of size $M\times M$.
Below we fix some pair $A,B$ and define two submatrices of $Z$; one of size $M\times R$ and the other of size
$R\times M$. 

Since $Z$ has rank $R$, one can choose an $M\times R$ submatrix of $Z$ which has rank $R$.
Let $Z_{red}$ be any such submatrix. Since each column of $Z$ is a linear combination of columns of $A$,
we conclude that $Z_{red}=AU$ for some invertible $R\times R$ matrix $U$. 
For each matrix $A$ as above let $A_{red}$ be some fixed $R\times R$ submatrix of $A$ with rank $R$
(say, order all $R\times R$ submatrices of $A$ lexicographically and choose $A_{red}$ as the first submatrix with rank $R$). 
Note that $A_{red} B^T$ is a submatrix of $Z$ which has size $R\times M$. 

Let say that a vector has {\em low weight}  if the fraction of non-zero entries in this vector is at most $c$.
Define three classes of matrices. A matrix is Column-Low-Weight (CLW) if each of its columns has low weight. A matrix
is Row-Low-Weight (RLW) if each of its rows has low weight. Finally, a matrix is Column-Row-Low-Weight (CRLW)
if it is both CLW and RLW.
Our goal is to bound the probability that $Z=AB^T$  is CRLW. 
Clearly, if $Z$ is CRLW then any $R\times M$ submatrix of $Z$ must be RLW and
any $M\times R$ submatrix of $Z$ must be CLW. 
The above arguments and the union bound imply that
\begin{equation}
\label{ld1}
\mathrm{Pr}{\mbox{[ $AB^T$  is CRLW ] }} \le \sum_U \mathrm{Pr}{\mbox{[ $A_{red}B^T$ is RLW and $AU$ is CLW ] }},
\end{equation}
where the sum runs over all $R\times R$ invertible matrices $U$. Note that the number of such matrices
is at most $2^{R^2}$. Furthermore, for any fixed $A$ the matrix $A_{red}B^T$ is distributed uniformly
on the set of all $R\times M$ matrices of rank $R$. Likewise, for any fixed $U$ the matrix 
$AU$ is distributed uniformly on the set of all $M\times R$ matrices of rank $R$. This shows that
\begin{equation}
\label{ld2}
\mathrm{Pr}{ \mbox{[ $AB^T$  is CRLW ] }} \le 2^{R^2} \cdot \mathrm{Pr}{\mbox{[ $B^T$ is RLW ]}} \cdot   \mathrm{Pr}{\mbox{[ $A$ is CLW ]}}.
\end{equation}
Let us show that for any $\epsilon>0$ one can choose $c>0$ such that 
\begin{equation}
\label{ld3}
\mathrm{Pr}{\mbox{[ $A$ is CLW ]}}\le O(1)\cdot 2^{-MR(1-\epsilon)}
\end{equation}
for all integers $1\le R\le M$.
Indeed, let $\tilde{A}$ be a random $M\times R$ matrix drawn form the uniform distribution on the set
of all such matrices. Since columns of $\tilde{A}$ are independent and uniformly distributed, one can easily check that
\begin{equation}
\label{ld4}
\mathrm{Pr}{\mbox{[ $\tilde{A}$ is CLW ]}}\le 2^{-MR(1-\epsilon)},
\end{equation}
where $\epsilon$ can be made arbitrarily small by choosing small enough $c$.
On the other hand, $A$ and $\tilde{A}$ have the same distribution conditioned on the event
$\rnk{(\tilde{A})}=R$. Thus 
\begin{equation}
\label{ld5}
\mathrm{Pr}{\mbox{[ $A$ is CLW ]}}=\mathrm{Pr}{\mbox{[ $\tilde{A}$ is CLW $|$ $\rnk{(\tilde{A})}=R$  ]}}
\le \frac{\mathrm{Pr}{\mbox{[ $\tilde{A}$ is CLW ]}}}{\mathrm{Pr}{\mbox{[ $\rnk{(\tilde{A})}=R$ ]}}}
\end{equation}
It is well-known that a random uniformly distributed matrix has
full rank with probability $\Omega(1)$.  Thus  the denominator in Eq.~(\ref{ld5}) is $\Omega(1)$.
Combining Eqs.~(\ref{ld4},\ref{ld5}) proves Eq.~(\ref{ld3}). 
Applying exactly the same arguments to $B$ one can show that 
$\mathrm{Pr}{\mbox{[ $B^T$ is RLW ]}}\le O(1)\cdot 2^{-MR(1-\epsilon)}$.
The lemma now follows from Eq.~(\ref{ld2}).
\end{proof}

\subsection{Union Bound}
\label{third}

In this subsection we combine all ingredients developed above to complete the proof of Theorem~\ref{thm:main}.
\begin{lemma}
\label{fixedrowscolumns}
For any $\eta>0$ one can choose constants $r,c,\hfra>0$ such that the following is true for all
sufficiently large integers $M\ge 1$ 
and $M'=(1-r)M$. Let $\delta_1,\delta_2$ be random $M\times M$ boundary operators
with the homological dimension $H= M\hfra$. Let $P_{red}(M')$ be the probability that 
there exists a nonzero reduced cycle obeying the uniform low weight condition
with a constant $c$. Then 
\begin{equation}
\label{ubound}
 P_{red}(M')\le  O(1)\cdot 2^{-M(1-\eta)/2}
 \end{equation}
\begin{proof}
Let $P^{bad}(M)$ be the probability that $\delta_1$ or $\delta_2$ is not good. Then 
\begin{equation}
\label{badgood}
P_{red}(M')\le P_{red}^{good}(M')+P^{bad}(M),
\end{equation}
where $P_{red}^{good}(M')$ is the probability that 
there exists a nonzero reduced cycle obeying the uniform low weight condition
with a constant $c$  conditioned on both boundary operators $\delta_1,\delta_2$ being good. 
Lemma~\ref{lemma:bad} guarantees that 
\begin{equation}
\label{Pbad}
P^{bad}(M)\le O(1)\cdot 2^{-M/2 + \eta M/2}
\end{equation}
for small enough constants $r,\hfra$.
Let us now bound $P_{red}^{good}(M')$.
By lemma \ref{lem:param}, for random good $\delta_1,\delta_2$, the $j$-th reduced cycle in ${\cal Z}_R(\delta_1,\delta_2)$ is distributed uniformly on the
set of matrices with rank $R$. 
For any fixed $r$, we can choose a $c$ such that $c'=cr^{-1}/(1-r)$ is arbitrarily small.  
Hence, by Lemma~\ref{probbound}, for any $r<1$ and for any $\epsilon>0$, there is a $c$
such that the probability that the $j$-th  reduced cycle in ${\cal Z}_R(\delta_1,\delta_2)$ obeys the uniform low weight condition
with the constant $c$ is upper bounded by
\be
O(1)\cdot 2^{R^2-2(1-\epsilon)M' R}.
\ee

By Theorem \ref{thm:count}, the number of reduced cycles in  ${\cal Z}_R(\delta_1,\delta_2)$  is bounded by
\be
\label{count:1}
\Gamma(R)\le O(1)\cdot 2^{(M+H)R - R^2} \quad \mbox{if} \quad R\le H,
\end{equation}
and
\begin{equation}
\label{count:2}
\Gamma(R)\le O(1)\cdot 2^{(M+H/2)R - R^2/2} \quad \mbox{if} \quad R\ge H.
\end{equation}
Applying the union bound to account for all $\Gamma(R)$ reduced cycles we get
\be
\label{Pred1}
P_{red}^{good}(M')\le O(1)\cdot 
\sum_{R=1}^{M'} \Gamma(R)\cdot 
2^{R^2-2(1-\epsilon)M' R}.
\ee

We break the sum over $R$ into the sum over $1 \leq R < H$ and the sum over $H \leq R \leq M'$.
The sum over $1\le R\le H$ in Eq.~(\ref{Pred1}) is upper bounded by
\begin{equation}
O(1)\cdot \sum_{R \geq 1} 2^{(M+H)R - R^2} \cdot 2^{R^2-2(1-\epsilon)M' R}  
=O(1)\cdot \sum_{R \geq 1} 2^{-MR(1 - \hfra - 2\epsilon' )},
\end{equation}
where $\epsilon'=1-(1-\epsilon)(1-r)$.
The last sum can be upper bounded, up to a factor $O(1)$, by its first term $O(1)\cdot 2^{-M(1-\hfra-\epsilon')}$.
For sufficiently small $r,\epsilon,\hfra$
this is upper bounded by 
$O(1) \cdot 2^{-M/2}$.
The sum over $H \leq R \leq M'$ in Eq.~(\ref{Pred1}) is upper  bounded by
\begin{equation}
\label{qform2}
O(1)\cdot \sum_{R=H}^{M'} 2^{(M+H/2)R - R^2/2} \cdot 2^{R^2-2(1-\epsilon)M' R}  
\le O(1)\cdot \sum_{R=H}^{M'} 2^{-MR(1/2-\sigma)},
\end{equation}
where $\sigma=2\epsilon'+\hfra/2$. Here we used a trivial bound $2^{R^2/2}\le 2^{RM/2}$.
Note that $\sigma$ can be made arbitrarily small by choosing small enough $r,\hfra,c$.
Let us make $\sigma<1/2$. Then 
the sum over $R$ in Eq.~(\ref{qform2}) can be upper bounded, up to a factor $O(1)$, by its first term
$O(1)\cdot 2^{-MH(1/2-\sigma)}\equiv F(M,H)$. 
Note that for any fixed $M$ the function $F(M,H)$ is monotone decreasing for $H\ge 0$. 
Since $H$ is a non-negative integer, one has $F(M,H)\le F(M,1)=O(1)\cdot 2^{-M(1/2-\sigma)}$.
If we   make $\sigma<\eta/2$ then $F(M,H)\le O(1)\cdot 2^{-M/2 + \eta M/2}$.
Combining the two contributions  to the sum in Eq.~(\ref{Pred1})
 we arrive at
$P_{red}^{good}(M')\le O(1) \cdot 2^{-M/2 + M\eta/2}$.
The lemma now follows from Eqs.~(\ref{Pbad},\ref{badgood}).
\end{proof}
\end{lemma}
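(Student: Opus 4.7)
The plan is to execute a union bound over reduced cycles, after first reducing to the case that both boundary operators are good. Concretely, I would split
\[
P_{red}(M') \le P^{bad} + P_{red}^{good}(M'),
\]
where $P^{bad}$ is the probability that $\delta_1$ or $\delta_2$ fails the goodness condition and $P_{red}^{good}(M')$ is the conditional probability of a low-weight reduced cycle given both are good. To handle $P^{bad}$, note that any non-zero vector of $\kr\delta$ supported on the last $M-M'=rM$ coordinates has weight at most $rM$; if I choose the low-weight constant in Lemma~\ref{lemma:bad} strictly bigger than $r$, then Lemma~\ref{lemma:bad} directly implies that with probability $1-O(1)\cdot 2^{-M/2+\eta M/2}$ no such vector exists, and the same argument handles $\kr\delta^T$. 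Thus $P^{bad}$ already meets the target.

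For $P_{red}^{good}(M')$, Lemma~\ref{lem:param} tells me that after conditioning on both $\delta_a$ being good, the $j$-th reduced cycle in $\mathcal{Z}_R(\delta_1,\delta_2)$ is uniformly distributed on the set of $M'\times M'$ matrices of rank $R$, for each $j=1,\ldots,\Gamma(R)$. Writing $c'=cr^{-1}/(1-r)$ for the per-row/column fraction appearing in the uniform low weight condition on the reduced matrix, I can choose $c$ (hence $c'$) small enough that Lemma~\ref{probbound} gives, for any prescribed $\epsilon>0$,
\[
\Pr[\psi_j \text{ is uniform low weight}] \le O(1)\cdot 2^{R^2 - 2(1-\epsilon)M'R}.
\]
A union bound over all reduced cycles then yields
\[
P_{red}^{good}(M') \le O(1)\sum_{R=1}^{M'} \Gamma(R)\cdot 2^{R^2 - 2(1-\epsilon)M'R}.
\]

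Next I would insert the two regime bounds of Theorem~\ref{thm:count}, splitting at $R=H$. For $1\le R\le H$, the summand is $2^{R\cdot M\,[\,-1+\rho+2r+2\epsilon+o(1)]}$, which is a decreasing geometric series for small $r,\rho,\epsilon$; it is dominated by its $R=1$ term, giving at most $O(1)\cdot 2^{-M(1-\rho-2r-2\epsilon)}$. For $H\le R\le M'$, I bound the troublesome $R^2/2$ term trivially by $RM/2$, producing a summand of the form $2^{R\cdot M\,[-1/2+\rho/2+2r+2\epsilon+o(1)]}$, again geometric, dominated by its $R=H$ term, giving at most $O(1)\cdot 2^{-HM(1/2-\rho/2-2r-2\epsilon)}\le O(1)\cdot 2^{-M(1/2-\rho/2-2r-2\epsilon)}$ since $H\ge 1$.

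The main obstacle is parameter bookkeeping rather than any single deep estimate: I must shrink $r$ below the constant provided by Lemma~\ref{lemma:bad} to absorb goodness failures, shrink $c$ (after $r,\epsilon$) so that Lemma~\ref{probbound} yields $c'$-uniform-low-weight bounds with the desired $\epsilon$, and finally shrink $r,\rho,c,\epsilon$ simultaneously so that $\rho/2+2r+2\epsilon<\eta/2$. Under this joint choice, both the $R\le H$ and $R\ge H$ contributions, as well as $P^{bad}$, are each bounded by $O(1)\cdot 2^{-M(1-\eta)/2}$, completing the proof.
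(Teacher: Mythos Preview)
Your proposal is correct and follows essentially the same route as the paper: split off the bad event via Lemma~\ref{lemma:bad}, use Lemma~\ref{lem:param} together with Lemma~\ref{probbound} for the per-cycle uniform-low-weight probability, apply a union bound over all $\Gamma(R)$ reduced cycles, and split the resulting sum at $R=H$ using the two regimes of Theorem~\ref{thm:count}, with the trivial bound $R^2/2\le RM/2$ in the large-$R$ regime. The only cosmetic differences are that the paper packages the cross terms into $\epsilon'=1-(1-\epsilon)(1-r)$ and $\sigma=2\epsilon'+\hfra/2$ rather than your $2r+2\epsilon+o(1)$ (your ``$o(1)$'' here is really the $-2\epsilon r$ cross term, not an asymptotic in $M$), and that goodness concerns only $\kr\delta$, so the $\kr\delta^T$ remark is unnecessary.
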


\begin{theorem}
For sufficiently small $c$ and $\hfra$, the probability that the distance of the product code is less than $cM^2$ is $o(1)$.
\begin{proof}
For any fixed choice of the reduced matrix the probability of having a cycle with a non-vanishing reduced matrix
obeying the uniform low weight condition is $P_{red}(M')$ which is at most $O(1)\cdot 2^{-M/2 + \eta M/2}$, see Lemma~\ref{fixedrowscolumns}.
The number of possible choices of rows and columns for the reduced matrix is ${M \choose M'}^2$. 
A union bound  implies that the probability of having a cycle with a non-vanishing reduced matrix 
 obeying the uniform low weight condition for some choice of rows and columns is bounded by
${M \choose M'}^2 P_{red}(M')$, and for sufficiently small $r,c,\hfra$ this probability is $o(1)$.
By Lemma~\ref{lemma:1}, if the input codes have distance at least $M-M'+1$, then there is no nontrivial cycle which gives a vanishing reduced matrix.
By  Lemma~\ref{lemma:bad}, the probability that the input codes have distance at least $M-M'+1$ is $1-o(1)$ for sufficiently small $r$.
By a union bound, the probability that there is a nontrivial cycle
with weight less than $cM^2$ is $o(1)$ for sufficiently small $c$ and $\hfra$.

To lower bound the distance of the product code, it is also necessary to lower bound the weight of a nontrivial cocycle.  The proof of this is identical to the proof to the weight bound for a nontrivial cycle since $\partial$ and $\partial^T$ are drawn from the same distribution.  So, the probability of a nontrivial cocycle with weight less than $cM^2$ is also $o(1)$  for sufficiently small $c$ and $\hfra$. The theorem follows  by a union bound.

Note that some of the intermediate lemmas, such as lemma \ref{lemma:1} and lemma \ref{fixedrowscolumns} depend upon the particular choice of $r=(M'-M)/M$; however, once we have chosen a sufficiently small $r$, the result holds for all sufficiently small $c$ and $\hfra$ and so $r$ does not enter into the statement of the theorem.
\end{proof}
\end{theorem}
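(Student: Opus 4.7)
The plan is to assemble the preceding lemmas into a single union-bound argument. By symmetry ($\partial$ and $\partial^T$ are identically distributed), it suffices to bound the probability of having a non-trivial cycle of weight less than $cM^2$; the cocycle case contributes an identical factor and is absorbed into a final union bound at the end. So I would focus on cycles from the start, and split the failure event into three pieces controlled by Lemmas~\ref{lemma:bad}, \ref{lemma:1}, and \ref{fixedrowscolumns} respectively.

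First I would record the elementary pigeonhole fact used informally in the introduction to Section~\ref{sec:distance}: if $\psi$ is an $M\times M$ matrix with weight less than $cM^2$, then for any $r\in(c,1)$ at least $(1-r)M$ rows and at least $(1-r)M$ columns have weight at most $cM/r$. Setting $M'=(1-r)M$ and $c'=cr^{-1}/(1-r)$, the corresponding $M'\times M'$ submatrix satisfies the uniform low weight condition with constant $c'$. Hence the existence of a non-trivial low-weight cycle implies that for \emph{some} choice of $M'$ rows and $M'$ columns the associated reduced matrix is both non-vanishing and uniformly low weight.

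Now I would bound this union event. Choose $r,c,\hfra$ small. With probability $1-o(1)$ both input codes have minimum distance exceeding $rM = M - M'+1$ by Lemma~\ref{lemma:bad}. Conditioned on this, Lemma~\ref{lemma:1} rules out non-trivial cycles whose reduced matrix (for a given choice of rows and columns) vanishes, so every low-weight non-trivial cycle yields a \emph{non-vanishing} reduced matrix obeying the uniform low weight condition for some choice of rows and columns. For a single such choice, Lemma~\ref{fixedrowscolumns} bounds the probability of the existence of any cycle (trivial or not) producing such a reduced matrix by $O(1)\cdot 2^{-M(1-\eta)/2}$. Taking the union over the ${M\choose M'}^2 \le O(1)\cdot 2^{2MS(r)+o(M)}$ choices of rows and columns (where $S$ is the binary entropy), the total contribution is $O(1)\cdot 2^{-M(1-\eta)/2 + 2MS(r)+o(M)}$, which is $o(1)$ provided $r$ was chosen small enough that $2S(r) < (1-\eta)/2$. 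Combining this with the $o(1)$ failure probability of the input-distance event and the symmetric bound for cocycles via a last union bound gives the claimed $o(1)$ probability that the product code has distance less than $cM^2$.

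The main conceptual obstacle is precisely the one flagged in the section preamble: homological product codes are degenerate, so one cannot simply bound the expected number of weight-$<\!cM^2$ cocycles in $\kr\partial$, because a single low-weight non-trivial cycle spawns many more via addition of low-weight boundaries. The uniform low weight reduction is what sidesteps this — it selects a canonical submatrix whose statistics one can control, and the heavy lifting has been done by Theorem~\ref{thm:count} (counting reduced cycles of a given rank), Lemma~\ref{lem:param} (showing the $j$-th reduced cycle is uniformly distributed on rank-$R$ matrices), and Lemma~\ref{probbound} (bounding the uniform-low-weight probability for a random rank-$R$ matrix). Once these are in hand, the theorem is essentially a careful bookkeeping of exponents in the union bound, ensuring that the entropy ${M\choose M'}^2$ is beaten by the $2^{-M/2}$ savings from $P_{red}(M')$; this is why $r,\hfra,c$ must all be chosen small but can all be taken independent of $M$.
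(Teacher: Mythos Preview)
Your proposal is correct and follows essentially the same approach as the paper's proof: the same pigeonhole reduction to an $M'\times M'$ submatrix, the same trichotomy (bad input distance handled by Lemma~\ref{lemma:bad}, vanishing reduced matrix ruled out by Lemma~\ref{lemma:1}, non-vanishing uniformly-low-weight reduced cycles bounded by Lemma~\ref{fixedrowscolumns}), and the same union bound over ${M\choose M'}^2$ row/column choices, with the cocycle case handled by the distributional symmetry $\partial\leftrightarrow\partial^T$. Your version is in fact slightly more explicit than the paper's in spelling out the entropy condition $2S(r) < (1-\eta)/2$ needed for the union bound to win.
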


\section{Numerical Results on Small Codes}
\label{sec:small}

In this section we apply the homological product operation to combine two small codes correcting a single error
and produce larger codes correcting multiple errors. Since the same task can be accomplished by code concatenation,
a natural question is how the two methods compare with each other. We will see that  sometimes the homological product
and  concatenation produce codes with the same parameters $[[n,k,d]]$, but the homological product
leads to stabilizers with smaller weight. 

\subsection{Homological Product of Two $[[7,1,3]]$ codes}
\label{sub:7}
To simplify notations, we shall restrict consideration to CSS codes satisfying $C^Z=C^X$
known as self-orthogonal codes. A generalization to arbitrary CSS codes is straightforward.

Let $C=C^Z=C^X\subseteq \FF_2^n$ be the parity check space describing some self-orthogonal
CSS code $\calQ=[[n,k,d]]$. Note that $C\subseteq C^\perp$ by definition of a CSS code.
 We shall assume that $C$ is equipped with some fixed
basis with basis vectors $a^1,\ldots,a^m\in C$.
The  code $\calQ$ can be described by a complex
with a boundary operator $\delta\, : \, {\mathbb{F}}_2^n \to {\mathbb{F}}_2^n$
satisfying $\im{\delta}=C$ and $\im{\delta^T}=C$,
see Section~\ref{subs:CSS} for details.
 A general solution $\delta$ of these equations has a form
\begin{equation}
\label{Udelta}
\delta=\sum_{i,j=1}^m U_{i,j} \, a^i (a^j)^T,
\end{equation}
where $U$ is an arbitrary invertible $m\times m$ matrix.  This shows that a mapping from
a CSS code to a boundary operator is not unique. The freedom in choosing matrix $U$
in Eq.~(\ref{Udelta}) roughly correspond to a freedom in choosing a basis set of 
parity checks (stabilizers) for a given code.  We will see below that this freedom can be exploited to 
obtain better products codes.

Define a product complex with a boundary operator
$\partial=\delta_1\otimes I + I\otimes \delta_2$ acting on $ {\mathbb{F}}_2^n \otimes  {\mathbb{F}}_2^n$,
where
\begin{equation}
\label{UVdelta}
\delta_1=\sum_{i,j=1}^m U_{i,j} \, a^i (a^j)^T \quad \mbox{and} \quad \delta_2=\sum_{i,j=1}^m V_{i,j} \, a^i (a^j)^T.
\end{equation}
Here $U$ and $V$ are arbitrary invertible matrices. 
Applying Eqs.~(\ref{Kun3},\ref{Kun4}) we conclude that  the product complex describes a CSS code $[[n^2,k^2,d']]$,
where $d\le d'\le d^2$ may depend on the choice of $U,V$. Note that the product code
is self-orthogonal whenever $U$ and $V$ are symmetric matrices, $U^T=U$ and $V^T=V$.
Indeed, in this case $\delta_a^T=\delta_a$ and thus $\partial^T=\partial$.  The property of being self-orthogonal
may be useful in fault-tolerance applications, since it enables transversal application of the Hadamard gate. 

Let us now apply this construction to the Steane code, $\calQ=[[7,1,3]]$. 
We choose basis parity checks $a^1,a^2,a^3\in  {\mathbb{F}}_2^7$  as columns of the following matrix:
\begin{equation}
\label{eq:A}
A=
{ \scriptsize \begin{array}{|c|c|c|}
\hline
1 & 0 & 0 \\
\hline
0 & 1 & 0 \\
\hline
0 & 0 & 1\\
\hline
0 & 1 & 1\\
\hline
1 & 0 & 1 \\
\hline
1 & 1 & 0 \\
\hline
1 & 1 & 1\\
\hline
\end{array}}.
\end{equation}
It is well-known that all non-zero vectors in the codespace $C=\mathrm{span}(a^1,a^2,a^3)$ 
have weight $4$. Since any column and any row of $\delta_a$ belongs to $C$,
it follows that any row and any column of $\partial=\delta_1\otimes I + I\otimes \delta_2$ has weight at most $8$
regardless of the choice of $U,V$.

We computed the distance of the product code described by $\partial$ numerically by performing an exhaustive search 
over all non-trivial cycles and co-cycles. Note that there are $2\times 2^{(49-1)/2}=2^{25}$ cycles and cocyles 
to be examined. 
The distance was computed for all possible choices of $U$ and for $V=I$.
We observed that the product code always has parameters  $[[49,1,7]]$ or $[[49,1,9]]$. 
 The first case occurs  if and only if $U$ is a symmetric matrix, $U^T=U$.
While this might be merely a coincidence, we note that the product code has distance $9$ whenever it is not self-orthogonal.
In all cases stabilizers of the product code have weight at most $8$.
The product code $[[49,1,7]]$ is an example when the upper bound on the distance in Lemma~\ref{lemma:Kun1} is not tight. 
In the general case when both $U$ and $V$ are arbitrary invertible matrices,
one can show~\cite{InPreparation} that the product code has parameters $[[7,1,3]]$ if and only if
$\delta_1=\sigma \delta_2^T \sigma^{-1}$ for some $7\times 7$ permutation matrix $\sigma$.

For comparison, concatenating the Steane $[[7,1,3]]$ code with itself produces $[[49,1,9]]$ code with stabilizers of weight $12$.
Indeed, any non-trivial stabilizer of the Steane code has weight $4$.  Concatenation  replaces each single-qubit Pauli operator 
by a three-qubit logical Pauli operator which produces stabilizers with weight $12$.

\section{Open Problems}
\label{sec:openproblems}
In this section, we discuss certain open problems.  For some of these problems, we give partial results and sketches of a solution, while others are left completely open.

\subsection{Higher Powers and Weight Reduction}
The most obvious open quetion is whether similar distance bounds can be proven for higher powers.  In this case, we must overcome the same obstacles as in the case of the product of two codes, that there are low weight cycles by construction and that the first moment method does not work in its simplest form.  However, we lose many of the advantages of working with matrices and are instead forced to work with higher rank tensors.  In addition, the reduced matrix approach will need modification as the most natural ``reduced tensor" ideas do not work.  Thus, while we conjecture that we maintain linear distance for a product of $O(1)$ random codes, this may be rather difficult to prove.

If, however, we were able in this fashion to construct codes of linear distance with generators of weight $O(n^\alpha)$, then for sufficiently small $\alpha$ we would likely be able to use these codes to produce stabilizer codes on $n$ qubits with stabilizer weight $O(1)$ and with distance $\Omega(n^{1/2+\epsilon})$ for some $\epsilon>0$.  This would be accomplished using an idea of weight reduction that we now sketch.  This weight reduction idea is distinct from the idea in Ref.~\onlinecite{bacon}; in particular, it would still produce a stabilizer code rather than a subsystem code.  Before discussing the weight reduction idea, we give some topological motivation for the idea.  Some may prefer to skip the remainder of this paragraph as well as the next paragraph, and proceed to the paragraph after that which gives a way of reducing weight that acts directly on the code without introducing a cell complex or manifold.  The ideas following in this subsection were obtained in discussion with M. Freedman and we only present a very brief sketch here.
Given a CSS code, we can construct a multiple sector chain complex.  For reasons that will become clear, instead of labeling sectors $2,1,0$ we label them by $3,2,1$ so that the vector spaces are ${\cal C}_3,{\cal C}_2, {\cal C}_1$, with boundary operators $\partial_3,\partial_2$.  Unlike the single sector case, this multiple sector chain complex can be constructed in a canonical fashion from the code, up to an arbitrary permutation of the stabilizers: each column of the boundary operator $\partial_3$ corresponds to a given $Z$ stabilizer while each row of $\partial_2$ corresponds to a given $X$ stabilizer.
We ask the question $*$: given this chain complex, can we construct a manifold and a cellulation of that manifold that gives rise to that chain complex, or to a chain complex with the same homology and with the same distance and weight up to $\Theta(1)$ factors?
At this point, the reason for using sectors $3,2,1$ rather than $2,1,0$ becomes clear: if we had used sectors $2,1,0$, then since a $1$-cell can have at most two $0$-cells in its boundary, we would be restricted to the case that each qubit had at most two $X$ stabilizers supported on that qubit and so the answer to the question $*$ would have been negative for many codes for purely local reasons.  However, using sectors $3,2,1$ we can give $*$ a positive answer.  Define a cell complex with a single $0$-cell.  One then attaches $1$-cells to this $0$-cell in closed loops, with one $1$-cell per $X$ stabilizer.  Then, $2$-cells are attached to the $1$-cells, using $\partial_2$ to define the attachment.  Similarly, one attaches $3$-cells to the $2$-cells, using $\partial_3$ to define the attachment.  This gives a $3$-complex with boundary; one can then construct a $7$-manifold without boundary by embedding this complex in general position in high enough dimension to avoid intersection, thickening it, and then attaching another copy of the manifold and identifying the boundaries.

While this question $*$ might be interesting for topological reasons (for example, it enables us to turn interesting codes into interesting manifolds), it also suggests a useful weight reduction procedure for the code.  The cell complex we produce may be quite complicated locally, as each cell attaches to many other cells if the stabilizers are high weight.  However, we can refine the cellulation of the manifold until we have a cellulation with bounded local geometry, so that each cell attaches only to a bounded number of other cells.  This refined cellulation then defines a new code; it has the same number of encoded qubits as the original code because the homology has not changed, and by construction the weight of the generators is now $O(1)$.  However, we have increased the number of qubits by increasing the number of cells, and we have possibly changed the distance of the code, and so a detailed analysis is needed to see how these change.

Rather than giving this detailed analysis in the language of manifolds, we present a procedure to reduce weight that acts directly on codes.  The basic step in the reduction procedure is a ``splitting step".  This procedure is inspired by the idea of refining a cellulation; one can see that, for example, the splitting step we introduce for $Z$ stabilizers corresponds to the step of taking a $3$-cell and refining it into two $3$-cells by adding an additional $2$-cell.

There are two types of splitting steps, $Z$-type and $X$-type.
The $Z$-type splitting step acts as follows on a CSS code with $n$ qubits and $n_Z$ $Z$-type stabilizers and $n_X$ $X$-type stabilizers.  Let $S$ denote some chosen $Z$ stabilizer.  This $Z$ stabilizer can be written as the product of Pauli operators $S=\prod_{i \in T} S^z_i$, where $T$ is some subset of qubits with $|T|$ equal to the weight of the stabilizer.  Pick a set $T_1 \subset T$, with $|T_1| \approx |T|/2$.  Then, define a new code as follows.  The new code has $n+1$ qubits.  We label the qubits as $1,...,n$ and by $a$, where $a$ is a new qubit added to the code.  The new code has $n_Z+1$ $Z$-type stabilizers.  Of these, $n_Z-1$ of them are given by the $Z$-type stabilizers of the original code {\it other} than stabilizer $S$.  The other two stabilizers are
\be
S^z_a \prod_{i \in T_1} S^z_i \quad \; \quad 
S^z_a \prod_{i \in T_2} S^z_i.
\ee
There are $n_X$ $X$-type stabilizers.  For each $X$-type stabilizer $R$ of the old code, we define an $X$-type stabilizer $R'$ of the new code, with $R'=R$ if $R$ commutes with 
$ \prod_{i \in T_1} S^z_i$ and with $R'=S^x_a R'$ if $R$ anti-commutes with  $\prod_{i \in T_1} S^z_i$.  This completes the description of the $Z$-type splitting step.
The $X$-type splitting step is defined analogously, with $X$ and $Z$ interchanged.

We see that the splitting step roughly halves the weight of one stabilizer, while increasing the weight of some other stabilizers by $1$.  A useful reduction procedure would be to first reduce the weight of all $Z$-type stabilizers to $O(1)$; then reduce the weight of all $X$-type stabilizers to $O(1)$; then repeat until all weights are $O(1)$.  Alternately, one could simply applying the splitting step to randomly chosen stabilizers of either $Z$-type or $X$-type.
A brief heuristic analysis (not given here) suggests that for a code given by a homological product of random codes, if we start with a code with $n$ qubits and weight $O(n^\alpha)$, then this procedure will terminate, and for all $\epsilon>0$, there is an $\alpha$ such that it produces a code on $n'$ qubits with stabilizers of weight $O(1)$ and distance $\Theta((n')^{1-\epsilon})$.  We leave this also as an open problem.

\subsection{Applications to Quantum Memories}
Another question is applications of these codes.  While the asymptotic scaling is of interest, early hardware implementations of quantum memories using stabilizer codes will likely be restricted to small number of qubits.  Here, small codes such as the $[[49,1,9,8]]$ CSS code that we found might be of interest.  Note that any $[[n,1,9]]$ CSS code must obey $n \geq 35$, as shown in Ref.~\onlinecite{lowerbounds} and thus the $[[49,1,9]]$ code is close to optimal for a CSS code with the given distance, while keeping the stabilizer weight small. 
The analysis of the performance of these small codes under random noise is an open problem.

Another possible code to consider is the homological product of the toric code with a code such as the $[[7,1,3]]$ CSS code or other code with $n=O(1)$.  In this case, it will be necessary to consider the multiple sector version of the product as the toric code arises from such a multiple sector chain complex (one can write the toric code in terms of a single sector, but the boundary operator is no longer sparse in this case).  One might hope that such a product would improve the error correction properties of the toric code while maintaining approximately local interactions in two-dimensions, while preserving other desirable features of the toric code, such as braid and fusion rules.

\subsection{Encoding and Decoding Homological Product Codes}
One last set of problems concerns encoding and decoding circuits for homological product codes.
Let us start with the encoding. Consider first a single  CSS code on $M=2L+H$ qubits with $H$ logical qubits,
$L$ stabilizers of $X$ type and $L$ stabilizers of $Z$ type. It is well-known that the 
encoding  for such code can be performed by starting with $H$ qubits containing the state to be encoded, adjoining $L$ additional pairs of qubits in 
the $|0\rangle\otimes |+\rangle$ state, and then applying some unitary $M$-qubit operator $\hat{U}$ composed of CNOT gates.
Consider now a pair of such codes described by complexes $(\calC,\delta_a)$, $a=1,2$.
Let $\hat{U}_a$ be the corresponding encoding circuits composed of CNOTs.
We claim that the product code described by the complex $(C\otimes C,\partial)$ can be encoded  by applying the following steps.
\begin{enumerate}
\item Arrange $M^2$ code qubits on a two-dimensional $M\times M$ grid. 
\item Initialize some pairs of qubits in $|0\rangle \otimes |+\rangle$ state. 
\item Initialize some pairs of qubits in the EPR state $(|00\rangle + |11\rangle)/\sqrt{2}$.
\item Apply $\hat{U}_1$ to each row of the grid. 
\item Apply $\hat{U}_2$ to each column  of the grid. 
\end{enumerate}
Here the order of the last two steps does not matter as they commute with each other.
The number of qubits initialized at the steps~(2,3) is exactly $M^2-H^2$, such that there
remains $H^2$ free qubits that contain the state to be encoded. 

Indeed, Lemma~\ref{lemma:canonical} implies that $\delta_a=U_a \delta_0 U_a^{-1}$, where $\delta_0$ is the canonical
boundary operator defined in Eq.~(\ref{canonical}) and $U_a$ are some invertible $M\times M$ matrices. 
One can easily check that the canonical complex $(\calC,\delta_0)$ describes
a CSS code in which all stabilizers have weight $1$. This canonical CSS code can be encoding simply by starting with $H$ logical qubits
and adjoining $L$ pairs of qubits in $|0\rangle \otimes |+\rangle$ state.
The  code $(\calC,\delta_a)$ has stabilizer spaces
$C^Z_a=\im{\delta_a}=U_a \cdot \im{\delta_0}$ and $C^X_a=\im{\delta_a^T}=(U_a^{-1})^T \im{\delta_0^T}$.
Using Gaussian elimination, one can show that any $M\times M$ invertible matrix $U_a$ can be written as a product of $O(M^2)$ elementary matrices
with $1$s on the diagonal and a single non-zero off-diagonal entry.
Replacing each elementary matrix in the decomposition of $U_a$ by a suitable CNOT gate one obtains
a CNOT circuit $\hat{U}_a$ transforming codewords of the canonical code to codewords of the input code $(\calC,\delta_a)$.
Consider now the product code with the complex $(C\otimes C,\partial)$, $\partial=\delta_1\otimes I + I\otimes \delta_2$.
Let $U=U_1\otimes U_2$. Then $\partial =U(\delta_0\otimes I + I\otimes \delta_0)U^{-1}$. 
Define a ``canonical product code" corresponding to the complex $(\calC \otimes \calC ,\delta_0 \otimes I + I \otimes \delta_0)$.
Note that this code has single-qubit stabilizers $X$ and $Z$ on some qubits and two-qubit stabilizers $XX,ZZ$ on some pairs of qubits.
Hence the canonical product code can be encoded by starting with $H^2$ logical qubits and adjoining 
remaining $M^2-H^2$ qubits initialized in $|0\rangle$, $|+\rangle$, or $(|00\rangle+|11\rangle)/\sqrt{2}$ state. 
The encoding for the code $(C\otimes C,\partial)$ is the same as the encoding for the canonical code followed
by a CNOT circuit $\hat{U}$. Furthermore, since $U=(U_1\otimes I)(I\otimes U_2)$ the CNOT circuit
corresponding to $U$ is equivalent to applying the circuit $\hat{U}_2$ in every row of the grid
and then applying the circuit $\hat{U}_1$ in every column of the grid. This leads to the steps~(1-5) defined above. 
Note that in the worst case $\hat{U}_a$ consists of $O(M^2)$ CNOT gates. Hence the product code encoding
requires a circuit of size  $O(n^{3/2})$ and depth $O(n)$, where $n=M^2$ is the code length.
One can similarly construct encoding circuits for the $m$-fold homological product. In this case the encoding circuit has size $O(M^{(m+1)/m})$
and depth $O(n^{2/m})$. Here we assumed that $m=O(1)$. 
One interesting open question is whether the encoding for the $m$-fold product code can be implemented
in a fault-tolerant fashion, such that the overall encoding circuit is represented as
a composition of small-depth circuits and error correction operations. 
Another interesting open question arises if $(\calC,\delta_a)$   has a short preparation circuit for code states taking 
$o(M^2)$ operations that does {\it not} give us a short circuit to implement $\hat{U}_a$;
 in this case, it is not clear if there must be an similarly fast preparation circuit for the product code.

Let us now discuss the decoding.  Given a noise model, such as a random Pauli channel, and a set of syndromes measured on a corrupted codeword,
how efficiently can we determine the optimal recovery operation composed of Pauli $X$ and $Z$ 
that minimizes the probability of a logical error?  For a random code on $M$ qubits, we expect this to take time exponential in $M$ as finding optimal decodings of stabilizer codes is \#P-complete\cite{IyerPoulin}.
However, for a product of two such codes, with $n=M^2$ qubits, a decoding time exponential in $M$ would be exponential in $O(\sqrt{n})$; while this would not be polynomial, it might be practical for small enough $n$, and so it would be very desirable if a decoding algorithm with that complexity could be found for product codes.  A natural candidate is a message passing algorithm as in Ref.~\onlinecite{PoulinChung}.  Such messsage passing algorithms encounter problems for quantum LDPC codes with stabilizers of weight $O(1)$ due to degeneracy 
of the code because some low-weight errors are not uniquely determined by their syndromes.  However, 
since the homological product of two random codes has stabilizers of weight $\Theta(\sqrt{n})$, 
one might hope that message passing will converge quickly to the optimal recovery operator.  Such a message passing algorithm would take time $O(n) 2^{\sqrt{n}}$ per round, as the number of possible combinations of messages would be $2^{\sqrt{n}}$.
We leave the analysis of such an algorithm as an open problem.

\section*{Appendix A: Counting Matrix Extensions}

The purpose of this section is to prove Eqs.~(\ref{E},\ref{Esimple}).
We first prove Eq.~(\ref{Esimple}).
Let $Y$ be a rank-$R$ matrix of size $A\times B$. 
Any such matrix can be represeted as $Y=F G$ for some full-rank
matrix $F$ of size $A\times R$ and full-rank matrix $G$ of size $R\times B$.
Moreover, this representation is unique up to a transformation
$F\to FM$ and $G\to M^{-1}G$, where $M$ is an arbitrary invertible $R\times R$ matrix.
It is well-known that the number of full-rank matrices of size $a\times b$ is
$O(1)\cdot 2^{ab}$. Hence the total number of rank-$R$ matrices of size $A\times B$ is 
\[
E^{A,B,R}=O(1) \cdot 2^{(A+B)R -R^2}.
\]
This proves Eq.~(\ref{Esimple}). Let us now prove Eq.~(\ref{E}).
Recall that $E_{a,r}^{A,R}$ is the number of ways to extend a given rank-$r$ matrix $X$
of size $a\times a$ to an arbitrary rank-$R$ matrix $Y$ of size $A\times A$. 
We shall  extend $X$ to $Y$ in two steps as shown below. 
\begin{equation}
\label{XtoZtoY}
X\to Z=\left[ \begin{array}{c}
X \\ U \\ 
\end{array}
\right]
\to Y=\left[ \begin{array}{cc}
Z & V \\
\end{array}
\right].
\end{equation}
Denoting $z=\rnk{(Z)}$ we arrive at 
\begin{equation}
\label{Eproof1}
E_{a,r}^{A,R} = \sum_{z=r}^{\min{\{R,a\}}} E_{a,a,r}^{A,a,z} \cdot E_{A,a,z}^{A,A,R}.
\end{equation}
Here $E_{a,b,r}^{A,B,R}$ denotes the number of ways to extend a given rank-$r$ matrix
of size $a\times b$ to an arbitrary rank-$R$ matrix of size $A\times B$.
Let $\calM(a,b)$ be the set of all binary $a\times b$ matrices. 
Since the number of extension depends only on the rank of the original matrix,
we can compute 
 $E_{a,a,r}^{A,a,z}$  by choosing $X$ as any fixed matrix of rank $r$.
Choose  $X$ be the diagonal matrix such that $X_{i,i}=1$ for $1\le i\le r$ and $X_{i,j}=0$ otherwise.
 Then $Z$ can be written as a block matrix 
\[
Z=\left[ \begin{array}{cc} I & 0 \\ 0 & 0 \\ V & W \\ \end{array} \right], \quad V \in \calM(A-a,r), \quad W\in\calM(A-a,a-r).
\]
Here $I$ is the identity matrix in $\calM(r,r)$. The first $r$ columns of $Z$ are independent from each other and from other columns of $Z$
regardless of the choice of $V,W$. Thus $z=r+w$, where $w=\rnk{(W)}$. 
Since there are $2^{(A-a)r}$ ways to choose $V$ and $E^{A-a,a-r,z-r}$  ways to choose $W$, we get
\begin{equation}
\label{ext1}
E_{a,a,r}^{A,a,z}=2^{(A-a)r} \cdot E^{A-a,a-r,z-r} = O(1)\cdot 
 2^{(A+r)z  -a r  - z^2 }. 
\end{equation}
Repeating exactly the same arguments yields
\begin{equation}
\label{ext2}
E_{A,a,z}^{A,A,R}= 2^{(A-a)z} \cdot E^{A-z,A-a,R-z} =O(1)\cdot 2^{(2A-a)R-Az - R^2 +Rz}.
\end{equation}
Substituting Eqs.~(\ref{ext1},\ref{ext2}) into Eq.~(\ref{Eproof1}) results in
\begin{equation}
\label{ext3}
E_{a,r}^{A,R}= O(1)\cdot 2^{(2A-a)R - ar -R^2} \sum_{z=r}^{\min{\{a,R\}}}
2^{-z^2 + (r+R)z}.
\end{equation}
The function $2^{-z^2 + (r+R)z}$ has a maximum at $z=z_0=(r+R)/2$ and decays exponentially away
from the maximum. We can bound the sum over $z$ from above by extending the summation
range to all integer $z\ge 0$ and approximating the sum, up to a factor $O(1)$, by the largest term
 $2^{-z_0^2 + (r+R)z_0}=2^{(r+R)^2/4}$. This gives Eq.~(\ref{E}).

\section*{Appendix B: $GF(4)$-linear codes}
\label{sec:appB}

\subsection*{Chain Complexes from $GF(4)$-linear Codes}
In this section we  propose one possible way to extend the mapping between quantum codes and chain complexes to $GF(4)$-linear codes~\cite{Calderbank98}.
We begin by  recalling the construction of $GF(4)$-linear codes introduced in Ref.~\onlinecite{Calderbank98}.
Let $\omega$ be the multiplicative generator of $\FF_4\equiv GF(4)$ such that $\FF_4=\{0,1=\omega^3,\omega,\omega^2\}$. 
The addition in $\FF_4$ is defined by identities $1+\omega+\omega^2=0$ and $x+x=0$ for any $x\in \FF_4$.
Below we consider vectors and matrices with entries from $\FF_4$. 
A subset $C\subseteq \FF_4^n$ is called a {\em linear subspace} iff $C$ is closed under addition of vectors
and under a scalar multiplication by $\omega$. 
To describe quantum $\FF_4$-linear codes, parameterize single-qubit Pauli operators $X,Y,Z$ and the identity operator $I$ by 
elements of $\FF_4$ as
\begin{equation}
\label{Paulis}
P(0) =  I,\quad
P(\omega) = X,\quad 
P(\omega^2)  =  Z,\quad
P(1)= Y.
\end{equation}
Note that addition in $\FF_4$ corresponds to multiplication of Pauli operators, that is, $P(a)P(b)=e^{i\theta} P(a+b)$
for some phase factor $e^{i\theta}\in \{1,\pm i\}$ that depends on $a$ and $b$. Furthermore, 
\begin{equation}
\label{comm}
P(a)P(b)=(-1)^{\bar{a}b+a\bar{b}} P(b) P(a), \quad \mbox{where}\quad \bar{a}\equiv a^2.
\end{equation}
Note that $\bar{c}+c$ takes values $0$ or $1$ for any $c\in \FF_4$, so that Eq.~(\ref{comm}) is well-defined.
Given a vector $f=(f_1,\ldots,f_n)\in \FF_4^n$, let $P(f)$ be the $n$-qubit Pauli operator
that acts on the $j$-th qubit as $P(f_j)$. Then Eq.~(\ref{comm}) implies
\begin{equation}
\label{comm1}
P(f) P(g)=(-1)^{(f,g) + (g,f)} P(g)P(f),
\end{equation}
where $(f,g)\in \FF_4$ is the inner product between vectors $f,g\in \FF_4^n$ defined as
\begin{equation}
\label{inner}
(f,g)=\sum_{j=1}^n \bar{f}_j g_j.
\end{equation}
Given a linear subspace $C\subseteq \FF_4^n$, the following
three conditions are known to be equivalent~\cite{Calderbank98}: 
\begin{enumerate}
\item $(f,g)+(g,f)=0$ for any $f,g\in C$.
\item $P(f)P(g)=P(g)P(f)$ for any $f,g\in C$.
\item $(f,g)=0$ for any $f,g\in C$.
\end{enumerate}
If one of the above
conditions is satisfied, we will say 
that $C$ is {\em self-orthogonal}.
Given any self-orthogonal linear subspace $C\subseteq \FF_4^n$ one can define   a quantum stabilizer
code with a stabilizer group $G=\{P(f)\, : \, f\in C\}$. Note that $G$ has size $4^{\dim{(\calC)}}$. 
The subspace $C$ defines parity checks of the quantum code and plays the same role
as the pair of parity check spaces $C^Z,C^X$ in the case of CSS codes. 
To describe Pauli operators commuting with stabilizers
define an orthogonal subspace
\[
C^\perp=\{ f\in \FF_2^n \, :\, (f,g)=0 \quad \mbox{for all $g\in C$}\}.
\]
Then $P(f)$ commutes with all stabilizers iff $f\in C^\perp$. Logical Pauli operators 
have a form $P(f)$, where $f\in C^\perp\backslash C$.
Note that $C$ is self-orthogonal iff $C\subseteq C^\perp$. This condition plays the same role
as the orthogonality condition $C^Z\subseteq (C^X)^\perp$ in the case of CSS codes. 
As was shown in Ref.~\onlinecite{Calderbank98},  the quantum code corresponding to $C$
 has parameters $[[n,k,d]]$, where
\begin{equation}
\label{nkd}
k=n-2\dim{(\calC)} \quad \mbox{and} \quad d=\min_{f\in C^\perp\backslash C} \mathrm{wt}(f).
\end{equation}
Here $\mathrm{wt}(f)$ is the {\em weight} of $f$ defined as the number of non-zero components of $f$.
 
Given a linear operator $\delta$ mapping $\FF_4^n$ to itself, define an adjoint operator
$\delta^*$ such that $(f,\delta g)=(\delta^*f, g)$ for all $f,g\in \FF_4^n$. 
One can easily check that $\delta^*=\bar{\delta}^T$, that is,
$\delta^*_{i,j}=\bar{\delta}_{j,i}$. Note that $(\delta^*)^*=\delta$ since $x^4=x$ for any $x\in \FF_4$.
Here and below by a linear operator we always mean $\FF_4$-linear
operator. By analogy with the single sector theory for CSS codes, see Section~\ref{subs:CSS}, let us introduce
a notion of a {\em boundary operator} such that $\im{\delta}$ is a self-orthogonal linear subspace
for any boundary operator $\delta$. 
\begin{lemma}
\label{lemma:boF4}
Suppose $\delta$ is a linear operator. Then  $\im{\delta}$ is self-orthogonal iff $\delta^*\delta=0$. 
\end{lemma}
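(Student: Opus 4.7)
The proof should be short and essentially a one-line manipulation using the defining property of the adjoint, together with non-degeneracy of the form $(\cdot,\cdot)$. Here is how I would structure it.

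First, recall that since $\im\delta$ is by construction an $\FF_4$-linear subspace (images of linear maps are closed under scalar multiplication), the three equivalent conditions for self-orthogonality listed in the excerpt apply to it. I will work with the strongest-looking one: $\im\delta$ is self-orthogonal iff $(f,g)=0$ for every $f,g\in\im\delta$, which in turn is the same as the condition that $(\delta u,\delta v)=0$ for every $u,v\in\FF_4^n$.

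Next, I would invoke the definition of the adjoint, $(\delta u,\delta v)=(\delta^*\delta u,v)$, to rewrite this as $(\delta^*\delta u,v)=0$ for all $u,v\in\FF_4^n$. The final step is to peel off the universal quantifier over $v$ using non-degeneracy of the form. Concretely, taking $v=e_j$ (the $j$-th standard basis vector) gives $\overline{(\delta^*\delta u)_j}=0$, hence $(\delta^*\delta u)_j=0$, for every $j$ and every $u$; that is exactly $\delta^*\delta=0$. Conversely, if $\delta^*\delta=0$ then $(\delta u,\delta v)=(\delta^*\delta u,v)=0$ for all $u,v$ and $\im\delta$ is self-orthogonal.

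There is essentially no obstacle here: the only thing to be slightly careful about is to check that the form $(f,g)=\sum_j \bar f_j g_j$ is non-degenerate over $\FF_4$, which follows because $(f,e_j)=\bar f_j$ determines all components of $f$. No appeal to the subtler equivalence between conditions 1 and 3 (the step that actually uses $\FF_4$-linearity via multiplication by $\omega$) is needed, since condition 3 drops out directly from the adjoint identity.
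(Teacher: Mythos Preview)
Your proof is correct and follows essentially the same approach as the paper: both arguments reduce self-orthogonality of $\im\delta$ to the identity $(\delta u,\delta v)=(\delta^*\delta u,v)$ (equivalently $(g,\delta^*\delta f)=(\delta g,\delta f)$) and then invoke non-degeneracy of the Hermitian form. You simply make the non-degeneracy step explicit by plugging in $v=e_j$, whereas the paper states it in one line.
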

\begin{proof}
Suppose $\im{\delta}$ is self-orthogonal.  Then 
for any vectors $f,g\in \FF_4^n$ one has $(g,\delta^*\delta f)=(\delta g,\delta f)=0$. This is only possible if $\delta^*\delta=0$.
Conversely, suppose $\delta^*\delta=0$. Choose any vectors $f,g\in \im{\delta}$. Then $f=\delta(h)$ and $g=\delta(k)$
for some $h,k\in \FF_4^n$. Thus $(f,g)=(\delta h,\delta k)=(h,\delta^*\delta k)=0$, that is, $\im{\delta}$ is self-orthogonal. 
\end{proof}
The above lemma suggests that a boundary operator could be defined by a condition $\delta^*\delta=0$. 
This definition however is not quite satisfactory because it is not stable under the product of complexes. 
Indeed, suppose $\delta_1,\delta_2$ are linear operators satisfying $\delta_a^*\delta_a=0$. Define
$\partial=\delta_1\otimes I + I \otimes \delta_2$. Then $\partial^*\partial=\delta_1^*\otimes \delta_2+\delta_1\otimes \delta_2^*$
and thus generally $\partial^*\partial\ne 0$. Instead, we choose the following definition.
\begin{definition}
A linear operator $\delta$ mapping $\FF_4^n$ to itself is called a boundary operator
if it is self-adjoint, $\delta^*=\delta$, and satisfies $\delta^2=0$. 
\end{definition}
Lemma~\ref{lemma:boF4} implies that $\im{\delta}$ is a self-orthogonal linear subspace
for any boundary operator $\delta$ and $(\im{\delta})^\perp=\kr{\delta}$.
Thus any boundary operator $\delta$ on $\FF_4^n$  defines a quantum code $[[n,k,d]]$ with parameters
\[
k=\dim{(\kr{\delta})}-\dim{(\im{\delta})}\equiv H(\delta) \quad \mbox{and} \quad 
d=\min_{f\in \kr{\delta}\backslash \im{\delta}} \mathrm{wt}(f).
\]
Conversely, given 
a self-orthogonal linear subspace $C\subseteq \FF_4^n$, choose any linear basis
$a^1,\ldots,a^m\in C$ and define a linear operator
\begin{equation}
\label{bo}
\delta=\sum_{i,j=1}^m U_{i,j} \, a^i (\bar{a}^j)^T
\end{equation}
for some invertible self-adjoint matrix $U$ with $\FF_4$ entries. 
Self-orthogonality of $C$ implies $(a^i,a^j)=0$ for all $i,j$, that is, $\delta^2=0$.
Furthermore, 
\[
\delta^*=\sum_{i,j=1}^m \bar{U}_{j,i} \, a^i (\bar{a}^j)^T=\delta
\]
since $\bar{U}_{j,i}=U_{i,j}$. Finally, $\im{\delta}=C$ since $U$ is invertible. 
This shows how to represent any $\FF_4$-linear quantum code by a boundary operator.

Consider some $\FF_4$-linear quantum code $[[n,k,d]]$ described by a self-orthogonal subspace $C\subseteq \FF_4^n$.
Let 
\be
\label{delta12}
\delta_1=\sum_{i,j=1}^m U_{i,j} \, a^i (\bar{a}^j)^T \quad \mbox{and} \quad \delta_2=\sum_{i,j=1}^m V_{i,j} \, a^i (\bar{a}^j)^T
\ee
be the boundary operators constructed above such that $\im{\delta_a}=C$. 
Define $\partial=\delta_1\otimes I + I\otimes \delta_2$. One can easily check that  $\partial^2=0$ and 
$\partial^*=\partial$, that is, $\partial$ is a boundary operator on $\FF_4^n\otimes \FF_4^n$.
Using exactly the same arguments as in the proof of Lemmas~\ref{lemma:Kun},\ref{lemma:Kun1}
one can show that the $\FF_4$-linear quantum code  with the parity check space $\im{\partial}$
has parameters $[[n^2,k^2,d']]$ for some $d\le d'\le d^2$. Furthermore, if $C$ has basis vectors with weight at most $w$
then $\im{\partial}$ has basis vectors with weight at most $2w$, that is, the product code has parity checks of weight
at most $2w$.

\subsection*{Homological Product of Two $[[5,1,3]]$ Codes}
Let us  apply the product construction defined above to the $5$-qubit
code $[[5,1,3]]$ which is the simplest quantum code correcting any single-qubit  error~\cite{Bennett96,Laflamme96}.
Recall that the $[[5,1,3]]$ code has two-dimensional parity check space $C\subseteq \FF_4^5$ with 
basis vectors 
\[
a^1=(0, \omega, \omega^2, \omega^2,\omega)^T \quad \mbox{and} \quad a^2=(\omega, 0,\omega, \omega^2,\omega^2)^T.
\]
Note that $a^2$ is a cyclic shift of $a^1$. Moreover any non-zero vector of $C$ can be obtained from $a^1$ by cyclic shifts and
a scalar multiplication by $\omega$. Hence any non-zero vector of $C$ has weight $4$. 
Let $\delta_1,\delta_2$ be the boundary operators defined in Eq.~(\ref{delta12}) and $\partial=\delta_1\otimes I + I\otimes \delta_2$.
Since any column and any row of $\delta_a$ has weight $0$ or $4$ for any choice of $U$ and $V$, the product code described by $\partial$
has parity checks of weight at most $8$. One can easily check that there are
only $10$ invertible self-adjoint matrices of size $2\times 2$ with $\FF_4$ entries. We computed the distance
of the product code numerically 
for each choice of the pair $U,V$ by performing the exhaustive search over all non-trivial cycles. Note that there
are $2^{25-1}=2^{24}$ cycles to be considered. We observed that the product code has parameters $[[25,1,5]]$ 
regardless of the choice of $U,V$.
For comparison, concatenation of two $5$-qubit codes gives $[[25,1,9]]$ code
with parity checks of weight $12$. 

Another simple example of an $\FF_4$-linear code is the Steane code $[[7,1,3]]$. It has three-dimensional parity check space
$C\subseteq \FF_4^7$ with basis vectors $a^1,a^2,a^3$ defined in Eq.~(\ref{eq:A}).  Thus the product of $\FF_4$-linear
codes $[[5,1,3]]$ and $[[7,1,3]]$ is well-defined and has parameters $[[35,1,d]]$ for some $3\le d\le 9$. 
By computing the distance of the product code numerically we observed that $d\le 6$
for all
possible choices of boundary operators $\delta_1,\delta_2$ describing the codes $[[5,1,3]]$ and $[[7,1,3]]$
(the exhaustive search over non-trivial cycles was terminated as soon as the first cycle with weight at most $6$ has been
found).

These observations suggest that the single sector theory  does not perform very well when 
applied to $\FF_4$-linear codes. We leave explanation of this phenomenon for a future work. 

\vspace{10mm}

{\bf Acknowledgments --}
We would like to thank Michael Freedman and Alexey Kitaev  for  helpful discussions.
We thank Maris Ozols and Alex Vargo for valuable suggestions on numerical computation of the product code distance. 
We thank Aram Harrow for sharing with us a preliminary version of the manuscript Ref.~\onlinecite{bacon}. 
SB is supported in part by the DARPA QuEST program under contract number HR0011-09-C-0047, and
IARPA QCS program under contract number D11PC20167.


\begin{thebibliography}{10}

\bibitem{Gallager62}
R.~Gallager.
\newblock Low-density parity-check codes.
\newblock {\em IRE Trans. on Inf. Theory}, 8(1):21--28, 1962.

\bibitem{MacKay99}
D.~MacKay.
\newblock Good error-correcting codes based on very sparse matrices.
\newblock {\em IEEE Trans. Inf. Theory}, 45(2):399--431, 1999.

\bibitem{CSS}
R.~Calderbank and P.~W. Shor.
\newblock Good quantum error-correcting codes exist.
\newblock {\em Phys. Rev. A}, 54(2):1098, 1996.

\bibitem{Gottesman13overhead}
D.~Gottesman.
\newblock What is the overhead required for fault-tolerant quantum computation?
\newblock {\em arXiv preprint arXiv:1310.2984}, 2013.

\bibitem{Raussendorf07}
R.~Raussendorf and J.~Harrington.
\newblock Fault-tolerant quantum computation with high threshold in two
  dimensions.
\newblock {\em Phys. Rev. Lett.}, 98:190504, 2007.

\bibitem{Fowler09}
A.~Fowler, A.~Stephens, and P.~Groszkowski.
\newblock High-threshold universal quantum computation on the surface code.
\newblock {\em Phys. Rev. A}, 80(5):052312, 2009.

\bibitem{Kitaev2003}
A.~Kitaev.
\newblock Fault-tolerant quantum computation by anyons.
\newblock {\em Ann. of Phys.}, 303(1):2--30, 2003.

\bibitem{Aharonov11}
D.~Aharonov and L.~Eldar.
\newblock On the complexity of commuting local hamiltonians, and tight
  conditions for topological order.
\newblock {\em FOCS}, pages 334--343, 2011.

\bibitem{Bravyi10tradeoffs}
S.~Bravyi, D.~Poulin, and B.~Terhal.
\newblock Tradeoffs for reliable quantum information storage in 2{D} systems.
\newblock {\em Phys. Rev. Lett.}, 104(5):050503, 2010.

\bibitem{Haah2012logical}
J.~Haah and J.~Preskill.
\newblock Logical-operator tradeoff for local quantum codes.
\newblock {\em Phys. Rev. A}, 86(3):032308, 2012.

\bibitem{Freedman02systolic}
M.H. Freedman, D.~Meyer, and F.~Luo.
\newblock Z2-systolic freedom and quantum codes.
\newblock {\em Mathematics of quantum computation, Chapman \& Hall/CRC}, pages
  287--320, 2002.

\bibitem{Bombin07}
H.~Bombin and M.A. Martin-Delgado.
\newblock Homological error correction: classical and quantum codes.
\newblock {\em J. Math. Phys.}, 48:052105, 2007.

\bibitem{Kim07}
I.~H. Kim.
\newblock Quantum codes on {H}urwitz surfaces.
\newblock Master's thesis, Massachusetts Institute of Technology, 2007.

\bibitem{Zemor09}
G.~Z{\'e}mor.
\newblock On {C}ayley graphs, surface codes, and the limits of homological
  coding for quantum error correction.
\newblock In {\em Coding and Cryptology}, pages 259--273. Springer, 2009.

\bibitem{Tillich2009}
J-P Tillich and G.~Z{\'e}mor.
\newblock Quantum {LDPC} codes with positive rate and minimum distance
  proportional to $n^{1/2}$.
\newblock {\em IEEE International Symposium on Inf. Theory}, pages 799--803,
  2009.

\bibitem{Kovalev12}
A.~Kovalev and L.~Pryadko.
\newblock Improved quantum hypergraph-product {LDPC} codes.
\newblock {\em IEEE International Symposium on Inf. Theory}, pages 348--352,
  2012.

\bibitem{Michnicki12}
K.~Michnicki.
\newblock {3-d} quantum stabilizer codes with a power law energy barrier.
\newblock {\em arXiv preprint arXiv:1208.3496}, 2012.

\bibitem{Delfosse13}
N.~Delfosse.
\newblock Tradeoffs for reliable quantum information storage in surface codes
  and color codes.
\newblock {\em arXiv preprint arXiv:1301.6588}, 2013.

\bibitem{FH13}
M.H. Freedman and M.B. Hastings.
\newblock Quantum systems on non-$k$-hyperfinite complexes: A generalization of
  classical statistical mechanics on expander graphs.
\newblock {\em arXiv preprint arXiv:1301.1363}, 2013.

\bibitem{Brown13}
W.~Brown and O.~Fawzi.
\newblock Short random circuits define good quantum error correcting codes.
\newblock {\em IEEE International Symposium on Inf. Theory}, pages 346--350,
  2013.

\bibitem{Freedman01projective}
M.~Freedman and D.~Meyer.
\newblock Projective plane and planar quantum codes.
\newblock {\em Foundations of Computational Mathematics}, 1(3):325--332, 2001.

\bibitem{postol2001proposed}
M.~Postol.
\newblock A proposed quantum low density parity check code.
\newblock {\em arXiv preprint quant-ph/0108131}, 2001.

\bibitem{Mackay2004sparse}
D.~MacKay, G.~Mitchison, and P.~McFadden.
\newblock Sparse-graph codes for quantum error correction.
\newblock {\em IEEE Trans. on Inf. Theory}, 50(10):2315--2330, 2004.

\bibitem{Camara2005constructions}
T.~Camara, H.~Ollivier, and J-P Tillich.
\newblock Constructions and performance of classes of quantum {LDPC} codes.
\newblock {\em arXiv preprint quant-ph/0502086}, 2005.

\bibitem{Aly2008}
S.~A. Aly.
\newblock A class of quantum {LDPC} codes constructed from finite geometries.
\newblock In {\em Global Telecommunications Conference, 2008. IEEE GLOBECOM
  2008. IEEE}, pages 1--5. IEEE, 2008.

\bibitem{Kovalev2012}
A.~A. Kovalev and L.~P. Pryadko.
\newblock Improved quantum hypergraph-product {LDPC} codes.

\bibitem{Haah11}
J.~Haah.
\newblock Local stabilizer codes in three dimensions without string logical
  operators.
\newblock {\em Phys. Rev. A}, 83(4):042330, 2011.

\bibitem{subsys1}
D.~Poulin.
\newblock Stabilizer formalism for operator quantum error correction.
\newblock {\em Phys. Rev. Lett.}, 95:230504, 2005.

\bibitem{subsys2}
D.~W. Kribs, R.~Laflamme, D.~Poulin, and M.~Lesosy.
\newblock Operator quantum error correction.
\newblock {\em Quant. Inf. Comp.}, 6:383, 2006.

\bibitem{Bacon06operator}
D.~Bacon.
\newblock Operator quantum error-correcting subsystems for self-correcting
  quantum memories.
\newblock {\em Phys. Rev. A}, 73(1):012340, 2006.

\bibitem{Bombin10subs}
H.~Bombin.
\newblock Topological subsystem codes.
\newblock {\em Phys. Rev. A}, 81(3):032301, 2010.

\bibitem{Bravyi11subs}
S.~Bravyi.
\newblock Subsystem codes with spatially local generators.
\newblock {\em Phys. Rev. A}, 83(1):012320, 2011.

\bibitem{Bravyi13subs}
S.~Bravyi, G.~Duclos-Cianci, D.~Poulin, and M.~Suchara.
\newblock Subsystem surface codes with three-qubit check operators.
\newblock {\em Quant. Inf. Comp.}, 13:963, 2013.

\bibitem{Brown12}
P.~Sarvepalli and K.~R. Brown.
\newblock Topological subsystem codes from graphs and hypergraphs.
\newblock {\em Phys. Rev. A}, 86(4):042336, 2012.

\bibitem{bacon}
D.~Bacon, S.~T. Flammia, A.~W. Harrow, and J.~Shi.
\newblock {\em unpublished preprint}.

\bibitem{BravyiTerhal10maj}
S.~Bravyi, B.~M. Terhal, and B.~Leemhuis.
\newblock Majorana fermion codes.
\newblock {\em New J. of Phys.}, 12(8):083039, 2010.

\bibitem{Hatcher2002}
A.~Hatcher.
\newblock {\em Algebraic topology}.
\newblock Cambridge UP, Cambridge, 2002.

\bibitem{InPreparation}
S.~Bravyi, M.~Ozols, and A.~Vargo.
\newblock {\em Manuscript in preparation}.

\bibitem{BT09}
S.~Bravyi and B.~Terhal.
\newblock A no-go theorem for a two-dimensional self-correcting quantum memory
  based on stabilizer codes.
\newblock {\em New J. of Phys.}, 11(4):043029, 2009.

\bibitem{lowerbounds}
A.~M. Steane.
\newblock Simple quantum error-correcting codes.
\newblock {\em Phys. Rev. A}, 54:4741, 1996.

\bibitem{IyerPoulin}
P.~Iyer and D.~Poulin.
\newblock Hardness of decoding quantum stabilizer codes.
\newblock {\em arXiv preprint arXiv:1310.3235}, 2013.

\bibitem{PoulinChung}
D.~Poulin and Y.~Chung.
\newblock On the iterative decoding of sparse quantum codes.
\newblock {\em Quant. Inf. and Comput.}, 8:987, 2010.

\bibitem{Calderbank98}
R.~Calderbank, E.~Rains, P.~Shor, and N.~Sloane.
\newblock Quantum error correction via codes over {GF}(4).
\newblock {\em IEEE Trans. Inf. Theory}, 44(4):1369--1387, 1998.

\bibitem{Bennett96}
C.~H. Bennett, D.~DiVincenzo, J.~Smolin, and W.~Wootters.
\newblock Mixed-state entanglement and quantum error correction.
\newblock {\em Phys. Rev. A}, 54(5):3824, 1996.

\bibitem{Laflamme96}
R.~Laflamme, C.~Miquel, J.P. Paz, and W.~Zurek.
\newblock Perfect quantum error correcting code.
\newblock {\em Phys. Rev. Lett.}, 77(1):198, 1996.

\end{thebibliography}

\end{document}